\newtheorem{theorem}{Theorem}
\newtheorem{cor}{Corollary}
\newtheorem{lem}{Lemma}
\newtheorem{defn}{Definition}
\newcommand{\Reals}{\mathds R}
\newcommand{\Integers}{\mathds Z}
\newcommand{\Naturals}{\mathds N}
\newcommand{\bx}{\mathbf{x}}
\newcommand{\by}{\mathbf{y}}
\newcommand{\bz}{\mathbf{z}}
\newcommand{\bZ}{\mathbf{Z}}
\newcommand{\Vor}{\mathcal{V}}
\newcommand{\SPB}{\textrm{SPB}}
\newcommand{\ball}[1]{\textrm{Ball}\!\left(#1\right)}
\newcommand{\ballx}[2]{\textrm{Ball}\!\left(#1,#2\right)}
\newcommand{\cube}[1]{\mathrm{Cb}\!\left(#1\right)}
\newcommand{\NLD}{{\boldsymbol\delta}}
\newcommand{\reff}{r_{\mathrm{eff}}}
\newcommand{\eps}{\varepsilon}
\newcommand{\E}{\mathbf{E}}
\newcommand{\Err}{\mathcal{E}}
\newcommand{\EE}{\mathbb{E}}
\newcommand{\VAR}{V\!AR}
\renewcommand{\S}{\mathcal{S}}
\newcommand{\ra}{\rightarrow}
\def\Intro{1}
\def\Defs{1}
\def\NewBounds{1}
\def\Properties{1}
\def\NormalApprox{1}
\def\Comparison{1}
\def\VNR{1}
\def\Summary{1}
\def\app{1}
\begin{document}

\title{Finite Dimensional Infinite Constellations}
\author{
\IEEEauthorblockN{Amir Ingber, Ram Zamir and Meir Feder\\}
\IEEEauthorblockA{School of Electrical Engineering,\\
Tel Aviv University \\
Tel Aviv 69978, Israel\\
Email: \{ingber, zamir, meir\}@eng.tau.ac.il}
\thanks{The material in this paper will be presented in part at the IEEE International Symposium on Information Theory (ISIT) 2011}%
\thanks{A. Ingber is supported by the Adams Fellowship Program of the Israel Academy of Sciences and Humanities.}%
\thanks{This research was supported in part by the Israel Science Foundation, grant no. 634/09.}

}

\maketitle
\markboth{Submitted to IEEE Transactions on Information Theory}{Ingber et al.: Finite Dimensional Infinite Constellations}

\begin{abstract}
In the setting of a Gaussian channel without power constraints, proposed by Poltyrev, the codewords are points in an $n$-dimensional Euclidean space (an infinite constellation) and the tradeoff between their {\em density} and the error probability is considered.
The capacity in this setting is the highest achievable normalized log density (NLD) with vanishing error probability. This capacity as well as error exponent bounds for this setting are known.
In this work we consider the optimal performance achievable in the fixed blocklength (dimension) regime.
We provide two new achievability bounds, and extend the validity of the sphere bound to finite dimensional infinite constellations. We also provide asymptotic analysis of the bounds: When the NLD is fixed, we provide asymptotic expansions for the bounds that are significantly tighter than the previously known error exponent results.
When the error probability is fixed, we show that as $n$ grows, the gap to capacity is inversely proportional (up to the first order) to the square-root of $n$ where the proportion constant is given by the inverse Q-function of the allowed error probability, times the square root of $\frac{1}{2}$.  In an analogy to similar result in channel coding, the dispersion of infinite constellations is $\frac{1}{2}\mathsf{nat}^2$ per channel use.
All our achievability results use lattices and therefore hold for the maximal error probability as well. Connections to the error exponent of the power constrained Gaussian channel and to the volume-to-noise ratio as a figure of merit are discussed. In addition, we demonstrate the tightness of the results numerically and compare to state-of-the-art coding schemes.
\end{abstract}

\begin{IEEEkeywords}
Infinite constellations, Gaussian channel, Poltyrev setting, Poltyrev exponent, finite blocklength, dispersion, precise asymptotics
\end{IEEEkeywords}

\section{Introduction}
\if \Intro 1

Coding schemes over the Gaussian channel are traditionally limited by the average/peak power of the transmitted signal \cite{ForneyUngerboeck98}. Without the power restriction (or a similar restriction) the channel capacity becomes infinite, since one can space the codewords arbitrarily far apart from each other and achieve a vanishing error probability. However, many coded modulation schemes take an infinite constellation (IC) and restrict the usage to points of the IC that lie within some $n$-dimensional form in Euclidean space (a `shaping' region). Probably the most important example for an IC is a lattice (see Fig.~\ref{fig:LatticeAndICs}), and examples for the shaping regions include a hypersphere in $n$ dimensions, and a Voronoi region of another lattice \cite{ErezZamirAWGN}.

\begin{figure}[t]
\begin{center}
\subfigure[A lattice]{\label{fig:latticePlot}%
\includegraphics[width=.49\textwidth]{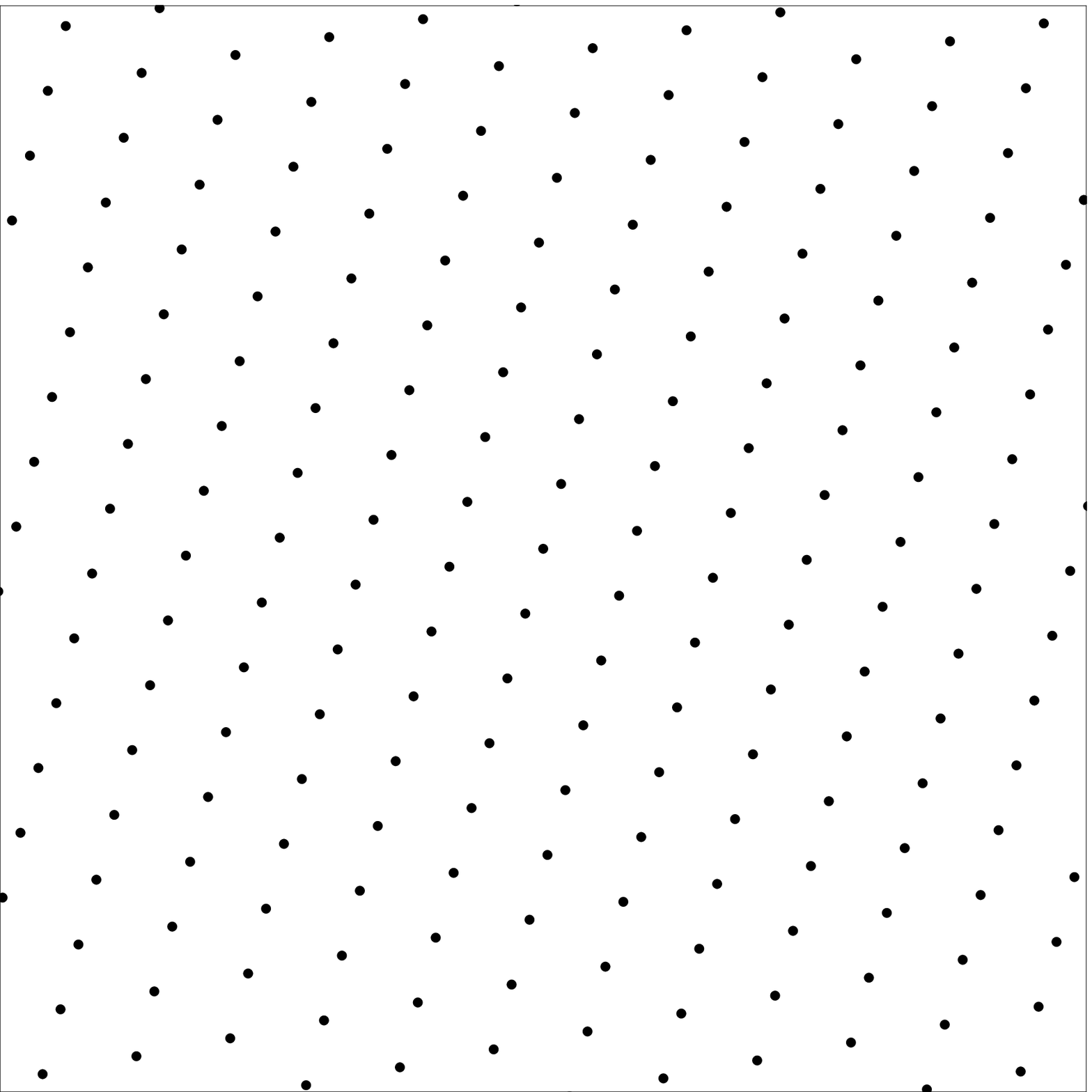}}
\subfigure[A non-lattice infinite constellation]{\label{fig:randomICPlot}%
\includegraphics[width=.49\textwidth]{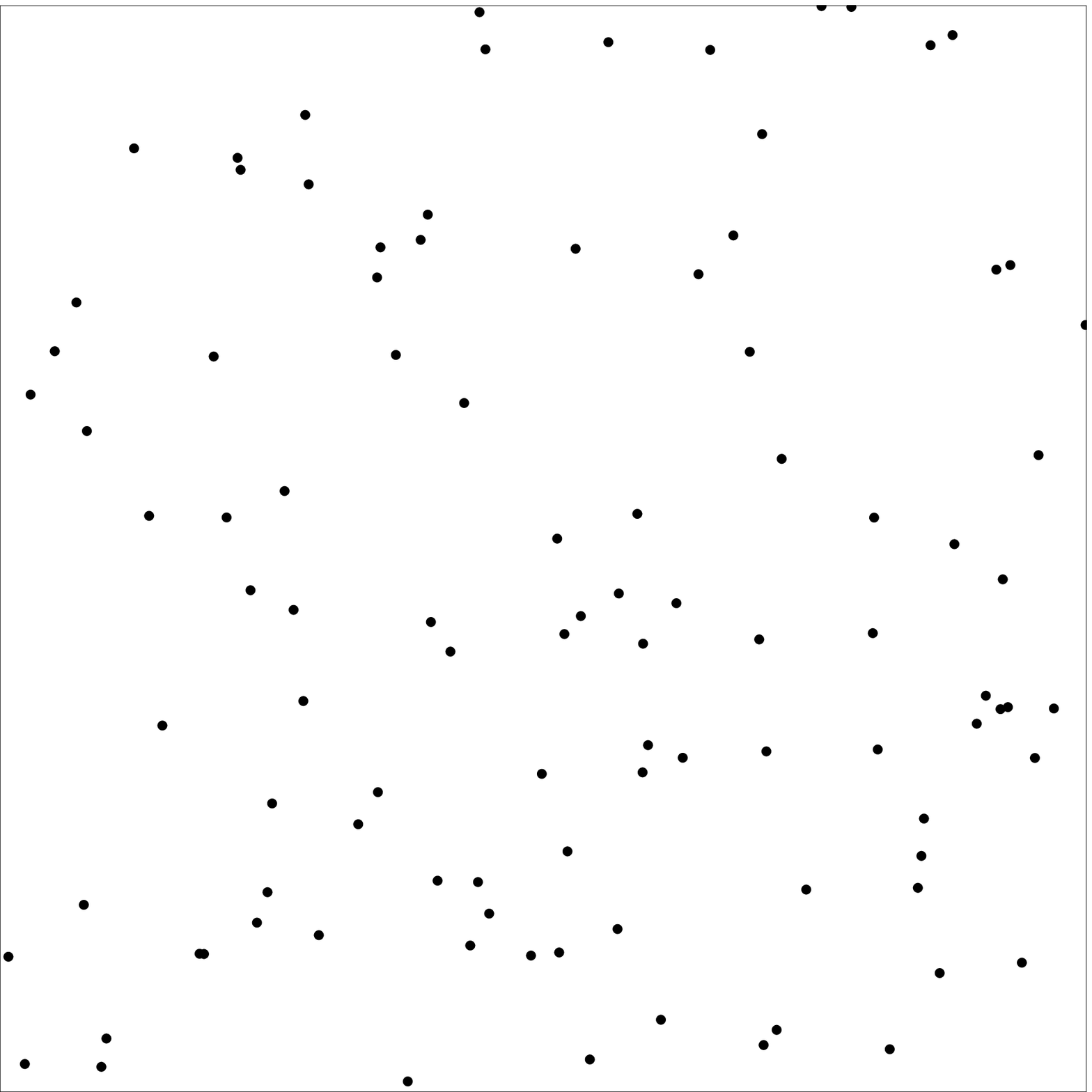}}
\caption{Examples for 2-dimensional infinite constellations. Only a finite section of the IC is shown.}
\label{fig:LatticeAndICs}
\end{center}
\end{figure}

In 1994, Poltyrev \cite{Poltyrev94_CodingWithoutRestrictions} studied the model of a channel with Gaussian noise without power constraints. In this setting the codewords are simply points of the infinite constellation in the $n$-dimensional Euclidean space. The analog to the number of codewords is the density $\gamma$ of the constellation points (the average number of points per unit volume). The analog of the communication rate is the normalized log density (NLD) $\NLD \triangleq \frac{1}{n}\log \gamma$.
The error probability in this setting can be thought of as the average error probability, where all the points of the IC have equal transmission probability (precise definitions follow later on in the paper).

Poltyrev showed that the NLD $\NLD$ is the analog of the rate in classical channel coding, and established the corresponding ``capacity'', the ultimate limit for the NLD denoted $\NLD^*$ (also known as Poltyrev's capacity), given by $\frac{1}{2}\log \frac{1}{2\pi e \sigma^2}$, where $\sigma^2$ denotes the noise variance per dimension\footnote{logarithms are taken w.r.t. to the natural base $e$ and rates are given in nats.}. Random coding, expurgation and sphere packing error exponent bounds were derived, which are analogous to Gallager's error exponents in the classical channel coding setting \cite{GallagerInfoTheoryBook}, and to the error exponents of the power-constrained additive white Gaussian noise (AWGN) channel \cite{Shannon59Gaussian,GallagerInfoTheoryBook}.

In classical channel coding, the channel capacity gives the ultimate limit for the rate when arbitrarily small error probability is required, and the error exponent quantifies the (exponential) speed at which the error probability goes to zero as the dimension grows, where the rate is fixed (and below the channel capacity). This type of analysis is asymptotic in nature - neither the capacity nor the error exponent theory can tell what is the best achievable error probability with a given rate $R$ and block length $n$. A big step in the non-asymptotic direction was recently made in a paper by Polyanskiy et al. \cite{PolyanskiyPVFiniteLength10}, where explicit bounds for finite $n$ were derived.
In addition to the error exponent formulation, another asymptotic question can be asked: Suppose that the (codeword) error probability is fixed to some value $\eps$. Let $R_\eps(n)$ denote the maximal rate for which there exist communication schemes with codelength $n$ and error probability at most $\eps$. As $n$ grows, $R_\eps(n)$ approaches the channel capacity $C$, and the speed of convergence is quantified by \cite{Strassen62_Asymptotische}\cite{PolyanskiyPVFiniteLength10}
\begin{equation}\label{eqn:DMCdispersion}
    R_\eps(n) = C - \sqrt \frac{V}{n}Q^{-1}(\eps) + O\left(\frac{\log n}{n}\right),
\end{equation}
where $Q^{-1}(\cdot)$ is the inverse complementary standard Gaussian cumulative distribution function. The constant $V$, termed the channel dispersion, is the variance of the information spectrum $i(x;y) \triangleq  \log \frac{P_{XY}(x,y)}{P_X(x) P_Y(y)} $ for a capacity-achieving distribution. This result holds for discrete memoryless channels (DMC's), and was recently extended to the (power constrained) AWGN channel \cite{PolyanskiyPV09_GaussianDispersion}\cite{PolyanskiyPVFiniteLength10}. More refinements of \eqref{eqn:DMCdispersion} and further details can be found in \cite{PolyanskiyPVFiniteLength10}.

\bigskip

In this paper we take an in-depth look at the unconstrained Gaussian channel where the block length (dimension) is finite. We give new achievability bounds which enable easy evaluation of the achievable error probability. We then analyze the new achievability bounds and the so-called sphere bound (converse bound), and obtain asymptotic analysis of the lowest achievable error probability for fixed NLD $\NLD$ which greatly refines Poltyrev's error exponent results. In addition, we analyze the behavior of the highest NLD when the error probability is fixed. We show that the behavior demonstrated in \eqref{eqn:DMCdispersion} for DMC's and the power constrained AWGN channel carries on to the unconstrained AWGN channel as well. We demonstrate the tightness of the results both analytically and numerically, and compare to state-of-the-art coding schemes.

The main results in the paper are summarized below.

\subsection{New Finite-Dimensional Performance Bounds}
Poltyrev's achievability results \cite{Poltyrev94_CodingWithoutRestrictions} for the capacity and for the error exponent are based on a bound that holds for finite dimensions, but is hard to calculate, as it involves optimizing w.r.t. a parameter and 3-dimensional integration. We derive two new bounds that hold for finite dimensions, and are easier to calculate than Poltyrev's. Like Poltyrev's bound, we bound the error probability by the sum of the probability that the noise leaves a certain region (a sphere), and the probability of error for noise realization within that sphere. This classic technique is due to Gallager \cite{GallagerLDPC}, sometimes called ``Gallager's first bounding technique'' \cite{DivsalarBounds}. Our first bound, called the \emph{typicality bound}, is based on a simple `typicality' decoder (close in spirit to that used in the standard achievability proofs \cite{CoverThomas_InfoTheoryBook}). It shows that there exist IC's with NLD $\NLD$ and error probability bounded by
\begin{equation}
   P_e \leq  P_e^{TB}(n,\NLD) \triangleq e^{n\NLD} V_n r^n + \Pr\left\{ \|\bZ\| > r \right\},
\end{equation}
where $V_n$ denotes the volume of an $n$-dimensional sphere with unit radius \cite{ConwaySloane1993} and $\bZ$ denotes the noise vector. The bound holds for any $r>0$, and the value minimizing the bound is given by $r = \sigma\sqrt{n (1 + 2\NLD^* -2\NLD)}$. Evaluating this bound only involves 1D integration, and the simple expression is amenable to precise asymptotic analysis.
A stronger bound, called the \emph{maximum likelihood (ML) bound}, which is based on the ML decoder, shows that there exist IC's with error probability bounded by
\begin{equation}\label{eqn:AchievabilityMaxLikeFirst}
  P_e \leq P_e^{MLB}(n,\NLD) \triangleq e^{n\NLD} V_n \int_0^r f_R(\tilde r) \tilde r^n d\tilde r + \Pr\left\{ \|\bZ\| > r \right\},
\end{equation}
$f_R(\cdot)$ is the pdf of the norm $\|\bZ\|$ of the noise vector. The bound holds for any $r>0$, and the value minimizing the bound is given by $r = \reff \triangleq e^{-\NLD}V_n^{-1/n}$. Note that $\reff$, called the \emph{effective radius} of the lattice (or IC), is the radius of a sphere with the same volume as the Voronoi cell of the lattice (or the average volume of the Voronoi cells of the IC\footnote{Note that the average volume of the Voronoi cells is not always well-defined, as in general there may exist cells with infinite volume. See \ref{ssec:SphereBoundIC} for more details.}
). Evaluating the ML bound also involves 1D integration only. We further show that the ML bound gives the exact value of Poltyrev's bound, therefore the simplicity does not come at the price of a weaker bound.

In the achievability part of the results we use lattices (and the Minkowski-Hlawka theorem \cite{HlawkaGeometric1991}\cite{Lekkerkerker87}). Because of the regular structure of lattices, all our achievability results hold in the stronger sense of maximal error probability. In the converse part we base our results on the sphere bound \cite{TarokhVardyZeger99_universal}\cite{Poltyrev94_CodingWithoutRestrictions}\cite{ForneySphereBound00}, i.e. on the fact that the error probability is lower bounded by the probability that the noise leaves a sphere with the same volume as a Voronoi cell.
For lattices (and more generally, for IC's with equal-volume Voronoi cells), it is given by
\begin{equation}\label{eqn:SphereBoundFirst}
    P_e \geq P_e^{SB}(n,\NLD) \triangleq \Pr\{\|\bZ\|>\reff\}.
\end{equation}
We extend the validity of the sphere bound to \emph{any} IC, and to the stronger sense of \emph{average} error probability. Therefore our results hold for both average and maximal error probability, and for any IC (lattice or not).

Note that since the optimal value for $r$ in the ML bound \eqref{eqn:AchievabilityMaxLikeFirst} is exactly $\reff$, the difference between the ML upper bound and the sphere packing lower bound is the left term in \eqref{eqn:AchievabilityMaxLikeFirst}. This fact enables a precise evaluation of the best achievable $P_e$, see Section~\ref{sec:Properties}.

\subsection{Asymptotic Analysis: Fixed NLD}

The asymptotics of the bounds on the error probability were studied by Poltyrev \cite{Poltyrev94_CodingWithoutRestrictions} using large deviation techniques and error exponents. The error exponent for the unconstrained AWGN is defined in the usual manner:
\begin{equation}\label{eqn:errExpDef}
    \E(\NLD) \triangleq \lim_{n\ra\infty} \frac{1}{n}\log P_e(n,\NLD),
\end{equation}
(assuming the limit exists), where $P_e(n,\NLD)$ is the best error probability for any IC with NLD $\NLD$. Poltyrev showed that the error exponent is bounded by the random coding and sphere packing exponents $\E_r(\NLD)$ and $\E_{sp}(\NLD)$ which are the infinite constellation counterparts of the similar exponents in the power constrained AWGN. The random coding and sphere packing exponents coincide when the NLD is above the critical NLD $\NLD_{cr}$, defined later on. However, even when the error exponent bounds coincide, the optimal error probability $P_e(n,\NLD)$ is known only up to an unknown sub-exponential term (which can be, for example $n^{100}$, or worse, e.g. $e^{\sqrt n}$). We present a significantly tighter asymptotic analysis using a more delicate (and direct) approach. Specifically, we show that the sphere bound is given asymptotically by
\begin{equation}
    P_e^{SB}(n,\NLD) \cong e^{-n E_{sp}(\NLD)} \frac{(n\pi)^{-\frac{1}{2}e^{2(\NLD^* - \NLD)}}}{e^{2(\NLD^* - \NLD)}-1},
\end{equation}
where $a \cong b$ means that $\frac{a}{b} \ra 1$.
We further show that the ML bound is given by
\begin{equation}
     P_e^{MLB}(n,\NLD)\cong \left\{
                     \begin{array}{ll}
                       e^{-n\E_r(\NLD)}\frac{1}{\sqrt{2 \pi n}}
                       , & \hbox{$\NLD < \NLD_{cr}$;} \\
                       e^{-n\E_r(\NLD)} \frac{1}{\sqrt{8\pi n}}
                       , & \hbox{$\NLD = \NLD_{cr}$;} \\
                       e^{-n\E_{r}(\NLD)}\frac{(n\pi)^{-\frac{1}{2}e^{2(\NLD^*-\NLD)}}}
    {\left(2-e^{2(\NLD^*-\NLD)}\right)\left(e^{2(\NLD^*-\NLD)}-1\right)},
                       & \hbox{$\NLD_{cr} < \NLD < \NLD^*  $;}
                     \end{array}
                   \right.
\end{equation}
As a consequence, for NLD above $\NLD_{cr}$, $P_e(n,\NLD)$ is known asymptotically up to \emph{a constant}, compared to a sub-exponential term in Poltyrev's error exponent analysis.
The weaker typicality bound is given by
\begin{equation}
P_e^{TB}(n,\NLD) \cong e^{-n \E_t(\NLD)}\frac{1}{\sqrt{n\pi}} \cdot \frac{1 + 2(\NLD^* - \NLD)}{2(\NLD^* - \NLD)}
\end{equation}
where $\E_t(\NLD)$ is the \emph{typicality exponent}, defined later on, which is lower than $\E_r(\NLD)$.

\subsection{Asymptotic Analysis: Fixed Error Probability}
For a fixed error probability value $\eps$, let $\NLD_\eps(n)$ denote the maximal NLD for which there exists an IC with dimension $n$ and error probability at most $\eps$. We shall be interested in the asymptotic behavior  $\NLD_\eps(n)$. This type of analysis for infinite constellations has never appeared in literature (to the best of the authors' knowledge). In the current paper we utilize central limit theorem (CLT) type tools (specifically, the Berry-Esseen theorem) to give a precise asymptotic analysis of $\NLD_\eps(n)$, a result analogous to the channel dispersion \cite{Strassen62_Asymptotische}\cite{PolyanskiyPV09_GaussianDispersion}\cite{PolyanskiyPVFiniteLength10} in channel coding. Specifically, we show that
\begin{equation}\label{eqn:mainResult}
    \NLD_\eps(n) = \NLD^* - \sqrt\frac{1}{2n}Q^{-1}(\eps) +\frac{1}{2n}\log n + O\left(\frac{1}{n}\right).
\end{equation}
By the similarity to Eq. \eqref{eqn:DMCdispersion}, we identify the constant $\frac{1}{2}$ as the dispersion of infinite constellations. This fact can be intuitively explained in several ways:
\begin{itemize}
  \item \emph{The dispersion as the (inverse of the) second derivative of the error exponent:} for DMC's and for the power constrained AWGN channel, the channel dispersion is given by the inverse of the second derivative of the error exponent evaluated at the capacity \cite{PolyanskiyPVFiniteLength10}. Straightforward differentiation of the error exponent $\E(\NLD)$ (which near the capacity is given by $\E_r(\NLD) = \E_{sp}(\NLD)$) verifies the value of $\frac{1}{2}$.
  \item \emph{The unconstrained AWGN channel as the high-SNR AWGN channel:} While the capacity of the power constrained AWGN channel grows without bound with the SNR, the error exponent attains a nontrivial limit. This limit is the error exponent of the unconstrained AWGN channel (as noticed in \cite{ErezZamirAWGN}), where the distance to capacity is replaced by the NLD distance to $\NLD^*$. By this analogy, we examine the high-SNR limit of the dispersion of the AWGN channel (given in \cite{PolyanskiyPV09_GaussianDispersion}\cite{PolyanskiyPVFiniteLength10} by $\tfrac{1}{2}\left(1 - (1+\mathrm{SNR})^{-2}\right)$) and arrive at the expected value of $\frac{1}{2}$.
\end{itemize}

\subsection{Volume-to-Noise Ratio (VNR)}
Another figure of merit for lattices (that can be defined for general IC's as well) is the volume-to-noise ratio (VNR), which generalizes the SNR notion \cite{ForneySphereBound00} (see also \cite{ZamirLatticesEverywhere09}). The VNR quantifies how good a lattice is for channel coding over the unconstrained AWGN at some given error probability $\eps$. It is known that for any $\eps>0$, the optimal (minimal) VNR of any lattice approaches $1$ when the dimension $n$ grows (see e.g. \cite{ZamirLatticesEverywhere09}). We note that the VNR and the NLD are tightly connected, and deduce equivalent finite-dimensional and asymptotic results for the optimal VNR.

\bigskip

The paper is organized as follows. In Section~\ref{sec:defs} we define the notations and in Section~\ref{sec:PreviousResults} we review previous results. In Section~\ref{sec:NewBounds} we derive the new typicality and ML bounds for the optimal error probability of finite dimensional IC's, and we refine the sphere bound as a lower bound on the average error probability for any finite dimensional IC. In Section~\ref{sec:Properties} the bounds are analyzed asymptotically with the dimension where the NLD is fixed, to derive asymptotic bounds that refine the error exponent bounds. In Section~\ref{sec:NormalApprox} we fix the error probability and study the asymptotic behavior of the optimal achievable NLD with $n$. We use normal approximation tools to derive the dispersion theorem for the setting.
In Section~\ref{sec:Comparison} we compare the bounds from previous sections with the performance of some good known infinite constellations.
In Section~\ref{sec:VNR} we discuss the VNR and its connection to the NLD $\NLD$. We conclude the paper in Section~\ref{sec:Summary}.

\fi

\section{Definitions}\label{sec:defs}
\if \Defs 1

\subsection{Notation} We adopt most of the notations of Poltyrev's paper \cite{Poltyrev94_CodingWithoutRestrictions}:
Let $\cube{a}$ denote a hypercube in $\Reals^n$
\begin{equation}
   \cube{a}\triangleq \left\{\bx\in\Reals^n\ s.t.\ \forall_i |x_i|<\frac{a}{2}\right\}.
\end{equation}
Let $\ball{r}$ denote a hypersphere in $\Reals^n$ and radius $r>0$, centered at the origin
\begin{equation}
   \ball{r}\triangleq \{\bx\in\Reals^n\ s.t.\  \|\bx\|<r\},
\end{equation}
and let $\ballx{\by}{r}$ denote a hypersphere in $\Reals^n$ and radius $r>0$, centered at $\by\in\Reals^n$
\begin{equation}
   \ballx{\by}{r}\triangleq \{\bx\in\Reals^n\ s.t.\  \|\bx-\by\|<r\}.
\end{equation}

Let $\S$ be an IC. We denote by $M(\S,a)$ the number of points in the intersection of  $\cube{a}$ and the IC $\S$, i.e.  $M(\S,a) \triangleq |\S \bigcap\cube{a}|$. The density of $\S$, denoted by $\gamma(\S)$, or simply $\gamma$, measured in points per volume unit, is defined by
\begin{equation}
    \gamma(\S) \triangleq \limsup_{a\ra\infty} \frac{M(\S,a)}{a^n}.
\end{equation}
The normalized log density (NLD) $\NLD$ is defined by
\begin{equation}
    \NLD = \NLD(\S) \triangleq \frac{1}{n} \log \gamma.
\end{equation}

It will prove useful to define the following:
\begin{defn}[Expectation over points in a hypercube]
Let $\EE_a[f(s)]$ denote the expectation of an arbitrary function $f(s)$, $f:\S\ra \Reals$, where $s$ is drawn uniformly from the code points that reside in the hypercube $\cube{a}$:
\begin{equation}
    \EE_a[f(s)] \triangleq \frac{1}{M(\S,a)} \sum_{s\in\S\cap \cube{a}} f(s).
\end{equation}
\end{defn}

\bigskip

Throughout the paper, an IC will be used for transmission of information through the unconstrained AWGN channel with noise variance $\sigma^2$ (per dimension). The additive noise shall be denoted by $\bZ = [Z_1,...,Z_n]^T$. An instantiation of the noise vector shall be denoted by $\bz = [z_1,...,z_n]^T$.

For $s\in\S$, let $P_e(s)$ denote the error probability when $s$ was transmitted. When the maximum likelihood (ML) decoder is used, the error probability is given by
\begin{equation}
    P_e(s) = \Pr\{ s + \bZ \notin W(s)\},
\end{equation}
where $W(s)$ is the \emph{Voronoi region} of $s$, i.e. the convex polytope of the points that are closer to $s$ than to any other point $s'\in\S$. The maximal error probability is defined by
\begin{equation}
    P_e^{\max}(\S) \triangleq \sup_{s \in \S} P_e(s),
\end{equation}
and the average error probability is defined by
\begin{equation}
    P_e(\S) \triangleq \limsup_{a \ra \infty} \EE_a[P_e(s)].
\end{equation}

The following related quantities, define the optimal performance limits for IC's.
\begin{defn}[Optimal Error Probability and Optimal NLD]
$ $

\begin{itemize}
  \item Given NLD value $\NLD$ and dimension $n$, $P_e(n,\NLD)$ denotes the optimal error probability that can be obtained by any IC with NLD $\NLD$ and a finite dimension $n$.
  \item Given an error probability value $\eps$ and dimension $n$, $\NLD_\eps(n)$ denotes the maximal NLD for which there exists an IC with dimension $n$ and error probability at most $\eps$.
  \end{itemize}
\end{defn}

Clearly, these two quantities are tightly connected, and any nonasymptotic bound for either quantity gives a bound for the other. However, their asymptotic analysis (with $n \ra \infty$) is different: for fixed $\NLD<\NLD^*$, it is known that $P_e(n,\NLD)$ vanishes exponentially with $n$. In this paper we will refine these results. For a fixed error probability $\eps$, it is known that $\NLD_\eps(n)$ goes to $\NLD^*$ when $n \ra \infty$. In this paper we will show that the gap to $\NLD^*$ vanishes like $O\left(1/\sqrt n\right)$, see Section~\ref{sec:NormalApprox}.

\bigskip

$f_n = O(g_n)$ shall mean that there exist a constant $c$ s.t. for all $n>n_0$ for some $n_0$, $|f_n|\leq c\cdot g_n$. Similarly, $f_n \leq O(g_n)$ shall mean that there exist $c,n_0$ s.t. for all $n>n_0$, $f_n\leq c\cdot g_n$. $f_n \geq O(g_n)$ means $-f_n \leq O(-g_n)$. $f_n = \Theta(g_n)$ shall mean that both $f_n = O(g_n)$ and $g_n = O(f_n)$ hold.
\subsection{Measuring the Gap from Capacity}\label{ssec:MeasuringGapFromCapacity}

Suppose we are given an IC $\S$ with a given density $\gamma$ (and NLD $\NLD = \frac{1}{n}\log\gamma$), used for information transmission over the unconstrained AWGN with noise variance $\sigma^2$. The gap from optimality can be quantified in several ways.

Knowing that the optimal NLD (for $n\ra\infty$) is $\NLD^*$, we may consider the difference
\begin{equation}\label{eqn:gapToCapacityDeltaNLD}
    \Delta\NLD = \NLD^*-\NLD,
\end{equation}
which gives the gap to capacity in $\mathsf{nats}$, where a zero gap means that we are working at capacity. An equivalent alternative would be to measure the ratio between the noise variance that is tolerable (in the capacity sense) with the given NLD $\NLD$, given by $\frac{e^{-2\NLD}}{2\pi e}$, and the actual noise variance $\sigma^2$ (equal to $\frac{e^{-2\NLD^*}}{2\pi e}$).
This ratio is given by
\begin{equation}\label{eqn:gapToCapacityRatioSigma}
    \mu \triangleq \frac{e^{-2\NLD}/(2\pi e)}{\sigma^2} = e^{2(\NLD^*-\NLD)}.
\end{equation}
For lattices, the term $e^{-2\NLD}$ is equal to $v^{2/n}$, where $v$ is the volume of a Voronoi cell of the lattice. Therefore $\mu$ was termed the \emph{Volume-to-Noise Ratio} (VNR) by Forney et al. \cite{ForneySphereBound00} (where it is denoted by $\alpha^2(\Lambda,\sigma^2)$). The VNR can be defined for general IC's as well. It is generally above $1$ (below capacity) and approaches $1$ at capacity. It is often expressed in dB
\footnote{For $\Delta\NLD$ measured in bits we would get the familiar 6.02 dB/bit instead of 8.6859 dB/nat in \eqref{eqn:gapToCapacityRatioSigmaDB}.}, i.e.
\begin{equation}\label{eqn:gapToCapacityRatioSigmaDB}
    10 \log_{10} \frac{e^{-2\NLD}/(2\pi e)}{\sigma^2} = 10 \log_{10} e^{2(\NLD^*-\NLD)} \cong 8.6859 \Delta\NLD.
\end{equation}
Note that the VNR appears under different names and scalings in the literature. Poltyrev \cite{Poltyrev94_CodingWithoutRestrictions} defined the quantity $\frac{e^{-2\NLD}}{\sigma^2}$ and called it the Generalized SNR (and also denoted it by $\mu$). In certain cases the latter definition is beneficial, as it can be viewed as the dual of the normalized second moment (NSM), which at $n\ra\infty$ approaches $\frac{1}{2\pi e}$ \cite{ZamirLatticesEverywhere09}.

\bigskip

An alternative way to quantify the gap from optimal performance is based on the fact that the Voronoi regions of an optimal IC (at $n \ra \infty$) becomes sphere-like. For example, the sphere bound (the converse bound) is based on a sphere with the same volume as the Voronoi cells of the IC (i.e. a sphere with radius $\reff$). As $n$ grows, the Voronoi regions of the optimal IC (that achieves capacity) becomes closer to a sphere with squared radius that is equal to the mean squared radius of the noise, $n\sigma^2$. Therefore a plausible way to measure the gap from optimality would be to measure the ratio between the squared effective radius of the IC and the expected squared noise amplitude, i.e.
\begin{equation}\label{eqn:gapToCapacityRatioRadii}
    \rho \triangleq \frac{\reff^2}{n\sigma^2} = \frac{e^{-2\NLD}V_n^{-2/n}}{ n\sigma^2}.
\end{equation}
This quantity was called ``Lattice SNR'' in \cite{TarokhVardyZeger99_universal}, and ``Voronoi-to-Noise Effective Radius Ratio'' (squared) in \cite{ErezLitsynZamirLatticesGoodForEverything}.
Similarly to the VNR $\mu$, this ratio also approaches $1$ at capacity, and is also often expressed in dB. However, the two measures \eqref{eqn:gapToCapacityRatioSigma} and \eqref{eqn:gapToCapacityRatioRadii} are not equivalent. For a given gap in dB, different IC densities (and NLD's) are derived, and only as $n \ra \infty$ the measures coincide (this can be seen by approximating $V_n$, see Appendix~\ref{app:Vn}). In the current paper, whenever we state a gap from capacity in dB, we refer to the gap \eqref{eqn:gapToCapacityRatioSigmaDB}.

\bigskip

In the current paper we shall be interested in the gap to capacity in the forms of \eqref{eqn:gapToCapacityDeltaNLD} and \eqref{eqn:gapToCapacityRatioSigma}. The finite-dimensional results in Section~\ref{sec:NewBounds} are specific for each $n$ and can be written as a function of either the NLD $\NLD$ or the ratio \eqref{eqn:gapToCapacityRatioRadii}. However, the asymptotic analysis in Sections \ref{sec:Properties} and \ref{sec:NormalApprox} depends on the selected measure. Specifically, in Section~ \ref{sec:Properties} we study the behavior of the error probability with $n\ra\infty$ where $\NLD$ is fixed. This is equivalent to fixing the ratio \eqref{eqn:gapToCapacityRatioSigma} (but not \eqref{eqn:gapToCapacityRatioRadii}). While the exponential behavior of the bounds on the error probability is the same whether we fix \eqref{eqn:gapToCapacityRatioSigma} or \eqref{eqn:gapToCapacityRatioRadii}, the sub-exponential behavior differs. In Section~\ref{sec:NormalApprox} we are interested in the behavior of the gap \eqref{eqn:gapToCapacityDeltaNLD} with $n\ra\infty$ for fixed error probability. Equivalent results in terms of the ratio  \eqref{eqn:gapToCapacityRatioRadii} can be derived using the same tools\footnote{It is interesting to note that although we choose to stick with the gap in nats and to the ratio \eqref{eqn:gapToCapacityRatioSigma}, the term \eqref{eqn:gapToCapacityRatioRadii} will pop out in the asymptotic analysis in Section~\ref{sec:Properties}.}.

\section{Previous Results}\label{sec:PreviousResults}
\subsection{Known Bounds on $P_e(n,\NLD)$}

Here we review existing non-asymptotic bounds on $P_e(n,\NLD)$, and discuss how easy are they for evaluation and asymptotic analysis.

The following non-asymptotic achievability bound can be distilled from Poltyrev's paper \cite{Poltyrev94_CodingWithoutRestrictions}:
\begin{theorem}[Poltyrev's achievability]\label{thm:AchievabilityPoltyrev}
For any $r>0$,
\begin{align}\label{eqn:AchievabilityPoltyrev}
    P_e(n,\NLD) \leq e^{n\NLD} n V_n \int_0^{2r} w^{n-1}\Pr\{\bZ \in D(r,w)\} dw + \Pr\{\|\bZ\|>r\},
\end{align}
where $D(r,w)$ denotes the section of the sphere with radius $r$ that is cut off by a hyperplane at a distance $\frac{w}{2}$ from the origin.
\end{theorem}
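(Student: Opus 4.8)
The plan is to combine Gallager's first bounding technique with a random-coding argument over an ensemble of lattices, and to evaluate the ensemble average by means of the Minkowski--Hlawka theorem in its averaging (Siegel mean value) form. Since we only need to exhibit one infinite constellation with NLD $\NLD$ that meets the bound, an existence argument suffices.

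First I would fix the density $\gamma=e^{n\NLD}$ and consider the ensemble of full-rank lattices in $\Reals^n$ of covolume $1/\gamma$, equipped with the normalized Haar (Siegel) probability measure on the space of such lattices. For any fixed lattice $\Lambda$ in the ensemble, translation symmetry makes the error probability independent of the transmitted point, so it is enough to analyze the event of erroneous ML decoding when the origin is sent and the noise is $\bZ$. Gallager's first bounding technique splits this as
\begin{equation}
  P_e(0)\ \le\ \Pr\{\|\bZ\|>r\}\ +\ \Pr\{\text{ML decoder errs},\ \|\bZ\|\le r\}.
\end{equation}
Under nearest-lattice-point decoding, an error with $\|\bZ\|\le r$ can occur only if some nonzero $\lambda\in\Lambda$ satisfies $\|\bZ-\lambda\|\le\|\bZ\|$, i.e.\ $\bZ$ lies in the half-space $H_\lambda=\{\bx:\langle\bx,\lambda\rangle\ge\tfrac12\|\lambda\|^2\}$. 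The key geometric observation is that $\ball{r}\cap H_\lambda$ is precisely a spherical cap of the radius-$r$ ball cut off by a hyperplane at distance $\|\lambda\|/2$ from the origin, that is, a rotation of $D(r,\|\lambda\|)$; in particular it is empty as soon as $\|\lambda\|\ge 2r$. A union bound over the nonzero lattice points then gives
\begin{equation}
  \Pr\{\text{ML decoder errs},\ \|\bZ\|\le r\}\ \le\ \sum_{\lambda\in\Lambda\setminus\{0\}}\Pr\{\bZ\in D(r,\|\lambda\|)\}.
\end{equation}

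Next I would average over the ensemble. Put $g(\bx)\triangleq\Pr\{\bZ\in D(r,\|\bx\|)\}$, a bounded, radially symmetric function supported on $\{\|\bx\|<2r\}$. Siegel's mean value theorem (the averaging form of Minkowski--Hlawka) gives
\begin{equation}
  \EE_{\Lambda}\!\left[\sum_{\lambda\in\Lambda\setminus\{0\}}g(\lambda)\right]\ =\ \gamma\int_{\Reals^n}g(\bx)\,d\bx\ =\ \gamma\, nV_n\int_0^{2r}w^{n-1}\Pr\{\bZ\in D(r,w)\}\,dw ,
\end{equation}
where the last equality is the passage to polar coordinates, using that the surface area of the radius-$w$ sphere in $\Reals^n$ equals $nV_n w^{n-1}$. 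A nonnegative random variable attains a value at most its mean with positive probability, so there is a lattice $\Lambda$ in the ensemble (hence an IC of density $\gamma=e^{n\NLD}$) for which
\begin{equation}
  P_e(0)\ \le\ e^{n\NLD}\,nV_n\int_0^{2r}w^{n-1}\Pr\{\bZ\in D(r,w)\}\,dw\ +\ \Pr\{\|\bZ\|>r\},
\end{equation}
and by lattice symmetry this is the maximal error probability of $\Lambda$, which is even stronger than the average-error statement of the theorem; taking the infimum over all IC's yields the claim.

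The one place where care is needed is the invocation of the mean value theorem: one must verify that $g$ is Riemann/Jordan integrable (true, being continuous with compact support), that the space of covolume-$1/\gamma$ lattices has finite Haar volume so that it can be normalized to a probability space, and that every lattice in the ensemble has density exactly $\gamma$ in the $\limsup$ sense of Section~\ref{sec:defs} (automatic for full-rank lattices). I expect this measure-theoretic bookkeeping, rather than any step of the argument proper, to be the main thing to get right; the remaining ingredients — the half-space description of ML errors, the identification of the cap with $D(r,w)$, and the polar change of variables — are routine.
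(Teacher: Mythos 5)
Your proof is correct, and it is worth noting that the paper itself does not reprove this theorem: it is quoted as a known result distilled from Poltyrev's paper, and the paper's own machinery appears in the proofs of its new bounds (Theorems \ref{thm:AchievabilityTypicality} and \ref{thm:AchievabilityMaxLike}) and in the equivalence result (Theorem \ref{thm:Equivalence}). Your argument follows exactly that architecture --- Gallager's first bounding technique, a union bound over nonzero lattice points, and the Minkowski--Hlawka/Siegel averaging step applied to a bounded, compactly supported, radially symmetric function --- the only structural difference being where the geometry enters: you bound the joint event $\{\|\bZ\|\le r\}\cap\{\|\bZ-\lambda\|\le\|\bZ\|\}$ directly as the spherical cap $D(r,\|\lambda\|)$ and pass to polar coordinates, which lands you immediately on Poltyrev's form \eqref{eqn:AchievabilityPoltyrev}, whereas the paper's Theorem \ref{thm:AchievabilityMaxLike} first conditions on the noise radius and obtains the $\int_0^{r} f_R(\tilde r)\tilde r^n d\tilde r$ form, the two expressions being shown equal for every $r$ in Appendix~\ref{app:Equivalence}. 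Your use of the Siegel mean-value (averaging) form of Minkowski--Hlawka, followed by ``some lattice is at most the ensemble mean,'' is interchangeable with the existence form quoted as Theorem \ref{thm:MH} in the paper; both give a lattice of density $\gamma=e^{n\NLD}$, and the observation that lattice symmetry upgrades the statement to maximal error probability matches the paper's remarks. The minor caveats you flag (measurability and compact support of $g$, ties in the ML decoder being of measure zero, normalizability of the Haar measure on covolume-$\gamma^{-1}$ lattices) are indeed the only bookkeeping needed, so the proposal stands as a complete and essentially standard derivation of the stated bound.
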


In \cite{Poltyrev94_CodingWithoutRestrictions} it is stated that the optimal value for $r$ (the one that minimizes the upper bound) is given by the solution to an integral equation, and it is shown that as $n \ra \infty$, the optimal $r$ satisfies $\frac{r^2}{n} \ra \sigma^2 e^{2(\NLD^* - \NLD)}$. However, no explicit expression for the optimal $r$ is given, so in order to compute the bound for finite values of $n$ one has to numerically optimize w.r.t. $r$ (in addition to the numerical integration). In order to derive the error exponent result, Poltyrev \cite{Poltyrev94_CodingWithoutRestrictions} used the asymptotic (but suboptimal)
$r = \sqrt n \sigma e^{\NLD^* - \NLD}$.

\bigskip

The converse bound used in \cite{Poltyrev94_CodingWithoutRestrictions}, which will be used in the current paper as well, is based on the following simple fact:
\begin{theorem}[Sphere bound]\label{thm:SphereBound}
Let $W(s)$ be the Voronoi region of an IC point $s$, and let $S_{W(s)}$ denote a sphere with the same volume as $W(s)$. Then the error probability $P_e(s)$ is lower bounded by
\begin{equation}\label{eqn:SphereBound1}
    P_e(s) \geq \Pr\{\bZ\notin S_{W(s)}\},
\end{equation}
where $\bZ$ denotes the noise vector.
\end{theorem}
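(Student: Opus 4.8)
The plan is to establish the pointwise inequality $P_e(s) = \Pr\{s+\bZ\notin W(s)\} \ge \Pr\{\bZ\notin S_{W(s)}\}$ by a rearrangement (``bathtub'') argument that uses only the fact that the Gaussian noise density is a strictly decreasing function of the norm. First I would reduce to a statement about a conveniently centered region: by translation invariance of Lebesgue measure, put $V \triangleq W(s)-s$, so that $\mathrm{vol}(V)=\mathrm{vol}(W(s))$ and $P_e(s)=\Pr\{\bZ\notin V\}$; and since $\bZ$ is isotropic, $\Pr\{\bZ\notin S_{W(s)}\}=\Pr\{\bZ\notin S\}$, where $S=\ball{r}$ is the origin-centered ball with $\mathrm{vol}(S)=\mathrm{vol}(V)$. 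If $\mathrm{vol}(W(s))=\infty$ then $S_{W(s)}=\Reals^n$ and the bound is trivial, so we may assume finiteness. It then remains to show $\Pr\{\bZ\in V\}\le\Pr\{\bZ\in S\}$.

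For this, let $f(\bz)=(2\pi\sigma^2)^{-n/2}\exp(-\|\bz\|^2/2\sigma^2)$ denote the noise density and $c$ its (constant) value on the sphere of radius $r$. Because $f$ is radially decreasing, $f\le c$ on $\Reals^n\setminus S$ and $f\ge c$ on $S$. Writing
\begin{equation}
\Pr\{\bZ\in V\}-\Pr\{\bZ\in S\}=\int_{V\setminus S} f(\bz)\,d\bz-\int_{S\setminus V} f(\bz)\,d\bz,
\end{equation}
and noting that $\mathrm{vol}(V)=\mathrm{vol}(S)$ forces $\mathrm{vol}(V\setminus S)=\mathrm{vol}(V)-\mathrm{vol}(V\cap S)=\mathrm{vol}(S\setminus V)$, one bounds the first integral above by $c\,\mathrm{vol}(V\setminus S)$ and the second below by $c\,\mathrm{vol}(S\setminus V)$, whence the difference is $\le 0$. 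This is exactly \eqref{eqn:SphereBound1}.

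A few words on what needs care, and why the obstacle is minor. The argument requires $W(s)$ to be Lebesgue-measurable with a well-defined volume; since $W(s)$ is a convex polytope, its boundary has zero Lebesgue measure, so $\Pr\{\bZ\in\partial W(s)\}=0$ and the distinction between the open and closed Voronoi cell is immaterial. Nothing in the proof uses that the region is a Voronoi cell or that the decoder is ML — it holds for any decoding region and any isotropic (indeed any radially non-increasing) noise density — which is precisely what will later allow the bound to be upgraded to arbitrary IC's and to average error probability. I therefore expect the only ``obstacle'' to be this routine measure-theoretic bookkeeping together with the degenerate infinite-volume case; the inequality itself is just the classical fact that, among bodies of prescribed volume, an origin-centered ball captures the most mass of a radially symmetric, radially decreasing distribution.
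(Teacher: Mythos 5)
Your proof is correct, and it is exactly the argument the paper relies on: the paper does not spell out a proof but attributes the bound to the spherical symmetry and radial monotonicity of the noise pdf (citing Tarokh--Vardy--Zeger and Wozencraft--Jacobs), and your bathtub/rearrangement argument is precisely the standard formalization of that fact, including the correct handling of the infinite-volume and measurability corner cases.
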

This simple but important bound (see, e.g. \cite{TarokhVardyZeger99_universal}\cite{WozencraftJacobs}) is based on the fact that the pdf of the noise vector has spherical symmetry and decreases with the radius. An immediate corollary is the following bound for lattices (or more generally, any IC with equal-volume Voronoi cells):
\begin{equation}\label{eqn:SphereBoundConstantVoronoi}
    P_e(n,\NLD) \geq P_e^{SB}(n,\NLD) \triangleq \Pr\{\|\bZ\|>\reff\} = \int_{\reff}^{\infty} f_R(r') dr',
\end{equation}
where $\reff$ is the radius of a hypersphere with the same volume as a Voronoi cell, and $f_R(r)$ is the pdf of the norm of the noise vector, i.e. a (normalized) Chi distribution with $n$ degrees of freedom.

Note that this bound holds for any point $s$ in the IC, therefore it holds for the average error probability $P_e(n,\NLD)$ (and trivially for the maximal error probability as well).
In \cite{Poltyrev94_CodingWithoutRestrictions} the argument is extended to IC's which not necessarily obey the constant volume condition in the following manner: first, it is claimed that there must exist a Voronoi region with volume that is at less than the average volume $\gamma^{-1}$, so the bound holds for $P_e^{\max}(\S)$.
In order to apply the bound to the average error probability, a given IC $\S$ with average error probability $\eps$ is expurgated to get another IC $\S'$ with \emph{maximal} error probability  at most $2\eps$. Applying the previous argument for the maximal error probability of $\S'$ gives a bound on the average error probability of $\S$. The expurgation process, in addition to the factor of 2 in the error probability, also incurs a factor of 2 loss in the density $\gamma$. When evaluating the asymptotic exponential behavior of the error probability these factors have no meaning, but if we are interested (as in the case in this paper) in the bound values for finite $n$, and in the asymptotic behavior of $\NLD_\eps(n)$, these factors weaken the sphere bound significantly. In Section \ref{sec:NewBounds} we show that \eqref{eqn:SphereBoundConstantVoronoi} holds verbatim for any finite dimensional IC, and for the average error probability as well.

The sphere bound \eqref{eqn:SphereBoundConstantVoronoi} includes a simple (but with no known closed-form solution) 1D integral and can be evaluated numerically. An alternative for the numerical integration was proposed in \cite{TarokhVardyZeger99_universal}, where the integral was transformed into a sum of $n/2$ elements to allow the exact calculation of the bound. While the result gives an alternative to numeric integration, it does not shed any light on the asymptotic behavior of the bound with growing $n$.

\subsection{Known Asymptotic Bounds at Fixed $\NLD$ (Error Exponent)}

The error exponent $\E(\NLD)$ for the unconstrained AWGN was defined in \eqref{eqn:errExpDef}.
The nonasymptotic bounds in the previous subsection can lead to upper and lower bounds on the exponent.

The asymptotic evaluation of Poltyrev's achievability bound (Theorem~\ref{thm:AchievabilityPoltyrev}) is hard: in \cite{Poltyrev94_CodingWithoutRestrictions}, in order to provide a lower bound on the error exponent, a suboptimal value for $r$ is chosen for finite $n$  $\left(r = \sqrt n \sigma e^{-(\NLD^* - \NLD)}
\right)$. The resulting bound is the random coding exponent for this setting $\E_r(\NLD)$, given by
\begin{equation}\label{eqn:Er}
    \E_r(\NLD) = \left\{
                     \begin{array}{ll}
                       \NLD^*  -\NLD+ \log \frac{e}{4}, & \hbox{$\NLD \leq \NLD_{cr}$;} \\
                       \frac{1}{2}\left[e^{2(\NLD^*-\NLD)} -1 -2(\NLD^* -\NLD) \right], & \hbox{$\NLD_{cr} \leq \NLD < \NLD^*  $;} \\
                       0, & \hbox{$\NLD \geq \NLD^*,$}
                     \end{array}
                   \right.
\end{equation}
where $\NLD_{cr} = \frac{1}{2}\log\frac{1}{4 \pi e \sigma^2}$. Poltyrev also provided an expurgation-type argument to improve the error exponent at low NLD values (below $\NLD_{ex} \triangleq \NLD^*-\log 2$). This NLD region is outside the focus of the current paper.

An upper bound on the error exponent is the sphere packing exponent. It is given by \cite{Poltyrev94_CodingWithoutRestrictions}:
\begin{align}
    \E_{sp}(\NLD)
    &= \frac{1}{2}\left[e^{2(\NLD^*-\NLD)} -1 -2(\NLD^* -\NLD) \right],\label{eqn:Esp}
\end{align}
which is derived from the sphere bound (see \cite[Appendix C]{Poltyrev94_CodingWithoutRestrictions}).

The upper and lower bounds on the error exponent only hint on the value of $P_e(n,\NLD)$:
\begin{equation}
    e^{-n(\E_{sp}(\NLD) + o(1))} \leq P_e(n,\NLD) \leq e^{-n(\E_{r}(\NLD) + o(1))}.
\end{equation}
Even when the error exponent bounds coincide (above the critical NLD $\NLD_{cr}$), the optimal error probability $P_e(n,\NLD)$ is known only up to an unknown sub-exponential term. In Section~\ref{sec:Properties} we present a significantly tighter asymptotic analysis and show, for example, that at NLD above $\NLD_{cr}$, $P_e(n,\NLD)$ is known, asymptotically, up to \emph{a constant}.

\fi

\section{Bounds for Finite Dimensional IC's}\label{sec:NewBounds}

\if \NewBounds 1
In this section we analyze the optimal performance of finite dimensional infinite constellations in Gaussian noise. We describe two new achievability bounds, both based on lattices: The first bound is based on a simple `typicality' decoder, and the second one based on the ML decoder. Both bounds result in simpler expressions than Poltyrev's bound (Theorem \ref{thm:AchievabilityPoltyrev}). The first bound is simpler to derive but proves to be weak. The second bound gives the exact value of the bound as Poltyrev's (Theorem \ref{thm:AchievabilityPoltyrev}), without the need for 3D integration and an additional numeric optimization, but only a single 1D integral (which can be analyzed further - see Section~\ref{sec:Properties}). As for converse bounds, we extend the validity of the sphere bound to the most general case of IC's (not only those with equal-volume Voronoi cells) and average error probability.

\subsection{Typicality Decoder Based Bound}
\begin{theorem}\label{thm:AchievabilityTypicality}
For any $r>0$,
\begin{equation}
  P_e(n,\NLD) \leq P_e^{TB} \triangleq e^{n\NLD} V_n r^n + \Pr\left\{ \|\bZ\| > r \right\},
\end{equation}
and the optimal value for $r$ is given by
\begin{equation}
    r^* = \sigma\sqrt{n (1 + 2\NLD^* -2\NLD)}.
\end{equation}
\begin{proof}
Let $\Lambda$ be a lattice that is used as an IC for transmission over the unconstrained AWGN. We consider a suboptimal decoder, and therefore the performance of the optimal ML decoder can only be better. The decoder, called a \emph{typicality decoder}, shall operate as follows.
Suppose that $\lambda \in \Lambda$ is sent, and the point $\by = \lambda + \bz$ is received, where $\bz$ is the additive noise. Let $r$ be a parameter for the decoder, which will be determined later on. If there is only a single point in the ball $\ballx{\by}{r}$, then this will be the decoded word. If there are no codewords in the ball, or more than one codeword in the ball, an error is declared (one of the code points is chosen at random).
\begin{lem} The average error probability of a lattice $\Lambda$ (with the typicality decoder) is bounded by
\begin{equation}\label{eqn:Balls}
    P_e(\Lambda) \leq
\Pr \left\{ \bZ \notin \ball{r}\right\}
     + \sum_{\lambda \in \Lambda \setminus \{0\}} \Pr \left\{ \bZ \in \ballx{\lambda}{r}\cap \ball{r}\right\},
\end{equation}
where $\bZ$ denotes the noise vector.
\begin{proof}
Since $\Lambda$ is a lattice we can assume without loss of generality that the zero point was sent. We divide the error events to two cases. First, if the noise falls outside the ball of radius $r$ (centered at the origin), then there surely will be erroneous decoding since the transmitted (0) point is outside the ball. The remaining error cases are where the noise $\bZ$ is within $\ball{r}$, and the noise falls in the typical ball of some other lattice point (that is different than the transmitted zero point). We therefore get

\begin{align}
  P_e(\Lambda)  &\leq \Pr \left\{ \bZ \notin \ball{r}\right\}  + \Pr\left\{ \bZ \in \ball{r} \bigcap \left[ \bigcup_{\lambda \in \Lambda \setminus \{0\}}\ballx{\lambda}{r}\right]\right\}\nonumber\\
  &=\Pr \left\{ \bZ \notin \ball{r}\right\}  + \Pr\left\{ \bZ \in  \bigcup_{\lambda \in \Lambda \setminus \{0\}}\ballx{\lambda}{r} \cap \ball{r}\right\}\nonumber\\
  &\leq\Pr \left\{ \bZ \notin \ball{r}\right\}  + \sum_{\lambda \in \Lambda \setminus \{0\}}
  \Pr\left\{ \bZ \in \ballx{\lambda}{r} \cap \ball{r}\right\},\label{eqn:pre-MH}
\end{align}
where the last inequality follows from the union bound.
\end{proof}
\end{lem}

We use the Minkowski-Hlawka theorem \cite{Lekkerkerker87}\cite{HlawkaGeometric1991}:
\footnote{
The MH theorem is usually written as \eqref{eqn:MH} with an $\epsilon$ added to the RHS that is arbitrarily small (e.g. \cite[Lemma 3, p. 65]{HlawkaGeometric1991}, and \cite[Theorem 1, p. 200]{Lekkerkerker87}).
The version \eqref{eqn:MH} follows from a slightly improved version of the theorem due to Siegel, often called the Minkowski-Hlawka-Siegel (MHS) theorem, see \cite[Theorem 5, p. 205]{Lekkerkerker87}.
}
\begin{theorem}[MH]\label{thm:MH}
Let $f: \Reals^n \ra \Reals^+$ be a nonnegative integrable function with bounded support. Then for every $\gamma>0$, there exist a lattice $\Lambda$ with $\det \Lambda = \gamma^{-1}$ that satisfies
\begin{equation}\label{eqn:MH}
    \sum_{\lambda\in \Lambda \setminus \{0\}} f(\lambda) \leq \gamma\int_{\Reals^n} f(\lambda) d\lambda.
\end{equation}
\end{theorem}

Since $\Pr\left\{ \bZ \in \ballx{\lambda}{r} \cap \ball{r}\right\} = 0$ for any $\lambda$ s.t. $\|\lambda\| > 2r$ we may apply the MH theorem to the sum in \eqref{eqn:pre-MH}. We deduce that for any $\gamma>0$, there must exist a lattice $\Lambda$ with density $\gamma$, s.t.
\begin{equation}
\sum_{\lambda \in \Lambda \setminus \{0\}}\Pr\left\{ \bZ \in \ballx{\lambda}{r} \cap \ball{r}\right\}
\leq  \gamma\int_{\Reals^n} \Pr\left\{ \bZ \in \ballx{\lambda}{r} \cap \ball{r}\right\}d\lambda. \label{eqn:MH_application_typicality}
\end{equation}
We further examine the resulting integral:
\begin{align}
  \int_{\Reals^n} &\Pr\left\{ \bZ \in \ballx{\lambda}{r} \cap \ball{r}\right\}d\lambda\nonumber\\
  &=  \int_{\Reals^n} \int_{\ballx{\lambda}{r} \cap \ball{r}} f_\bZ(\bz) d\bz d\lambda\nonumber\\
  &\leq  \int_{\Reals^n} \int_{\ballx{\lambda}{r}} f_\bZ(\bz) d\bz d\lambda\nonumber\\
  &=  \int_{\Reals^n} \int_{\ball{r}} f_\bZ(\bz'+\lambda) d\bz' d\lambda\nonumber\\
  &=  \int_{\ball{r}} 1 d\bz'\nonumber\\
  &=  V_n r^n.
\end{align}

Combined with \eqref{eqn:Balls} we get that there exist a lattice $\Lambda$ with density $\gamma$, for which
\begin{align}\label{eqn:pe_lattice}
  P_e(\Lambda)  &\leq \gamma V_n r^n + \Pr\left\{ \|\bZ\| > r \right\},
\end{align}
where $r>0$ and $\gamma=e^{n\NLD}$ can be chosen arbitrarily.

The optimal value for $r$ follows from straightforward optimization of the RHS of \eqref{eqn:pe_lattice}: we first write
\begin{align*}
    \Pr\left\{ \|\bZ\| > r \right\}
    &= \Pr\left\{ \frac{1}{\sigma^2}\sum_{i=1}^nZ_i^2 > \frac{r^2}{\sigma^2} \right\}.
\end{align*}
We note that the sum $\frac{1}{\sigma^2}\sum_{i=1}^nZ_i^2$ is a sum of $n$ i.i.d. standard Gaussian RV's, which is exactly a $\chi^2$ random variable with $n$ degrees of freedom. The pdf of this RV is well known, and given by
\begin{equation*}
    f_{\chi^2_n}(x) = \frac{2^{-n/2}}{\Gamma(n/2)} x^{n/2 -1} e^{-x/2},
\end{equation*}
where $\Gamma(\cdot)$ is the Gamma function. Equipped with this, the RHS of \eqref{eqn:pe_lattice} becomes
\begin{equation*}
    e^{n\NLD}V_n r^n + \int_{\tfrac{r^2}{\sigma^2}}^\infty \frac{2^{-n/2}}{\Gamma(n/2)} x^{n/2 -1} e^{-x/2}.
\end{equation*}
Differentiating w.r.t. $r$ and equating to zero gives
$$
   n e^{n\NLD}V_n r^{n-1} -\frac{2r}{\sigma^2}
    \frac{2^{-n/2}}{\Gamma(n/2)} (r^2/\sigma^2)^{n/2 -1} e^{-\tfrac{r^2}{2\sigma^2}}=0.
    $$
We plug in the expression for $V_n = \frac{\pi^{n/2}}{\frac{n}{2}\Gamma(n/2)}$ and get
$$
   n e^{n\NLD}\frac{\pi^{n/2}}{\frac{n}{2}\Gamma(n/2)} r^{n-1} -\frac{2r}{\sigma^2}
    \frac{2^{-n/2}}{\Gamma(n/2)} (r^2/\sigma^2)^{n/2 -1} e^{-\tfrac{r^2}{2\sigma^2}}=0,$$
which simplifies to the required $r = \sigma\sqrt{n(1+2\NLD^*-2\NLD)}$.
\end{proof}
\end{theorem}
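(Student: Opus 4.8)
The plan is to follow the classical two-part "Gallager first bounding technique": decode with a suboptimal typicality rule, split the error event according to whether the noise stays inside a ball of radius $r$, bound the remaining "collision" probability by a union bound over nonzero lattice points, and finally invoke the Minkowski-Hlawka(-Siegel) theorem to produce a single lattice for which the averaged bound holds. Since the ML decoder is at least as good as the typicality decoder, any bound proved for the latter is valid for $P_e(n,\NLD)$.

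Concretely, first I would fix a lattice $\Lambda$ and, using translation invariance, assume the zero codeword is transmitted. The error event is contained in the union of (i) $\{\bZ\notin\ball{r}\}$ and (ii) $\{\bZ\in\ball{r}\}\cap\bigcup_{\lambda\neq 0}\ballx{\lambda}{r}$, because the typicality decoder errs only if the received point's $r$-ball contains no codeword (covered by (i), since then $\mathbf 0$ itself is excluded) or contains a competing codeword (covered by (ii)). A union bound over $\lambda\in\Lambda\setminus\{0\}$ then gives $P_e(\Lambda)\le \Pr\{\bZ\notin\ball r\}+\sum_{\lambda\neq 0}\Pr\{\bZ\in\ballx{\lambda}{r}\cap\ball r\}$. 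This is exactly the Lemma in the statement.

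Next, observe that the summand vanishes whenever $\|\lambda\|>2r$, so the function $\lambda\mapsto \Pr\{\bZ\in\ballx{\lambda}{r}\cap\ball r\}$ is nonnegative, integrable, and has bounded support; hence Theorem~\ref{thm:MH} applies and yields, for every $\gamma>0$, a lattice of density $\gamma$ for which the sum is at most $\gamma\int_{\Reals^n}\Pr\{\bZ\in\ballx{\lambda}{r}\cap\ball r\}\,d\lambda$. Dropping the intersection with $\ball r$ and substituting $\bz=\bz'+\lambda$, the double integral collapses: $\int_{\Reals^n}\!\int_{\ballx{\lambda}{r}} f_\bZ(\bz)\,d\bz\,d\lambda=\int_{\Reals^n}\!\int_{\ball r} f_\bZ(\bz'+\lambda)\,d\bz'\,d\lambda=\int_{\ball r}\!\big(\int_{\Reals^n} f_\bZ\big)\,d\bz'=V_n r^n$ by Fubini and normalization of the Gaussian density. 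Setting $\gamma=e^{n\NLD}$ gives the claimed bound $P_e(n,\NLD)\le e^{n\NLD}V_n r^n+\Pr\{\|\bZ\|>r\}$.

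Finally, for the optimal $r$ I would write $\Pr\{\|\bZ\|>r\}$ as the tail of a $\chi^2_n$ variable in the variable $r^2/\sigma^2$, differentiate the right-hand side in $r$, and set the derivative to zero; using $V_n=\pi^{n/2}/(\tfrac n2\Gamma(n/2))$ the transcendental-looking equation simplifies algebraically to $e^{n\NLD}\pi^{n/2}= (2\pi\sigma^2)^{-n/2}e^{-r^2/(2\sigma^2)}\cdot\sigma^{-? }$-type terms that reduce, after taking logarithms, to $r^2=\sigma^2 n\big(1+2\NLD^*-2\NLD\big)$, recalling $\NLD^*=\tfrac12\log\tfrac1{2\pi e\sigma^2}$. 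I do not expect any genuine obstacle here: the only subtlety is confirming that the Minkowski-Hlawka hypotheses (bounded support, integrability) are met, which is immediate, and that the decoder's error event is correctly partitioned — the rest is routine Fubini and one-variable calculus.
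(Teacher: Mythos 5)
Your proposal is correct and follows essentially the same route as the paper's proof: the same typicality decoder, the same decomposition of the error event into the noise leaving $\ball{r}$ plus a union-bounded collision term, the same application of the Minkowski--Hlawka--Siegel theorem (justified by the bounded support $\|\lambda\|\le 2r$), the same Fubini/translation computation yielding $V_n r^n$, and the same one-variable calculus with the $\chi^2_n$ tail to obtain $r^*=\sigma\sqrt{n(1+2\NLD^*-2\NLD)}$. No gaps to report.
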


\subsection{ML Decoder Based Bound}

The second achievability bound is based on the ML decoder (using a different technique than Poltyrev \cite{Poltyrev94_CodingWithoutRestrictions}):

\begin{theorem}\label{thm:AchievabilityMaxLike}
For any $r>0$ and dimension $n$, there exist a lattice $\Lambda$ with error probability
\begin{equation}\label{eqn:AchievabilityMaxLike}
  P_e(n,\NLD) \leq P_e^{MLB}(n,\NLD) \triangleq e^{n\NLD} V_n \int_0^r f_R(\tilde r) \tilde r^n d\tilde r + \Pr\left\{ \|\bZ\| > r \right\},
\end{equation}

and the optimal value for $r$ is given by
\begin{equation}
    r^* = \reff = e^{-\NLD}V_n^{-1/n}.
\end{equation}
\end{theorem}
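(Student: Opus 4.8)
The plan is to mimic the structure of the typicality-bound proof (Theorem~\ref{thm:AchievabilityTypicality}) but replace the `typicality' decoder by the ML decoder, so that the error region around a lattice point is a Voronoi cell rather than a ball. First I would fix a radius $r>0$ and, for a lattice $\Lambda$, split the error event according to whether the noise $\bZ$ falls inside $\ball{r}$ or not: when $\bZ \notin \ball{r}$ we bound the conditional error probability by $1$, contributing the term $\Pr\{\|\bZ\|>r\}$; when $\bZ \in \ball{r}$, an ML decoding error (assuming $0$ was sent) occurs only if there is some $\lambda \in \Lambda\setminus\{0\}$ with $\|\bZ-\lambda\| \le \|\bZ\|$. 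Hence, by the union bound,
\begin{equation*}
P_e(\Lambda) \le \Pr\{\bZ\notin\ball{r}\} + \sum_{\lambda\in\Lambda\setminus\{0\}} \Pr\left\{\bZ\in\ball{r}\ \text{and}\ \|\bZ-\lambda\|\le\|\bZ\|\right\}.
\end{equation*}

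Next I would apply the Minkowski--Hlawka theorem (Theorem~\ref{thm:MH}) to the summand $f(\lambda) \triangleq \Pr\{\bZ\in\ball{r},\ \|\bZ-\lambda\|\le\|\bZ\|\}$, which is nonnegative, integrable, and supported on $\|\lambda\|\le 2r$ (if $\|\lambda\|>2r$ then $\|\bZ\|<r$ forces $\|\bZ-\lambda\|>r>\|\bZ\|$). MH then guarantees a lattice $\Lambda$ of density $\gamma=e^{n\NLD}$ with
\begin{equation*}
\sum_{\lambda\in\Lambda\setminus\{0\}} f(\lambda) \le \gamma \int_{\Reals^n} f(\lambda)\,d\lambda = \gamma \int_{\Reals^n}\int_{\ball{r}} f_\bZ(\bz)\,\mathds{1}\{\|\bz-\lambda\|\le\|\bz\|\}\,d\bz\,d\lambda.
\end{equation*}
Interchanging the order of integration, the inner integral over $\lambda$ becomes the volume of $\{\lambda : \|\bz-\lambda\|\le\|\bz\|\}$, i.e. the ball of radius $\|\bz\|$ centered at $\bz$, which has volume $V_n\|\bz\|^n$. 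Therefore the double integral equals $V_n\int_{\ball{r}} f_\bZ(\bz)\|\bz\|^n\,d\bz$, and passing to the radial coordinate $\tilde r = \|\bz\|$ (with density $f_R$) gives $V_n\int_0^r f_R(\tilde r)\tilde r^n\,d\tilde r$. This yields \eqref{eqn:AchievabilityMaxLike}.

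Finally I would optimize over $r$. Differentiating the RHS of \eqref{eqn:AchievabilityMaxLike} with respect to $r$ gives
\begin{equation*}
e^{n\NLD}V_n f_R(r) r^n - f_R(r) = 0,
\end{equation*}
since $\frac{d}{dr}\Pr\{\|\bZ\|>r\} = -f_R(r)$; this collapses to $e^{n\NLD}V_n r^n = 1$, i.e. $r^* = e^{-\NLD}V_n^{-1/n} = \reff$, and the positivity of $f_R$ on $(0,\infty)$ together with the sign change confirms this is the minimizer. The main obstacle is the geometric/measure-theoretic step identifying $\{\lambda : \|\bz-\lambda\|\le\|\bz\|\}$ as a translated ball of radius $\|\bz\|$ and justifying the Fubini interchange; once that is in place the rest is bookkeeping. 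A secondary point worth a remark is that the bound uses only the ``ML error $\Rightarrow$ some competitor at least as close'' containment, so the right term is an overcount of the true Voronoi-exit probability, yet — as the paper notes — it still recovers exactly Poltyrev's bound value, so nothing is lost.
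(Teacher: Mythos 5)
Your proposal is correct and follows essentially the same route as the paper's proof: split on whether $\bZ$ lies in $\ball{r}$ (bounding the outside event by $\Pr\{\|\bZ\|>r\}$), union bound over nonzero lattice points, apply Minkowski--Hlawka to the summand supported on $\|\lambda\|\le 2r$, identify $\{\lambda:\|\bz-\lambda\|\le\|\bz\|\}$ as a ball of volume $V_n\|\bz\|^n$, and then pass to the radial density $f_R$; the optimization over $r$ is identical. The only difference is cosmetic — the paper conditions on $\|\bZ\|=r$ and integrates radially before invoking MH, whereas you radialize at the end — so nothing substantive separates the two arguments.
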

Before the proof, note that this specific value for $r$ gives a new interpretation to the bound: the term $\Pr\left\{ \|\bZ\| > r \right\}$ is exactly the sphere bound \eqref{eqn:SphereBound1}, and the other term can be thought of as a `redundancy' term. Making this value small results in tightening of the gap between the bounds.

\begin{proof}
    Suppose that the zero lattice point was sent, and the noise vector is $\bz\in\Reals^n$. An error event occurs (for a ML decoder) when there is a nonzero lattice point $\lambda \in \Lambda$ whose Euclidean distance to $\bz$ is less than the distance between the zero point and noise vector.
    We denote by $\Err$ the error event, condition on the radius $R$ of the noise vector and get
    \begin{align}
        P_e(\Lambda) &= \Pr\{\Err\} = \nonumber\\
        &= \EE_R\left[\Pr\left\{\Err \mid\|\bZ\|=R\right\}\right]\nonumber\\
        &= \int_0^\infty f_R(r) \Pr\left\{\Err \mid\|\bZ\|=r\right\} dr \nonumber\\
        &\leq \int_0^{r^*} f_R(r) \Pr\left\{\Err \mid\|\bZ\|=r\right\} dr + \Pr\{\|\bZ\|>r^*\},\label{eqn:new_bound_pe1}
    \end{align}
    where the last inequality follows by upper bounding the probability by $1$. It holds for any $r^*>0$. %

    We examine the conditional error probability $\Pr\left\{\Err \mid\|\bz\|=r\right\}$:
   \begin{align}
    \Pr\left\{\Err \mid\|\bZ\|=r\right\}&=\Pr\left\{\bigcup_{\lambda \in \Lambda \setminus \{0\}} \|\bZ-\lambda\| \leq \|\bZ\|\ \middle|\  \|\bZ\|=r\right\}\nonumber\\
    &\leq \sum_{\lambda \in \Lambda \setminus \{0\}}  \Pr\left\{\|\bZ-\lambda\| \leq \|\bZ\|\ \middle|\  \|\bZ\|=r\right\}\nonumber\\
    &= \sum_{\lambda \in \Lambda \setminus \{0\}}  \Pr\left\{\lambda \in \ballx{\bZ}{\|\bZ\|}\ \middle|\  \|\bZ\|=r\right\},
   \end{align}
   where the inequality follows from the union bound. Plugging into the left term in \eqref{eqn:new_bound_pe1} gives
   \begin{align}
        &\int_0^{r^*} f_R(r)\sum_{\lambda \in \Lambda \setminus \{0\}}  \Pr\left\{\lambda \in \ballx{\bZ}{\|\bZ\|}\ \middle|\  \|\bZ\|=r\right\}dr \nonumber\\
        =& \sum_{\lambda \in \Lambda \setminus \{0\}}   \int_0^{r^*} f_R(r) \Pr\left\{\lambda \in \ballx{\bZ}{\|\bZ\|}\ \middle|\  \|\bZ\|=r\right\}dr.
   \end{align}

    Note that the last integral has a bounded support (w.r.t. $\lambda$) - it is always zero if $\|\lambda\|\geq 2r^*$. Therefore we can apply the Minkowski-Hlawka theorem as in Theorem \ref{thm:AchievabilityTypicality} and get that for any $\gamma>0$ there exists a lattice $\Lambda$ with density $\gamma$, whose error probability is upper bounded by
    \begin{align*}
        P_e(\Lambda) & \leq \gamma\int_{\lambda\in\Reals^n} \int_0^{r^*} f_R(r) \Pr\left\{\lambda \in \ballx{\bZ}{\|\bZ\|}\ \middle|\  \|\bZ\|=r\right\}drd\lambda +\Pr\{\|\bZ\|>r^*\}.
    \end{align*}
    We continue with
\begin{align*}
    &\int_{\lambda\in\Reals^n} \int_0^{r^*} f_R(r) \Pr\left\{\lambda \in \ballx{\bZ}{\|\bZ\|}\ \middle|\  \|\bZ\|=r\right\}drd\lambda\\
    &= \int_0^{r^*} f_R(r) \int_{\lambda\in\Reals^n} \Pr\left\{\lambda \in \ballx{\bZ}{\|\bZ\|}\ \middle|\  \|\bZ\|=r\right\}d\lambda dr\\
    &= \int_0^{r^*} f_R(r) \int_{\lambda\in\Reals^n}
    \EE\left[1_{\{\lambda \in \ballx{\bZ}{\|\bZ\|}\}}\ \middle|\  \|\bZ\|=r\right] d\lambda dr\\
    &= \int_0^{r^*} f_R(r) \EE\left[\int_{\lambda\in\Reals^n} 1_{\{\lambda \in \ballx{\bZ}{\|\bZ\|}\}}d\lambda\ \middle|\  \|\bZ\|=r\right]  dr\\
    &= \int_0^{r^*} f_R(r) \EE\left[\|\bZ\|^n V_n \middle|\  \|\bZ\|=r\right]  dr\\
    &=  V_n \int_0^{r^*} f_R(r) r^n dr,
\end{align*}
and we obtain \eqref{eqn:AchievabilityMaxLike}.

To find the optimal value for $r$ (the one that minimizes the RHS of \eqref{eqn:AchievabilityMaxLike}), we see that:
\begin{equation}
    \Pr\left\{ \|\bZ\| > r \right\} = \int_r^\infty f_R(\tilde r) d\tilde r.
\end{equation}
Differentiating the RHS of \eqref{eqn:AchievabilityMaxLike} w.r.t. $r$ in order to find the minimum gives
\begin{equation}
    e^{n\NLD}V_n f_R(r) r^n - f_R(r) =0,
\end{equation}
and $r^* = \reff = e^{-\NLD}V_n^{-1/n}$ immediately follows.
\end{proof}

\subsection{Equivalence of the ML bound with Poltyrev's bound}
In Theorems \ref{thm:AchievabilityTypicality} and \ref{thm:AchievabilityMaxLike} we provided a new upper bounds on the error probability that were simpler than Poltyrev's original bound (Theorem \ref{thm:AchievabilityPoltyrev}). For example, in order to compute Poltyrev's bound, one has to apply 3D numerical integration, and numerically optimize w.r.t. $r$. In contrast, both new bounds requires only a single integration, and the optimal value for $r$ has a closed-form expression so no numerical optimization is required.

It appears that the simplicity of the bound in Theorem \ref{thm:AchievabilityMaxLike} does not come at a price of a weaker bound. In fact, it proves to be equivalent to Poltyrev's bound:
\begin{theorem}\label{thm:Equivalence}
    Poltyrev's bound (Theorem \ref{thm:AchievabilityPoltyrev}) for the error probability, for the optimal value of $r$, is equal to the ML bound from Theorem \ref{thm:AchievabilityMaxLike}:
    \begin{align}
        &\min_{r>0} \left\{e^{n\NLD} n V_n \int_0^{2r} w^{n-1}\Pr\{\bZ \in D(r,w)\} dw + \Pr\{\|\bZ\|>r\}\right\}\nonumber\\
        =&e^{n\NLD} V_n \int_0^{r^*} f_R(\rho) \rho^n d\rho + \Pr\left\{ \|\bZ\| > r^* \right\},\label{eqn:BoundsEquivalence}
    \end{align}
where $r^* = \reff = e^{-\NLD}V_n^{-1/n}$.

In fact, we can strengthen \eqref{eqn:BoundsEquivalence} and show that
\begin{align}
    \gamma n V_n \int_0^{2r} w^{n-1}\Pr\{\bZ \in D(r,w)\} dw  = \gamma V_n \int_0^r f_R(\rho) \rho^n d\rho \label{eqn:BoundsEquivalenceStrong}
\end{align}
for any $r>0$.
\begin{proof} Appendix~\ref{app:Equivalence}.\end{proof}
\end{theorem}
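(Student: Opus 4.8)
The key identity to establish is \eqref{eqn:BoundsEquivalenceStrong}, namely that for every $r>0$,
\begin{equation*}
n V_n \int_0^{2r} w^{n-1}\Pr\{\bZ \in D(r,w)\}\, dw = V_n \int_0^r f_R(\rho)\, \rho^n\, d\rho,
\end{equation*}
since \eqref{eqn:BoundsEquivalence} then follows by adding $\Pr\{\|\bZ\|>r\}$ to both sides, substituting $r = r^* = \reff$ on the right (which Theorem~\ref{thm:AchievabilityMaxLike} already identifies as the minimizer of the ML bound), and noting that the left side at its own optimal $r$ cannot be smaller than its value at $r^*$. So the whole theorem reduces to a clean geometric/probabilistic identity, and that is what I would spend the appendix on.

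**How I would prove the identity.** Both sides are, up to the common factor $\gamma V_n$ (or $n V_n$), computing the same quantity in two different orders of integration: the expected number of nonzero lattice points that cause an ML decoding error, integrated against the Minkowski--Hlawka measure. The cleanest route is to go back to the common integral $\int_{\Reals^n}\!\int_0^{r} f_R(\rho)\Pr\{\lambda\in\ballx{\bZ}{\|\bZ\|}\mid\|\bZ\|=\rho\}\,d\rho\,d\lambda$ from the proof of Theorem~\ref{thm:AchievabilityMaxLike}. Evaluating it by first integrating over $\lambda$ (as done there) gives the right-hand side $V_n\int_0^r f_R(\rho)\rho^n d\rho$. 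Evaluating it instead by first integrating over $\rho$ and the noise direction — i.e. writing $\Pr\{\lambda\in\ballx{\bZ}{\|\bZ\|}\mid\|\bZ\|=\rho\}$ as a probability over the uniform distribution of $\bZ$ on the sphere of radius $\rho$, and then changing variables from $\lambda$ to $(w,\text{position on the bisecting hyperplane})$ where $w=\|\lambda\|$ — should reproduce Poltyrev's $n V_n \int_0^{2r} w^{n-1}\Pr\{\bZ\in D(r,w)\}dw$: the event $\{\|\bZ-\lambda\|\le\|\bZ\|,\ \|\bZ\|\le r\}$ is exactly $\{\bZ\in D(r,w)\}$ after rotating so that $\lambda$ points along a fixed axis (the hyperplane at distance $w/2$ is the perpendicular bisector of $0$ and $\lambda$), and the $n V_n w^{n-1}$ is the surface-area Jacobian of the shell $\|\lambda\|=w$. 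Spelling out this change of variables carefully — tracking the Jacobian and the conditioning on $\|\bZ\|=\rho$ versus the free radius in $D(r,w)$ — is the substance of the argument.

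**Main obstacle.** The delicate point is matching the two parametrizations precisely: on the ML side the noise radius $\rho$ is integrated against $f_R$ with the point $\lambda$ free in $\Reals^n$, whereas on Poltyrev's side the object $D(r,w)$ is a fixed spherical cap of a radius-$r$ ball and the integration variable is $w=\|\lambda\|$. Reconciling these requires (i) using spherical symmetry of $\bZ$ to replace the conditional probability given $\|\bZ\|=\rho$ by an unconditional statement about $\bZ$ restricted to $\ball{r}$, and (ii) a Fubini/co-area argument that turns $\int_{\Reals^n}(\cdots)\,d\lambda$ into $\int_0^{2r} n V_n w^{n-1}(\cdots)\,dw$ while verifying that the inner probability is indeed $\Pr\{\bZ\in D(r,w)\}$ with the cap taken relative to the full radius-$r$ ball. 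Once the dictionary between "$\bZ$ conditioned on radius, $\lambda$ free" and "$\bZ$ free in the ball, $\lambda$ on a shell" is set up correctly, the rest is the bookkeeping of constants $n V_n$ versus $V_n$, which is exactly the $V_n r^n = n V_n \int_0^r w^{n-1}dw$ relation in disguise; I would present this as a short lemma and then assemble \eqref{eqn:BoundsEquivalenceStrong} and \eqref{eqn:BoundsEquivalence} from it.
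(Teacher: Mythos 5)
Your argument is correct, but it takes a genuinely different route from the paper's. The paper proves \eqref{eqn:BoundsEquivalenceStrong} by direct computation: it writes $\Pr\{\bZ \in D(r,w)\}$ explicitly as $\int_{w/2}^r f_Z(z)\int_0^{\sqrt{r^2-z^2}/\sigma} f_{\chi_{n-1}}(t)\,dt\,dz$ (first coordinate Gaussian, remaining $n-1$ coordinates a $\chi_{n-1}$ radius), substitutes this into Poltyrev's expression, swaps the order of the $w$ and $z$ integrations, passes to polar coordinates $u=\rho\cos\theta$, $t=\rho\sin\theta$, and invokes the beta-type identity $\int_0^{\pi/2}\cos^n\theta\sin^{n-2}\theta\,d\theta = 2^{-n}\sqrt{\pi}\,\Gamma(\tfrac{n-1}{2})/\Gamma(\tfrac{n}{2})$ to land on $\int_0^r f_R(\rho)\rho^n d\rho$. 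You instead exhibit both sides as two Fubini evaluations of the single double integral $\int_{\Reals^n}\int_0^{r} f_R(\rho)\Pr\{\lambda\in\ballx{\bZ}{\|\bZ\|}\mid\|\bZ\|=\rho\}\,d\rho\,d\lambda$ already present in the proof of Theorem~\ref{thm:AchievabilityMaxLike}: integrating over $\lambda$ first gives the ML side, while for fixed $\lambda$ with $\|\lambda\|=w$ the law of total probability and spherical symmetry turn the inner integral into $\Pr\{\|\bZ-\lambda\|\le\|\bZ\|,\ \|\bZ\|\le r\}=\Pr\{\bZ\in D(r,w)\}$ (the bisecting-hyperplane observation), so polar integration in $\lambda$ with shell Jacobian $nV_nw^{n-1}$, and the vanishing of the probability for $w>2r$, gives Poltyrev's side. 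Your route buys conceptual transparency — it explains \emph{why} the two bounds coincide (they are the same union-bound-plus-Minkowski--Hlawka quantity parametrized two ways) and avoids all special-function manipulation — whereas the paper's calculation is self-contained and yields, as a by-product, the explicit formula for $\Pr\{\bZ\in D(r,w)\}$. Two minor points to tidy in a write-up: the change of variables for $\lambda$ is just polar coordinates $(w,\text{direction})$, not $(w,\text{position on the bisecting hyperplane})$ as you phrased it (your stated Jacobian is nonetheless the right one); and the passage from \eqref{eqn:BoundsEquivalenceStrong} to \eqref{eqn:BoundsEquivalence} is simply that two identical functions of $r$ have identical minima, the common minimum being attained at $r^*=\reff$ by Theorem~\ref{thm:AchievabilityMaxLike}.
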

Note that proving \eqref{eqn:BoundsEquivalenceStrong} shows that both bounds are equivalent, regardless of the value of $r$. Consequently, the optimal value for $r$ in Poltyrev's bound is also found. In \cite{Poltyrev94_CodingWithoutRestrictions} the optimal value (denoted there $\overline d_c^*(n,\delta)$) was given as the solution to an integral equation, and was only evaluated asymptotically.

\subsection{The Sphere Bound for Finite Dimensional Infinite Constellations}\label{ssec:SphereBoundIC}

The sphere bound \eqref{eqn:SphereBoundConstantVoronoi} applies to infinite constellations with fixed Voronoi cell volume. Poltyrev \cite{Poltyrev94_CodingWithoutRestrictions} extended it to general IC's with the aid of an \emph{expurgation} process, without harming the tightness of the error exponent bound. When the dimension $n$ is finite, the expurgation process incurs a non-negligible loss (a factor of 2 in the error probability and in the density). In this section we show that the sphere bound applies \emph{without any loss} to general finite dimensional IC's and average error probability.

We first concentrate on IC's with some mild regularity assumptions:
\begin{defn}[Regular IC's]\label{def:regularIC}
An IC $\S$ is called \emph{regular}, if:
\begin{enumerate}
  \item There exists a radius $r_0>0$, s.t. for all $s\in\S$, the Voronoi cell $W(s)$ is contained in $\ballx{s}{r_0}$.
  \item The density $\gamma(\S)$ is given by $\lim_{a\ra\infty} \frac{M(\S,a)}{a^n}$ (rather than $\limsup$ in the original definition).
\end{enumerate}
\end{defn}

For $s\in\S$, we denote by $v(s)$ the volume of the Voronoi cell of $s$, $|W(s)|$.

\begin{defn}[Average Voronoi cell volume]
For a regular IC $\S$, the average Voronoi cell volume is defined by
\begin{equation}
    v(\S) \triangleq \limsup_{a\ra\infty} \EE_a[ v(s)].
\end{equation}
\end{defn}

\begin{lem}\label{lem:avgVol}
For a regular IC $\S$, the average volume is given by the inverse of the density:
\begin{equation}
    \gamma(\S) = \frac{1}{v(\S)}.
\end{equation}
\begin{proof} Appendix \ref{app:AvgVol}.
\end{proof}
\end{lem}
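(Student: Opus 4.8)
The plan is to evaluate $\sum_{s\in\S\cap\cube{a}} v(s) = M(\S,a)\,\EE_a[v(s)]$ by a volume‑tiling argument, squeeze it between the volumes of two nearby cubes, divide by $a^n$, and let $a\to\infty$. The only structural facts needed are that the Voronoi cells of a discrete point set tile $\Reals^n$ up to a Lebesgue‑null set (the shared boundaries of neighboring cells), and the two regularity properties of $\S$.

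First I would isolate the two geometric containments. Regularity provides a radius $r_0$ with $W(s)\subseteq\ballx{s}{r_0}$ for every $s\in\S$. If $s\in\cube{a}$, then any $\bx$ with $\|\bx-s\|<r_0$ satisfies $|x_i-s_i|<r_0$, hence $|x_i|<a/2+r_0$, so $W(s)\subseteq\cube{a+2r_0}$. Conversely, for $a>2r_0$ and almost every $\bx\in\cube{a-2r_0}$ there is an $s$ with $\bx\in W(s)$, and then $\|\bx-s\|<r_0$ forces $|s_i|<a/2$, i.e.\ $s\in\cube{a}$. Since the cells $\{W(s):s\in\S\cap\cube{a}\}$ are essentially disjoint, their union contains $\cube{a-2r_0}$ up to a null set and is contained in $\cube{a+2r_0}$; comparing volumes gives
\[
(a-2r_0)^n \;\le\; M(\S,a)\,\EE_a[v(s)] \;\le\; (a+2r_0)^n \qquad (a>2r_0).
\]

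Dividing by $a^n$ yields $(1-2r_0/a)^n\le \tfrac{M(\S,a)}{a^n}\EE_a[v(s)]\le(1+2r_0/a)^n$, so by the squeeze theorem $\tfrac{M(\S,a)}{a^n}\EE_a[v(s)]\to 1$ as $a\to\infty$. The same inequality, together with the crude bound $v(s)\le V_n r_0^n$, gives $M(\S,a)\ge (a-2r_0)^n/(V_n r_0^n)$, hence $\gamma(\S)\ge 1/(V_n r_0^n)>0$; $\gamma(\S)$ is finite since $\S$ has finite density. By the second regularity condition $\tfrac{M(\S,a)}{a^n}\to\gamma(\S)\in(0,\infty)$, and therefore $\EE_a[v(s)]\to 1/\gamma(\S)$. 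In particular the $\limsup$ defining $v(\S)$ is an honest limit equal to $1/\gamma(\S)$, which is the claim.

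The only delicate point is the measure‑zero bookkeeping in the tiling step: one must check that $\bigcup_{s\in\S}W(s)=\Reals^n$ up to the union of the cell boundaries, which is a null set, so that "sum of cell volumes $=$ volume of the union" holds exactly and the uncovered points contribute nothing to either inequality. Everything else is elementary; the two regularity hypotheses are exactly what is used (the $\limsup$‑to‑$\lim$ upgrade to pass to the limit, and the uniformly bounded cells to make the cube sandwich work).
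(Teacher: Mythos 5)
Your argument is correct and is essentially the same as the paper's proof in Appendix~\ref{app:AvgVol}: the cube sandwich $(a-2r_0)^n \le \sum_{s\in\S\cap\cube{a}} v(s) \le (a+2r_0)^n$ obtained from the bounded-cell assumption, followed by dividing by $a^n$ and invoking the existence of the density limit. Your additional care with null-set bookkeeping and the explicit positivity bound $\gamma(\S)\ge 1/(V_n r_0^n)$ are fine refinements but not a different route.
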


For brevity, let $\SPB(v)$ denote the probability that the noise vector $\bZ$ leaves a sphere of volume $v$. With this notation, the sphere bound reads
\begin{equation}
    P_e(s) \geq \SPB(v(s)),
\end{equation}
and holds for any individual point $s\in \S$. We also note the following:
\begin{lem}\label{lem:SPBconvexity}
The equivalent sphere bound $\SPB(v)$ is convex in the Voronoi cell volume $v$.
\proof Appendix \ref{app:SPBconvexity}.
\end{lem}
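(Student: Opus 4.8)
The plan is to show directly that $\SPB$ is continuously differentiable on $(0,\infty)$ with a nondecreasing derivative, by computing $\SPB'(v)$ in closed form.

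First I would parametrize a ball of volume $v$ by its radius $r(v) = (v/V_n)^{1/n}$, where $V_n$ is the volume of the $n$-dimensional unit ball, so that by spherical symmetry of the noise
\begin{equation}
  \SPB(v) = \Pr\{\|\bZ\| > r(v)\} = \int_{r(v)}^{\infty} f_R(r)\,dr .
\end{equation}
Since $v \mapsto r(v)$ is smooth and strictly increasing on $(0,\infty)$ and $f_R$ is continuous, the fundamental theorem of calculus together with the chain rule yields
\begin{equation}\label{eqn:SPBderiv1}
  \SPB'(v) = -\,f_R(r(v))\, r'(v), \qquad r'(v) = \frac{1}{n V_n}\,r(v)^{1-n}.
\end{equation}

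Next I would exploit the product form of $f_R$. Writing $g(r) \triangleq (2\pi\sigma^2)^{-n/2}\, e^{-r^2/(2\sigma^2)}$ for the value the $n$-dimensional noise density takes on the sphere of radius $r$, we have $f_R(r) = n V_n r^{n-1} g(r)$ (the surface area $n V_n r^{n-1}$ of that sphere times the constant density on it). Substituting into \eqref{eqn:SPBderiv1}, the factor $n V_n$ and all powers of $r(v)$ cancel, leaving
\begin{equation}\label{eqn:SPBderiv2}
  \SPB'(v) = -\,g(r(v)) = -\,(2\pi\sigma^2)^{-n/2}\exp\!\left(-\frac{(v/V_n)^{2/n}}{2\sigma^2}\right).
\end{equation}
The right-hand side of \eqref{eqn:SPBderiv2} is an increasing function of $v$: $r(v)$ increases with $v$, and $g$ is strictly decreasing in its argument, so $-g(r(v))$ increases. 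Hence $\SPB'$ is nondecreasing on $(0,\infty)$, i.e.\ $\SPB$ is convex, as claimed.

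I do not expect a genuine obstacle; the only steps needing a word of justification are differentiating under the (improper) integral and at its variable lower limit --- immediate from continuity of $f_R$ and smoothness of $r(v)$ --- and the elementary identity $f_R(r) = nV_n r^{n-1} g(r)$. The computation also makes the result transparent: increasing $v$ by $dv$ deletes from the integration region a thin spherical shell of volume $dv$ at radius $r(v)$, on which the noise density is the constant $g(r(v))$; thus $\SPB$ decreases at rate $g(r(v))$, and this rate itself decreases as the shell moves outward --- which is precisely convexity.
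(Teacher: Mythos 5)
Your proof is correct and follows essentially the same route as the paper: the paper rewrites $\SPB(v)$ as a chi-square tail integral $C_2 F(C_1 v)$ and checks $F''>0$ by direct differentiation, which is exactly the same computation as your identity $\SPB'(v) = -(2\pi\sigma^2)^{-n/2}\exp\bigl(-r(v)^2/(2\sigma^2)\bigr)$ followed by monotonicity in $v$. Your boundary-shell interpretation is a pleasant way to see why the derivative is minus the Gaussian density on the bounding sphere, but it is not a substantively different argument.
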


We now show that the above equation holds for the average volume and error probability as well.
\begin{theorem}\label{thm:ConverseRegular}
Let $\S$ be a regular (finite dimensional) IC with NLD $\NLD$, and let $v(\S)$ be the average Voronoi cell volume of $\S$ (so the density of $\S$ is $\gamma = v(\S)^{-1}$). Then the average error probability of $\S$ is lower bounded by
\begin{equation}
    P_e(\S) \geq \SPB(v(\S)) = \SPB(\gamma^{-1}) = P_e^{SB}(n,\NLD).
\end{equation}
\begin{proof}
We start with the definition of the average error probability and get
\begin{align}
P_e(S) &= \limsup_{a\ra\infty} \EE_a[P_e(s)] \nonumber\\
 &\overset{(a)}{\geq} \limsup_{a\ra\infty}\EE_a[\SPB(v(s))]\nonumber\\
 &\overset{(b)}{\geq} \limsup_{a\ra\infty} \SPB(\EE_a[v(s)])\nonumber\\
 &\overset{(c)}{=} \SPB(\limsup_{a\ra\infty} \EE_a[v(s)])\nonumber\\
 &= \SPB(v(\S)).
\end{align}

$(a)$ follows from the sphere bound for each individual point $s\in\S$,
$(b)$ follows from the Jensen inequality and the convexity of $\SPB(\cdot)$ (Lemma \ref{lem:SPBconvexity}), and
$(c)$ follows from the fact that $\SPB(\cdot)$ is continuous.
\end{proof}
\end{theorem}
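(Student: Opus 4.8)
The plan is to obtain the bound by combining the pointwise sphere bound (Theorem~\ref{thm:SphereBound}) with a convexity/Jensen argument, thereby avoiding Poltyrev's lossy expurgation. Starting from the definition $P_e(\S) = \limsup_{a\ra\infty}\EE_a[P_e(s)]$, I would invoke $P_e(s)\ge \SPB(v(s))$, which already holds for every individual code point, to get $P_e(\S)\ge \limsup_{a\ra\infty}\EE_a[\SPB(v(s))]$. The remaining work is to push the cube-average $\EE_a[\cdot]$ inside $\SPB(\cdot)$ and to identify the limiting average volume with $\gamma^{-1}$.

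Two auxiliary facts do most of the work. The first is that $\SPB(\cdot)$ is convex in the cell volume $v$ (Lemma~\ref{lem:SPBconvexity}): writing $\SPB(v) = \int_{\rho(v)}^{\infty} f_R(r')\,dr'$, where $\rho(v) = (v/V_n)^{1/n}$ is the radius of the equal-volume sphere, convexity follows by differentiating twice in $v$ (or by a change of variables that exposes the integrand). Given convexity, Jensen's inequality yields $\EE_a[\SPB(v(s))] \ge \SPB(\EE_a[v(s)])$. The second fact is Lemma~\ref{lem:avgVol}, stating that for a regular IC the average Voronoi-cell volume is the reciprocal of the density, $v(\S) = \gamma^{-1}$; I would prove this by counting points and volumes inside $\cube{a}$ and using the regularity hypotheses — all cells contained in a ball of a fixed radius $r_0$, and the density being a genuine limit — to show the boundary contribution is negligible as $a\ra\infty$.

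The last step, and the one I expect to be the main obstacle, is the interchange $\limsup_{a\ra\infty}\SPB(\EE_a[v(s)]) = \SPB\!\left(\limsup_{a\ra\infty}\EE_a[v(s)]\right) = \SPB(v(\S))$. This is not valid for arbitrary functions, so I would rely on $\SPB$ being continuous and monotone in $v$ together with the regularity assumptions, which make $\EE_a[v(s)]$ a bounded sequence (since $v(s)\le V_n r_0^n$) that is controlled tightly enough for the $\limsup$ to pass through $\SPB$; one must be careful about the direction of monotonicity of $\SPB$ so as not to produce an inequality the wrong way. Chaining everything then gives $P_e(\S) \ge \SPB(v(\S)) = \SPB(\gamma^{-1}) = P_e^{SB}(n,\NLD)$. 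Finally I would add a remark indicating how to extend the conclusion from regular IC's to arbitrary ones by approximation, which is needed for the full generality claimed in the introduction.
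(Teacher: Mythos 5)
Your proposal is correct and follows essentially the same route as the paper's proof: the pointwise sphere bound, Jensen's inequality via the convexity of $\SPB(\cdot)$ (Lemma~\ref{lem:SPBconvexity}), and identification of the limiting average volume with $\gamma^{-1}$ through Lemma~\ref{lem:avgVol}. The interchange step you flag as the main obstacle is in fact unproblematic: the proof of Lemma~\ref{lem:avgVol} shows that for a regular IC the sequence $\EE_a[v(s)]$ actually converges (not merely has a $\limsup$), so continuity of $\SPB$ alone justifies step $(c)$; and since $\SPB$ is decreasing in $v$, even a bare $\limsup$ would yield an inequality in the required direction.
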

As a consequence, we get that the sphere bound holds for regular IC's as well, without the need for expurgation (as in \cite{Poltyrev94_CodingWithoutRestrictions}).

\bigskip

So far the discussion was constrained to regular IC's only. This excludes constellations with infinite Voronoi regions (e.g. contains points only in half of the space), and also constellations in which the density oscillates with the cube size $a$ (and the formal limit $\gamma$ does not exist). We now extend the proof of the converse for any IC, without the regularity assumptions. The proof is based on the following regularization process:
\begin{lem}[Regularization]\label{lem:regularization}
Let $\S$ be an IC with density $\gamma$ and average error probability $P_e(\S)=\eps$. Then for any $\xi>0$ there exists a \emph{regular} IC $\S'$ with density $\gamma' \geq \gamma / (1+\xi)$, and average error probability $P_e(\S')=\eps'\leq \eps(1+\xi)$.
\begin{proof} Appendix \ref{app:regularization}.
\end{proof}
\end{lem}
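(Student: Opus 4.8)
The plan is to produce $\S'$ by \emph{periodization}: fix a large cube $\cube{b}$, keep only the finite pattern $\S\cap\cube{b}$, and tile $\Reals^n$ with copies of it, i.e. set $\S'\triangleq(\S\cap\cube{b})+b\Integers^n$. Any such periodic point set is automatically regular in the sense of Definition~\ref{def:regularIC}: its density equals $M(\S,b)/b^n$ as a genuine limit, not merely a $\limsup$, and since each translated fundamental cell contains at least one point of $\S'$, every Voronoi cell $W'(s)$ is contained in $\ballx{s}{\sqrt n\,b}$, so property~1 holds with $r_0=\sqrt n\,b$. The entire content of the lemma is thus to choose $b$ so that (i) $M(\S,b)/b^n$ stays above $\gamma/(1+\xi)$, and (ii) the ``seam'' introduced by the tiling inflates the average error probability by at most a factor $1+\xi$.

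For (ii) I would first fix a guard width $w>0$ so large that the noise tail is negligible relative to $\eps$ (we may assume $\eps>0$); say $\Pr\{\|\bZ\|>w/2\}\le\tfrac{\xi}{3}\eps$. The key geometric fact is that if $s\in\S$ satisfies $s+[-w,w]^n\subseteq\cube{b}$, i.e. $s\in\cube{b-2w}$, then the Voronoi cells of $s$ in $\S$ and in $\S'$ agree on $\ballx{s}{w/2}$: for $x\in\ballx{s}{w/2}$ the nearest neighbour of $x$ in either constellation lies within distance $<w/2$ of $x$, hence within $<w$ of $s$, hence inside $\cube{b}$, on which $\S$ and $\S'$ coincide (the remaining tiles of $\S'$ being disjoint from $\cube{b}$); comparing the two nearest-neighbour distances then forces the two neighbours to be the same point. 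Conditioning on $\{\|\bZ\|\le w/2\}$ gives $P_e'(s)\le P_e(s)+\Pr\{\|\bZ\|>w/2\}$ for every such ``interior'' point, while for the remaining ``boundary'' points of $\S\cap\cube{b}$ we use only $P_e'(s)\le1$. Summing over one period, dividing by $M(\S,b)$, and using that the average error probability of a periodic IC equals its per-period average, we obtain
\begin{equation}\label{eqn:reg-plan}
P_e(\S')\ \le\ \EE_b[P_e(s)]\ +\ \Pr\{\|\bZ\|>w/2\}\ +\ \Bigl(1-\tfrac{M(\S,b-2w)}{M(\S,b)}\Bigr).
\end{equation}

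It remains to pick $b$. Since $\gamma=\limsup_a M(\S,a)/a^n$ there are arbitrarily large $a_1$ with $M(\S,a_1)/a_1^n>\gamma-\eta$, for a small $\eta$ to be fixed, while $M(\S,a)/a^n<\gamma+\eta$ for all large $a$. The trick is to set $b\triangleq a_1+2w$, so that the \emph{shrunken} cube $\cube{b-2w}=\cube{a_1}$ already contains the near-maximal count $M(\S,a_1)$; then $M(\S,b)-M(\S,b-2w)=M(\S,a_1+2w)-M(\S,a_1)$ is at most $(\gamma+\eta)\bigl((a_1+2w)^n-a_1^n\bigr)+2\eta a_1^n$, and dividing by $M(\S,b)\ge M(\S,a_1)>(\gamma-\eta)a_1^n$ this ratio tends to $2\eta/(\gamma-\eta)$ as $a_1\to\infty$. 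Likewise $M(\S,b)/b^n\ge(\gamma-\eta)\bigl(1-\tfrac{2w}{a_1+2w}\bigr)^n\to\gamma-\eta$. Hence, choosing $\eta$ small enough in terms of $\xi,\gamma,\eps$ and then $a_1$ large enough — so that in addition $\EE_b[P_e(s)]\le\eps(1+\xi/3)$, which holds for all large $b$ because $\limsup_a\EE_a[P_e(s)]=\eps$ — the right-hand side of \eqref{eqn:reg-plan} is at most $\eps(1+\xi)$ while $M(\S,b)/b^n\ge\gamma/(1+\xi)$, so $\S'$ has the required properties.

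The main obstacle is precisely this last bookkeeping step: the density and the average error probability of $\S$ are each defined via a $\limsup$ over cube sizes, these two $\limsup$'s need not be realised along a common sequence, and there is no a~priori control on how many points of $\S$ sit in the thin shell $\cube{b}\setminus\cube{b-2w}$ excised by the guard region (an adversarial $\S$ could concentrate points there). Both difficulties are dissolved by the asymmetric choice $b=a_1+2w$ together with the monotonicity of $a\mapsto M(\S,a)$: one only ever needs a \emph{lower} bound on $M(\S,\cdot)$ at the single radius $a_1$ and the universal \emph{upper} bound $M(\S,a)\le(\gamma+\eta)a^n$ everywhere else. Verifying the two regularity properties of $\S'$ and the identity relating $P_e(\S')$ to the per-period average are then routine.
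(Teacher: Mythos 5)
Your construction is correct, and it follows the same overall strategy as the paper's proof in Appendix D (restrict $\S$ to a well-chosen finite cube and tile $\Reals^n$ periodically with the resulting pattern, which automatically gives a regular IC), but the seam handling and the bookkeeping differ in a real way. The paper tiles with explicit guard gaps: copies of $G=\S\cap\cube{a_*}$ are spaced $a_*+2\Delta$ apart, so decoding into a foreign copy forces the noise to cross a distance $\Delta$, and every point's error probability inflates additively by that single tail term ($P_e^{\S'}(s)\le P_e^G(s)+Q(\Delta/\sigma)$); the price is a density loss factor $\bigl(\tfrac{a_*}{a_*+2\Delta}\bigr)^n$, absorbed by requiring $a_*\ge a_\Delta$. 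You instead tile with period exactly $b$ and pay at the boundary rather than in the gap: interior points of $\cube{b-2w}$ keep their local Voronoi geometry (so their error probability inflates only by $\Pr\{\|\bZ\|>w/2\}$), while shell points are written off with error probability $1$, which forces you to control the shell count; that is where your asymmetric choice $b=a_1+2w$ and the eventual upper bound $M(\S,a)\le(\gamma+\eta)a^n$ come in. The paper's route never needs any upper bound on $M(\S,\cdot)$ — it only uses that the $\limsup$ defining $P_e(\S)$ eventually upper-bounds the cube averages while the $\limsup$ defining $\gamma$ supplies arbitrarily large cubes of near-maximal density (your $\eta$-bookkeeping resolves the same two-limsup issue, just with more arithmetic) — so its accounting is a bit lighter; your route avoids the guard gap entirely and bounds the cross-copy error by the correct $\chi$-tail $\Pr\{\|\bZ\|>w/2\}$ rather than the paper's one-dimensional $Q(\Delta/\sigma)$, which is arguably cleaner at that step. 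Both proofs share the remaining routine ingredients: the per-period average equals $P_e(\S')$ for a periodic IC, the density of $\S'$ exists as a true limit, and the Voronoi cells are bounded (your radius $\sqrt n\,b$ versus the paper's $\sqrt n\,(a_*+2\Delta)$), and both implicitly assume $\eps>0$, as you note.
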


\begin{theorem}[Sphere Bound for Finite Dimensional IC's]\label{thm:ConverseFiniteDimIC}
Let $\S$ be a finite dimensional IC with density $\gamma$. Then the average error probability of $\S$ is lower bounded by
\begin{equation}\label{eqn:ConverseFiniteDimIC}
    P_e(\S)  \geq \SPB(\gamma^{-1}) = P_e^{SB}(n,\NLD)
\end{equation}

\begin{proof}
Let $\xi>0$. By the regularization lemma (Lemma~\ref{lem:regularization}) there exists a regular IC $\S'$ with $\gamma' \geq \gamma / (1+\xi)$, and $P_e(\S')\leq P_e(\S)(1+\xi)$. We apply Theorem~\ref{thm:ConverseRegular} to $\S'$ and get that
\begin{equation}
     P_e(\S)(1+\xi) \geq P_e(\S') \geq \SPB(\gamma'^{-1}) \geq \SPB((1+\xi)\gamma^{-1}),
\end{equation}
or
\begin{equation}
     P_e(\S) \geq \frac{1}{1+\xi}\SPB((1+\xi)\gamma^{-1}),
\end{equation}
for all $\xi>0$. Since $\SPB(\cdot)$ is continuous, we may take the limit $\xi \ra 0$ and get to \eqref{eqn:ConverseFiniteDimIC}.
\end{proof}
\end{theorem}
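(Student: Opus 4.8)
The plan is to remove the regularity hypothesis from Theorem~\ref{thm:ConverseRegular} by an approximation argument driven entirely by the Regularization Lemma (Lemma~\ref{lem:regularization}). Fix $\xi>0$ and let $\S$ be an arbitrary finite dimensional IC with density $\gamma$ and average error probability $P_e(\S)$. Lemma~\ref{lem:regularization} supplies a \emph{regular} IC $\S'$ whose density $\gamma'$ satisfies $\gamma'\geq \gamma/(1+\xi)$ and whose average error probability obeys $P_e(\S')\leq (1+\xi)P_e(\S)$. Since $\S'$ is regular, Theorem~\ref{thm:ConverseRegular} applies verbatim to it and gives $P_e(\S')\geq \SPB\!\big((\gamma')^{-1}\big)$.

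Next I would convert the density bound on $\S'$ into a bound on the equivalent sphere probability. From $\gamma'\geq \gamma/(1+\xi)$ we get $(\gamma')^{-1}\leq (1+\xi)\gamma^{-1}$, and because $\SPB(\cdot)$ is monotonically non-increasing in the cell volume (a sphere of larger volume is harder for the noise vector $\bZ$ to leave), this yields $\SPB\!\big((\gamma')^{-1}\big)\geq \SPB\!\big((1+\xi)\gamma^{-1}\big)$. Chaining the three inequalities gives $(1+\xi)P_e(\S)\geq P_e(\S')\geq \SPB\!\big((1+\xi)\gamma^{-1}\big)$, hence
\[
    P_e(\S)\ \geq\ \frac{1}{1+\xi}\,\SPB\!\big((1+\xi)\gamma^{-1}\big),
\]
and this holds for every $\xi>0$. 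Letting $\xi\to 0$ and invoking the continuity of $\SPB(\cdot)$ (the same continuity used in Lemma~\ref{lem:SPBconvexity} and in step $(c)$ of Theorem~\ref{thm:ConverseRegular}), the right-hand side tends to $\SPB(\gamma^{-1})=P_e^{SB}(n,\NLD)$, which is the desired bound.

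The substantive content is all pushed into Lemma~\ref{lem:regularization}, so the theorem itself is only a short limiting argument; the one point requiring a moment's care is the monotonicity/continuity of $\SPB(\cdot)$ used above, both of which follow from its description as the tail probability $\Pr\{\bZ\notin S_v\}$ of the spherically symmetric, strictly decreasing noise density over an equal-volume ball $S_v$. The genuine obstacle, which I would expect the appendix proof of the Regularization Lemma to handle, is turning an IC that may have unbounded Voronoi cells (e.g.\ points filling only a half-space) or a counting function $M(\S,a)/a^n$ that merely has a $\limsup$ into one satisfying both conditions of Definition~\ref{def:regularIC}; I anticipate this is done by intersecting $\S$ with a large cube, deleting a thin shell of boundary points so that the remaining cells become uniformly bounded, and periodizing the result to restore a bona fide density, with the free parameter $\xi$ absorbing the resulting edge losses in both density and error probability.
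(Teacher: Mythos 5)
Your proposal is correct and follows essentially the same route as the paper: apply the Regularization Lemma to obtain a regular IC, invoke Theorem~\ref{thm:ConverseRegular}, use the monotonicity of $\SPB(\cdot)$ to pass from $(\gamma')^{-1}$ to $(1+\xi)\gamma^{-1}$, and let $\xi\to 0$ by continuity. The only difference is that you make the monotonicity step explicit, which the paper leaves implicit.
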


\subsection{Numerical Comparison}
Here we numerically compare the bounds in this section with Poltyrev's achievability bound (Theorem~\ref{thm:AchievabilityPoltyrev}). As shown in the previous subsection, the bounds in Theorems \ref{thm:AchievabilityPoltyrev} and \ref{thm:AchievabilityMaxLike} are equivalent. However, as discussed following the statement of Theorem~\ref{thm:AchievabilityPoltyrev} above, in \cite{Poltyrev94_CodingWithoutRestrictions} the suboptimal value for $r$ is used.

We therefore refer to the achievability bound in Theorem \ref{thm:AchievabilityPoltyrev} (or Theorem \ref{thm:AchievabilityMaxLike}) with $r =\sqrt n \sigma e^{\NLD^*-\NLD}$ as `Poltyrev's bound'. The results are shown in Figures \ref{fig:AllBoundsDelta15} and \ref{fig:AllBoundsDelta2}. The exponential behavior of the bounds (the asymptotic slope of the curves in the log-scale graph) is clearly seen in the figures: at NLD above $\NLD_{cr}$, the sphere bound and the ML and Poltyrev's achievability bounds have the same exponent, while for NLD below $\NLD_{cr}$ the exponent of the sphere bound is better. In both cases the typicality bound has a weaker exponent. These observations are corroborated analytically in Section~\ref{sec:Properties} below.

\begin{figure}
  \psfrag{pesb}{$P_e^{SB}$}
\psfrag{petb}{$P_e^{TB}$}
\psfrag{pemlb}{$P_e^{MLB}$}
\psfrag{pemlbrsubopt}{$P_e^{MLB}$ with $r =\sqrt n \sigma e^{\NLD^*-\NLD}$}
  \includegraphics[width=6in]{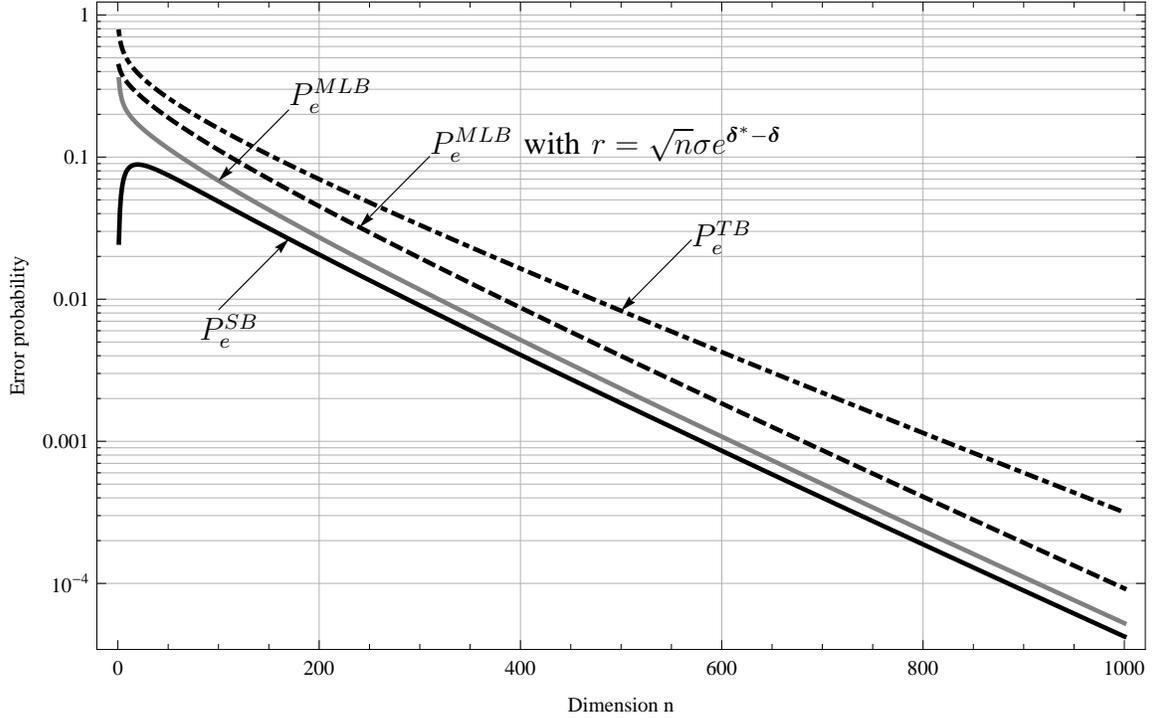}\\
  \caption{Numerical evaluation of the bounds for $\NLD = -1.5\mathsf{nat}$ with $\sigma^2=1$ (0.704db from capacity). From bottom to top: Solid - the sphere bound (Theorem~\ref{thm:SphereBound}). Gray - the ML bound (Theorem \ref{thm:AchievabilityMaxLike}). Dashed - Poltyrev's bound (Theorem~\ref{thm:AchievabilityPoltyrev}). Dot-dashed - the typicality-based achievability bound (Theorem \ref{thm:AchievabilityTypicality}).
    }\label{fig:AllBoundsDelta15}
\end{figure}

\begin{figure}
    \psfrag{pesb}{$P_e^{SB}$}
    \psfrag{petb}{$P_e^{TB}$}
    \psfrag{pemlb}{$P_e^{MLB}$}
    \psfrag{pemlbrsubopt}{$P_e^{MLB}$ with $r =\sqrt n \sigma e^{\NLD^*-\NLD}$}
    \includegraphics[width=6in]{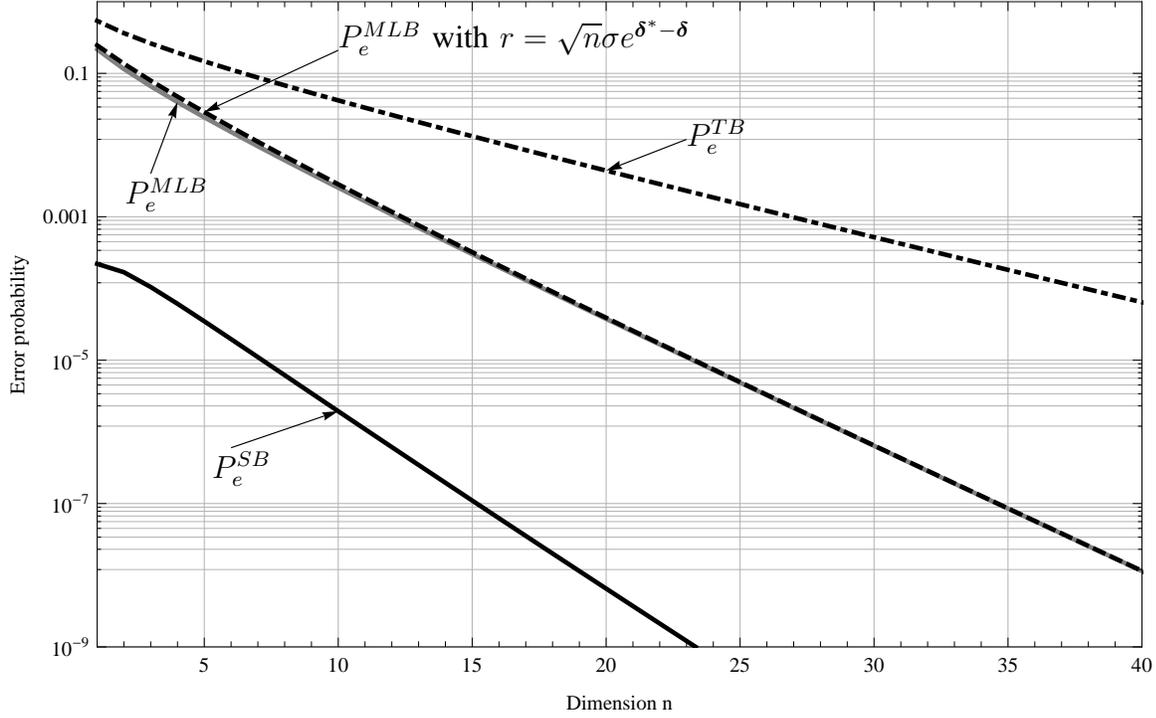}\\
    \caption{
        Numerical evaluation of the bounds for $\NLD = -2\mathsf{nat}$ with $\sigma^2=1$ (5.05db from capacity). From bottom to top: Solid - the sphere bound (Theorem~\ref{thm:SphereBound}). Gray - the ML-based achievability bound (Theorem \ref{thm:AchievabilityMaxLike}). Dashed - Poltyrev's bound (Theorem~\ref{thm:AchievabilityPoltyrev}). Dot-dashed - the typicality-based achievability bound (Theorem \ref{thm:AchievabilityTypicality}).
    }\label{fig:AllBoundsDelta2}
\end{figure}
\fi

\section{Analysis and Asymptotics at Fixed NLD $\NLD$}\label{sec:Properties}
\if \Properties 1

In this section we analyze the bounds presented in the previous section with two goals in mind:
\begin{enumerate}
  \item To derive tight analytical bounds (that require no integration) that allow easy evaluation of the bounds, both upper and lower.
  \item To analyze the bounds asymptotically (for fixed $\NLD$) and refine the error exponent results for the setting.
\end{enumerate}

In \ref{ssec:SphereBoundAnalysis} we present the refined analysis of the sphere bound. While the sphere bound $P_e^{SB}$ will present the same asymptotic form for any $\NLD$, the ML bound $P_e^{MLB}$ has a different behavior above and below $\NLD_{cr}$. In \ref{ssec:MaxLikeAnalysisAbove} we focus on the ML bound above $\NLD_{cr}$. The tight results from \ref{ssec:SphereBoundAnalysis} and \ref{ssec:MaxLikeAnalysisAbove} reveal that (above $\NLD_{cr}$) the optimal error probability $P_e(n,\NLD)$ is known asymptotically up to a constant. This is discussed in \ref{ssec:tightness}. In \ref{ssec:MaxLikeAnalysisBelowDeltaCr} we focus on the ML bound below $\NLD_{cr}$, and in \ref{ssec:MaxLikeAnalysisAtDeltaCr} we consider the special case of $\NLD = \NLD_{cr}$. In \ref{ssec:TypicalityPreciseAsymptotics} we study the asymptotics of the typicality bound $P_e^T(n,\NLD)$ and in \ref{ssec:AsymptoticsMLPoltyrev} we analyze Poltyrev's bound, i.e. the ML bound with $r$ set to $r =\sqrt n \sigma e^{\NLD^*-\NLD}$ instead of $\reff$.

\bigskip

The fact that the ML bound behaves differently above and below $\NLD_{cr}$ can be explained by the following. Consider the first term in the ML bound, $e^{n\NLD} V_n \int_0^{\reff} f_R(r) r^n dr$. Loosely speaking, the value of this integral is determined (for large $n$) by the value of the integrand with the most dominant exponent. When $\NLD > \NLD_{cr}$, the dominating value for the integral is at $r = \reff$. For $\NLD < \NLD_{cr}$, the dominating value is approximately at $r = \sqrt{2n\sigma^2}$. Note that this value does not depend on $\NLD$, so the dependence in $\NLD$ comes from the term $e^{n\NLD}$ alone, and the exponential behavior of the bound is of a straight line. Since we are interested in more than merely the exponential behavior of the bound, we use more refined machinery in order to analyze the bounds.

Poltyrev \cite{Poltyrev94_CodingWithoutRestrictions} used an expurgation technique in order to improve the error exponent for lower NLD values (below $\NLD_{ex} = \NLD^*-\log 2$). The refined tools used here can also be applied to the expurgation bound in order to analyze its sub-exponential behavior. However, in this region the ratio between the upper and lower bounds grows exponentially, and therefore the sub-exponential analysis of the expurgation bound is of little interest and is not included in this paper.

\bigskip

\subsection{Analysis of the Sphere Bound}\label{ssec:SphereBoundAnalysis}
The sphere bound (Theorem \ref{thm:SphereBound}) is a simple bound based on the geometry of the coding problem. However, the resulting expression, given by an integral that has no elementary form, is generally hard to evaluate. There are several approaches for evaluating this bound:
\begin{itemize}
  \item Numeric integration is only possible for small - moderate values of $n$. Moreover, the numeric evaluation does not provide any hints about the asymptotical behavior of the bound.
  \item Tarokh et al. \cite{TarokhVardyZeger99_universal} were able to represent the integral in the bound as a sum of $n/2$ elements. This result indeed helps in numerically evaluating the bound, but does not help in understanding its asymptotics.
  \item Poltyrev \cite{Poltyrev94_CodingWithoutRestrictions} used large-deviation techniques to derive the sphere packing error exponent, i.e.
  \begin{equation}
      \lim_{n\ra\infty} -\frac{1}{n}\log P_e(n,\NLD) \leq \E_{sp}(\NLD) = \frac{1}{2}\left[e^{2(\NLD^*-\NLD)} -1 -2(\NLD^* -\NLD) \right].
  \end{equation}
 The error exponent, as its name suggests, only hints on the exponential behavior of the bound, but does not aid in evaluating the bound itself or in more precise asymptotics.
\end{itemize}

Here we derive non-asymptotic, analytical bounds based on the sphere bound. These bounds allow easy evaluation of the bound, and give rise to more precise asymptotic analysis for the error probability (where $\NLD$ is fixed).

\begin{theorem}\label{thm:SphereBoundAnalysis}
Let $r^* \triangleq \reff = e^{-\NLD} V_n^{-1/n}$, $\rho^* \triangleq \frac{\reff^{2}}{n \sigma^2} $ and $\Upsilon \triangleq \frac{n (\rho^*-1 + \frac{2}{n})}{\sqrt{2(n-2)}}$.
Then for any NLD $\NLD < \NLD^*$ and for any dimension $n>2$, the sphere bound $P_e^{SB}(n,\NLD)$ is lower bounded by
\begin{align}
    P_e^{SB}(n,\NLD) &\geq   e^{n(\NLD^*-\NLD)}e^{n/2}
    e^{ - \frac{n}{2}\rho^*}\cdot
    e^\frac{\Upsilon^2}{2}
    \sqrt{\frac{n^2\pi}{n-2}}
    Q\left(\Upsilon\right)\label{eqn:SphereBoundLowerBoundQfunc}\\
    &\geq\frac{e^{n(\NLD^*-\NLD)}e^{n/2}e^{-\frac{n}{2}\rho^*}}{\rho^*-1+\frac{2}{n}}
    \left(\frac{1}{1+\Upsilon^{-2}}\right)\label{eqn:SphereBoundLowerBoundAnalytic},
\end{align}
\emph{upper} bounded by
\begin{equation}
         P_e^{SB}(n,\NLD) \leq \frac{e^{n(\NLD^*-\NLD)}e^{n/2}e^{-\frac{n}{2}\rho^*}}{\rho^*-1+\frac{2}{n}} \label{eqn:SphereBoundUpperBoundAnalytic},
\end{equation}
and for fixed $\NLD$, given asymptotically by
\begin{equation}
         P_e^{SB}(n,\NLD) = e^{-n E_{sp}(\NLD)} \frac{(n\pi)^{-\frac{1}{2}e^{2(\NLD^* - \NLD)}}}{e^{2(\NLD^* - \NLD)}-1} \left(1+O\left(\frac{\log^2n}{n}\right)\right).\label{eqn:SphereBoundPreciseAsymptotics}
\end{equation}

Some notes regarding the above results:
\begin{itemize}
    \item Eq. \eqref{eqn:SphereBoundLowerBoundQfunc} provides a lower bound in terms of the $Q$ function, and \eqref{eqn:SphereBoundLowerBoundAnalytic} gives a slightly looser bound, but is based on elementary functions only.
    \item The upper bound \eqref{eqn:SphereBoundUpperBoundAnalytic} on the sphere bound has no direct meaning in terms of bounding the error probability $P_e(n,\NLD)$ (since the sphere bound is a lower bound). However, it used for evaluating the sphere bound itself (i.e. to derive \eqref{eqn:SphereBoundPreciseAsymptotics}), and it will prove useful in \emph{upper} bounding $P_e(n,\NLD)$ in Theorem \ref{thm:MaxLikeAnalysis} below.
    \item A bound of the type \eqref{eqn:SphereBoundUpperBoundAnalytic}, i.e. an upper bound on the probability that the noise leaves a sphere, can be derived using the Chernoff bound as was done by Poltyrev \cite[Appendix~B]{Poltyrev94_CodingWithoutRestrictions}. However, while Poltyrev's technique indeed gives the correct exponential behavior, it falls short of attaining the sub-exponential terms, and therefore \eqref{eqn:SphereBoundUpperBoundAnalytic} is tighter. Moreover, \eqref{eqn:SphereBoundUpperBoundAnalytic} leads to the exact precise asymptotics \eqref{eqn:SphereBoundPreciseAsymptotics}.
    \item \eqref{eqn:SphereBoundPreciseAsymptotics} gives an asymptotic bound that is significantly tighter than the error exponent term alone. The asymptotic form \eqref{eqn:SphereBoundPreciseAsymptotics} applies to \eqref{eqn:SphereBoundLowerBoundQfunc}, \eqref{eqn:SphereBoundLowerBoundAnalytic} and \eqref{eqn:SphereBoundUpperBoundAnalytic} as well.
    \item Note that $\rho^*$ is a measure that can also quantify the gap from capacity (see \ref{ssec:MeasuringGapFromCapacity}). It is an alternative to $\Delta\NLD = \NLD^*-\NLD$ (or to $\mu = e^{2\Delta\NLD}$). The measures are not equivalent, but as $n \ra \infty$ we have $\rho^* = e^{2(\NLD^* - \NLD)} + o(1)$, see \eqref{eqn:rhoStarApprox1} and \eqref{eqn:rhoStarApprox2} below.
\end{itemize}

\bigskip

\begin{proof}
We write the sphere bound explicitly:
\begin{align}
    P_e^{SB}(n,\NLD)
    &= \Pr\{\|\bZ\|>r^*\} \nonumber\\
    &= \int_{r^*}^{\infty} f_R(r') dr'\nonumber\\
    &=\int_{r^{*2}/\sigma^2}^{\infty} f_{\chi^2_n}(\rho) d\rho\nonumber\\
    &=\frac{2^{-\frac{n}{2}}}{\Gamma\left(\frac{n}{2}\right)} \int_{r^{*2}/\sigma^2}^{\infty} \rho^{\frac{n}{2}-1}e^{-\rho/2}d\rho\nonumber\\
    &=\frac{2^{-\frac{n}{2}}n^{n/2}}{\Gamma\left(\frac{n}{2}\right)} \int_{\rho^*}^{\infty} \rho^{\frac{n}{2}-1}e^{-n\rho/2}d\rho.\label{eqn:SPBintegral}
\end{align}

In order to evaluate the integral in \eqref{eqn:SPBintegral} we require the following lemma.

\begin{lem}\label{lem:SpherePackingIntegralBounds} Let $n>2$ and $x>1-\frac{2}{n}$. Then the integral $\int_x^{\infty} \rho^{\frac{n}{2}-1}e^{-n\rho/2}d\rho$ can be bounded from above by
\begin{align}
    \int_x^{\infty} \rho^{\frac{n}{2}-1}e^{-n\rho/2}d\rho
    &\leq
    \frac{2x^{\frac{n}{2}}e^{ - \frac{n x}{2}}}{n(x-1+\frac{2}{n})}
    \label{eqn:SpherePackingIntegralUB}
\end{align}
and from below by
\begin{align}
    \int_x^{\infty} \rho^{\frac{n}{2}-1}e^{-n\rho/2}d\rho
    &\geq
    2x^{\frac{n}{2}}e^{ - \frac{n x}{2}}
    \exp\left[\frac{\Upsilon^2}{2}\right]
    \sqrt{\frac{\pi}{n-2}}
    Q\left(\Upsilon\right)\label{eqn:SpherePackingIntegralLB1} \\
    &\geq
    \frac{2x^{\frac{n}{2}}e^{ - \frac{n x}{2}}}{n(x-1+\frac{2}{n})}
    \left(\frac{1}{1+\Upsilon^{-2}}\right),\label{eqn:SpherePackingIntegralLB2}\\
    &\geq
    \frac{2x^{\frac{n}{2}}e^{ - \frac{n x}{2}}}{n(x-1+\frac{2}{n})}
    \left(1-\frac{1}{\Upsilon^2}\right),\label{eqn:SpherePackingIntegralLB3}
\end{align}
where $\Upsilon \triangleq \frac{n(x-1+\frac{2}{n})}{\sqrt{2(n-2)}}$.
\begin{proof} Appendix~\ref{app:IntegralBounds}.
\end{proof}
\end{lem}

We continue the proof of the theorem: \eqref{eqn:SphereBoundLowerBoundQfunc} follows by plugging \eqref{eqn:SpherePackingIntegralLB1} into \eqref{eqn:SPBintegral} with $x = \rho^*$. It can be shown that $\rho^*\geq 1$ for all $\NLD < \NLD^*$ so the condition $x > 1-\frac{2}{n}$ is met. \eqref{eqn:SphereBoundLowerBoundAnalytic} follows similarly using \eqref{eqn:SpherePackingIntegralLB2} and the definition of $\NLD^*$. The upper bound \eqref{eqn:SphereBoundUpperBoundAnalytic} follows using \eqref{eqn:SpherePackingIntegralUB}.

To derive \eqref{eqn:SphereBoundPreciseAsymptotics} we first note the following asymptotic results:
\begin{equation}\label{eqn:VnApproxMult}
    V_n = \frac{\pi^{n/2}}{\frac{n}{2}\Gamma(\frac{n}{2})} = \left(\frac{2\pi e}{n}\right)^{n/2}\frac{1}{\sqrt{n\pi}}\left(1+O\left(\frac{1}{n}\right)\right),
\end{equation}
\begin{align}
    \rho^* = \frac{e^{-2\NLD} V_n^{-2/n}}{n \sigma^2}
    &= e^{2(\NLD^*-\NLD)} (n\pi)^{1/n}
    \left(1+O\left(\frac{1}{n^2}\right)\right)\label{eqn:rhoStarApprox1}\\
    &= e^{2(\NLD^*-\NLD)}
    \left(1+\frac{1}{n}\log(n\pi) + O\left(\frac{\log^2n}{n^2}\right)\right)\label{eqn:rhoStarApprox2},
\end{align}
\begin{equation}\label{eqn:UpsilonApprox}
    \Upsilon = \frac{n (\rho^*-1 + \frac{2}{n})}{\sqrt{2(n-2)}} = \sqrt{\frac{n}{2}}
\left(e^{2(\NLD^*-\NLD)} - 1 \right)\left(1+O\left(\frac{\log n}{n}\right)\right) = \Theta(\sqrt n).
\end{equation}
For \eqref{eqn:VnApproxMult} see Appendix \ref{app:Vn}. \eqref{eqn:rhoStarApprox1} follows from \eqref{eqn:VnApproxMult} and the definition of $\NLD^*$. \eqref{eqn:rhoStarApprox2} follows by writing $(n\pi)^{1/n} = e^{\frac{1}{n}\log(n\pi)}$ and the Taylor approximation. \eqref{eqn:UpsilonApprox} follows directly from \eqref{eqn:rhoStarApprox2}.

We evaluate the term $e^{-\frac{n}{2}\rho^*}$ in \eqref{eqn:SphereBoundLowerBoundAnalytic} and \eqref{eqn:SphereBoundUpperBoundAnalytic}:
\begin{align}
    e^{-\frac{n}{2}\rho^*}
    &= \exp\left[-\frac{n}{2}e^{2(\NLD^*-\NLD)}\left(1+\frac{1}{n}\log(n\pi) + O\left(\frac{\log^2n}{n^2}\right)\right)\right]\nonumber\\
    &= e^{-\frac{n}{2}e^{2(\NLD^*-\NLD)}}
    \exp\left[-\frac{1}{2}e^{2(\NLD^*-\NLD)}\log(n\pi) + O\left(\frac{\log^2n}{n}\right)\right]\nonumber\\
    &= e^{-\frac{n}{2}e^{2(\NLD^*-\NLD)}}(n\pi)^{-\frac{1}{2}e^{2(\NLD^*-\NLD)}}
    \left(1+ O\left(\frac{\log^2n}{n}\right)\right).\label{eqn:exprhoApprox}
\end{align}
Plugging \eqref{eqn:rhoStarApprox2}, \eqref{eqn:UpsilonApprox} and \eqref{eqn:exprhoApprox} into \eqref{eqn:SphereBoundLowerBoundAnalytic} and \eqref{eqn:SphereBoundUpperBoundAnalytic}, along with the definition of $\E_{sp}(\NLD)$, leads to the desired \eqref{eqn:SphereBoundPreciseAsymptotics}.
\end{proof}
\end{theorem}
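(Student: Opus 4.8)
The plan is to reduce the sphere bound to the tail of a $\chi^2_n$ variable and then control that tail by elementary estimates that, unlike a Chernoff bound, retain the polynomial pre-factor. First I would write $P_e^{SB}(n,\NLD)=\Pr\{\|\bZ\|>\reff\}$ and, using that $\|\bZ\|^2/\sigma^2$ is $\chi^2_n$-distributed, pass through the $\chi^2_n$ density and rescale $\rho\mapsto\rho/n$ to obtain the representation \eqref{eqn:SPBintegral}. This isolates the single integral $I(x)\triangleq\int_x^\infty\rho^{n/2-1}e^{-n\rho/2}\,d\rho$, which must be estimated at $x=\rho^*$.

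The core of the argument is a two-sided estimate of $I(x)$ for $x>1-2/n$, which I would state as a separate lemma (Lemma~\ref{lem:SpherePackingIntegralBounds}) and prove in an appendix. Writing $g(\rho)=\rho^{n/2-1}e^{-n\rho/2}$ and substituting $\rho=x+s$, one has $\log g(x+s)-\log g(x)=(n/2-1)\log(1+s/x)-(n/2)s$; applying $t-t^2/2\le\log(1+t)\le t$ with $t=s/x\ge0$ sandwiches this between $-as-bs^2$ and $-as$, where $a=\frac{n(x-1+2/n)}{2x}$ and $b=\frac{n-2}{4x^2}$. Integrating the upper exponential over $s\in[0,\infty)$ gives $\frac{2x^{n/2}e^{-nx/2}}{n(x-1+2/n)}$, i.e. \eqref{eqn:SpherePackingIntegralUB}; integrating the lower Gaussian gives $g(x)\,e^{a^2/(4b)}\sqrt{\pi/b}\,Q\!\left(a/\sqrt{2b}\right)$, which upon checking that $a/\sqrt{2b}=\Upsilon$ and $a^2/(4b)=\Upsilon^2/2$ becomes exactly \eqref{eqn:SpherePackingIntegralLB1}, and the elementary bound $Q(y)\ge\frac{y}{1+y^2}\frac{e^{-y^2/2}}{\sqrt{2\pi}}$ then produces \eqref{eqn:SpherePackingIntegralLB2}--\eqref{eqn:SpherePackingIntegralLB3}. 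The one subtlety I would watch is that using the quadratic Taylor bound (rather than a linear/Chernoff-type bound) is precisely what lets the $Q$-function, and hence the sub-exponential constant, survive.

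With the lemma in hand I would verify its hypothesis at $x=\rho^*$: from the elementary inequality $V_n^{2/n}\le 2\pi e/n$ and $e^{-2\NLD^*}=2\pi e\sigma^2$ one gets $\rho^*=\frac{e^{-2\NLD}V_n^{-2/n}}{n\sigma^2}\ge e^{2(\NLD^*-\NLD)}\ge 1$, so $\rho^*>1-2/n$ for every $\NLD<\NLD^*$. Substituting the estimates for $I(\rho^*)$ into \eqref{eqn:SPBintegral} and simplifying the constant --- here one uses $\Gamma(n/2)V_n=2\pi^{n/2}/n$ together with $(\rho^*)^{n/2}=e^{-n\NLD}V_n^{-1}(n\sigma^2)^{-n/2}$ and $2\pi\sigma^2=e^{-1-2\NLD^*}$, which collapses the whole pre-factor to $e^{n(\NLD^*-\NLD)}e^{n/2}$ --- yields \eqref{eqn:SphereBoundLowerBoundQfunc}, \eqref{eqn:SphereBoundLowerBoundAnalytic} and \eqref{eqn:SphereBoundUpperBoundAnalytic} directly.

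Finally, for the precise asymptotics \eqref{eqn:SphereBoundPreciseAsymptotics} I would insert the expansions \eqref{eqn:VnApproxMult}--\eqref{eqn:UpsilonApprox} (the Stirling expansion of $V_n$ being handled in Appendix~\ref{app:Vn}). The step requiring care is $e^{-\frac n2\rho^*}$: using $\rho^*=e^{2(\NLD^*-\NLD)}\big(1+\tfrac1n\log(n\pi)+O(\log^2 n/n^2)\big)$ and expanding the exponential gives $e^{-\frac n2\rho^*}=e^{-\frac n2 e^{2(\NLD^*-\NLD)}}(n\pi)^{-\frac12 e^{2(\NLD^*-\NLD)}}\big(1+O(\log^2 n/n)\big)$, which is the source of both the $(n\pi)^{-\frac12 e^{2(\NLD^*-\NLD)}}$ factor and the residual error term in \eqref{eqn:SphereBoundPreciseAsymptotics}. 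Since $\rho^*-1+2/n\to e^{2(\NLD^*-\NLD)}-1$, $\Upsilon=\Theta(\sqrt n)$ (so the $Q$-function and $(1+\Upsilon^{-2})^{-1}$ factors are $1+O(1/n)$), and $e^{-n\E_{sp}(\NLD)}=e^{n(\NLD^*-\NLD)}e^{n/2}e^{-\frac n2 e^{2(\NLD^*-\NLD)}}$, the upper bound \eqref{eqn:SphereBoundUpperBoundAnalytic} and the lower bound \eqref{eqn:SphereBoundLowerBoundAnalytic} reduce to the same asymptotic expression; squeezing $P_e^{SB}(n,\NLD)$ between them gives \eqref{eqn:SphereBoundPreciseAsymptotics}. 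I expect no single deep obstacle here: the real content is the choice of substitution and Taylor bounds in the lemma (which makes the sub-exponential constant emerge cleanly), and then the careful tracking of the $O(\log^2 n/n)$ error through $e^{-\frac n2\rho^*}$.
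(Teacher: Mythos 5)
Your proposal is correct and follows essentially the same route as the paper: the same reduction of the sphere bound to the $\chi^2_n$ tail integral \eqref{eqn:SPBintegral}, the same key lemma obtained by sandwiching the log-integrand between its linear and quadratic Taylor bounds (your $a,b$ are exactly the paper's $-\beta,\tau^2$, so $a/\sqrt{2b}=\Upsilon$ and $a^2/(4b)=\Upsilon^2/2$), the same $Q$-function lower bound, and the same asymptotic expansions of $V_n$, $\rho^*$, $\Upsilon$ and $e^{-n\rho^*/2}$ to squeeze out \eqref{eqn:SphereBoundPreciseAsymptotics}. Your added details (the verification $\rho^*\ge e^{2(\NLD^*-\NLD)}\ge 1$ and the algebra collapsing the prefactor to $e^{n(\NLD^*-\NLD)}e^{n/2}$) are accurate and simply make explicit steps the paper leaves terse.
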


In Fig.~\ref{fig:SphereBoundPlot} we demonstrate the tightness of the bounds and precise asymptotics of Theorem~\ref{thm:SphereBoundAnalysis}. In the figure the sphere bound is presented with its bounds and approximations. The lower bound \eqref{eqn:SphereBoundLowerBoundQfunc} is the tightest lower bound (but is based on the non-analytic $Q$ function). The analytic lower bound \eqref{eqn:SphereBoundLowerBoundAnalytic} is slightly looser than \eqref{eqn:SphereBoundLowerBoundQfunc}, but is tight enough in order to derive the precise asymptotic form \eqref{eqn:SphereBoundPreciseAsymptotics}. The upper bound \eqref{eqn:SphereBoundUpperBoundAnalytic} of the sphere bound is also tight. The error exponent itself (without the sub-exponential terms) is clearly way off, compared to the precise asymptotic form \eqref{eqn:SphereBoundPreciseAsymptotics}.

\begin{figure}
\psfrag{pesb}{Sphere bound $P_e^{SB}$}
\psfrag{SBUB}{Upper bound \eqref{eqn:SphereBoundUpperBoundAnalytic}}
\psfrag{SBLB}{Lower bound \eqref{eqn:SphereBoundLowerBoundAnalytic}}
\psfrag{SBLBQ}{Lower bound \eqref{eqn:SphereBoundLowerBoundQfunc}}
\psfrag{Asym}{Asymptotic form \eqref{eqn:SphereBoundPreciseAsymptotics}}
\psfrag{ErrExp}{$e^{-n \E_{sp}(\NLD)}$}
  \includegraphics[width=6in]{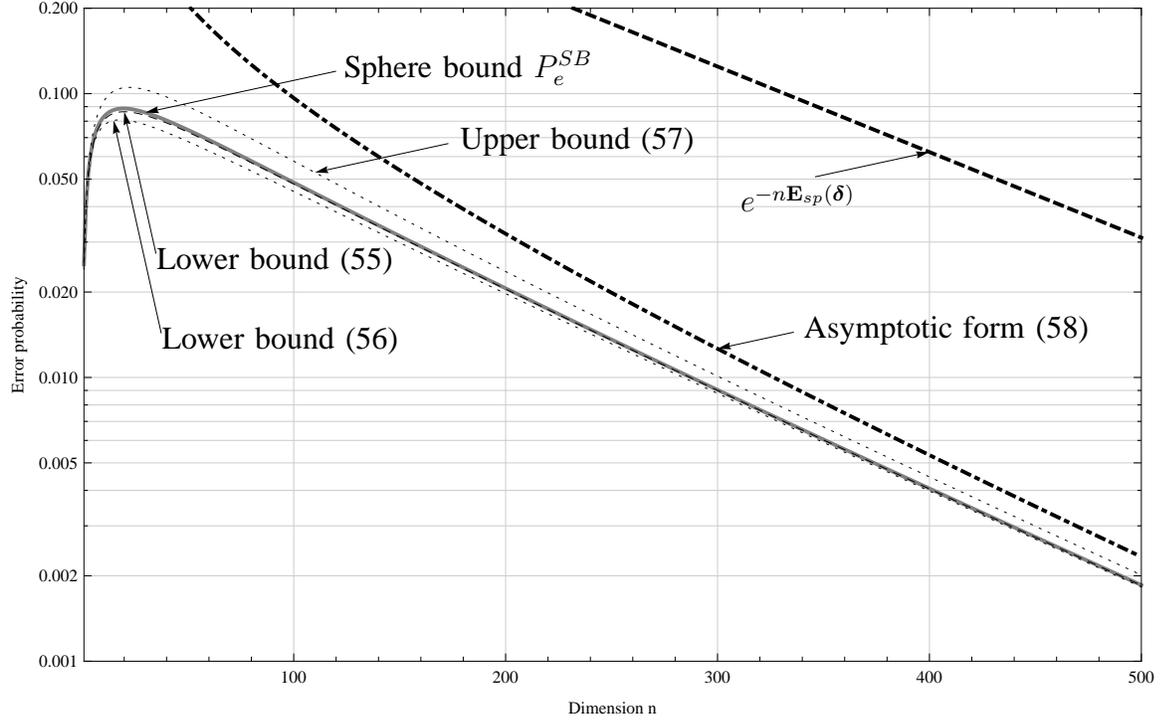}\\
  \caption{Numerical evaluation of the sphere bound and its bounds and approximation in Theorem~\ref{thm:SphereBoundAnalysis} vs the dimension $n$. Here $\NLD = -1.5\mathsf{nat}$ and $\sigma^2=1$ (0.704db from capacity).
  The tight bounds \eqref{eqn:SphereBoundLowerBoundQfunc}, \eqref{eqn:SphereBoundLowerBoundAnalytic} and  \eqref{eqn:SphereBoundUpperBoundAnalytic} lead to the asymptotic form \eqref{eqn:SphereBoundPreciseAsymptotics}. The error exponent term alone is evidently way off compared to \eqref{eqn:SphereBoundPreciseAsymptotics}.
}\label{fig:SphereBoundPlot}
\end{figure}

\bigskip

\subsection{Analysis of the ML Bound Above $\NLD_{cr}$}\label{ssec:MaxLikeAnalysisAbove}

In order to derive the random coding exponent $\E_r(\NLD)$, Poltyrev's achievability bound (Theorem~\ref{thm:AchievabilityPoltyrev}) was evaluated asymptotically by setting a suboptimal value $\sqrt n \sigma e^{-(\NLD^*-\NLD)}$ for the parameter $r$. While setting this value still gives the correct exponential behavior of the bound, a more precise analysis (in the current and following subsections) using the optimal value for $r$ as in Theorem~\ref{thm:AchievabilityMaxLike} gives tighter analytical and asymptotic results.

\begin{theorem}\label{thm:MaxLikeAnalysis}
Let $r^* \triangleq \reff = e^{-\NLD} V_n^{-1/n}$, $\rho^* \triangleq \frac{\reff^{2}}{n \sigma^2} $,
$\Upsilon \triangleq \frac{n (\rho^*-1 + \frac{2}{n})}{\sqrt{2(n-2)}}$ and $\Psi \triangleq \frac{\sqrt n \left(2-\rho^*+\tfrac{2}{n}\right)}{2\sqrt {\rho^*}}$.

Then for any NLD $\NLD$ and for any dimension $n>2$ where
$1-\frac{2}{n} <\rho^* < 2-\frac{2}{n}$, the ML bound $P_e^{MLB}(n,\NLD)$ is upper bounded by
\begin{align}
   P_e^{MLB}(n,\NLD)  &\leq\frac{e^{n(\NLD^*-\NLD)}e^{n/2}e^{-\frac{n}{2}\rho^*}}{\left(2-\rho^*-\frac{2}{n}\right)\left(\rho^*-1+\frac{2}{n}\right)},
  \label{eqn:MaxLikeUpperBoundAnalytic}
\end{align}
lower bounded by
\begin{align}
    P_e^{MLB}(n,\NLD) &\geq e^{n(\NLD^*-\NLD)}e^{n/2}    e^{-n\rho^*/2} \left[
    e^{\Psi^2/2} \sqrt\frac{n\pi}{2 \rho^*}Q(\Psi)
     +
    e^{\Upsilon^2/2} \sqrt{\frac{n^2\pi}{n-2}} Q\left(\Upsilon\right)\right]\label{eqn:MaxLikeLowerBoundQfunc}\\
    &\geq
    e^{n(\NLD^*-\NLD)}e^{n/2}    e^{-n\rho^*/2} \left[
    \frac{1}{2-\rho^*+\frac{2}{n}}\cdot\frac{1}{1+\Psi^{-2}}
     +
    \frac{1}{\rho^*-1+\frac{2}{n}}\cdot\frac{1}{1+\Upsilon^{-2}}
\right],
    \label{eqn:MaxLikeLowerBoundAnalytic}
\end{align}
and for $\NLD_{cr}<\NLD<\NLD^*$, given asymptotically by
\begin{equation}
    P_e^{MLB}(n,\NLD) = \frac{
    e^{-n\E_{r}(\NLD)}(n\pi)^{-\frac{1}{2}e^{2(\NLD^*-\NLD)}}}
    {\left(2-e^{2(\NLD^*-\NLD)}\right)\left(e^{2(\NLD^*-\NLD)}-1\right)}
    \left(1+ O\left(\frac{\log^2n}{n}\right)\right).
     \label{eqn:MaxLikeLowerBoundAsymptotic}
\end{equation}

Some notes regarding the above results:
\begin{itemize}
  \item For large $n$, the condition $\rho^* < 2-\frac{2}{n}$ translates to the fact that $\NLD_{cr}<\NLD$. $\rho^*>1-\frac{2}{n}$ holds for all $\NLD<\NLD^*$. The case of $\NLD \leq \NLD^*$ is addressed later on in the current section.
 \item The lower bounds \eqref{eqn:MaxLikeLowerBoundQfunc} and \eqref{eqn:MaxLikeLowerBoundAnalytic}
  have no direct meaning in terms of bounding the error probability $P_e(n,\NLD)$ (since they lower bound an upper bound). However, they are useful for evaluating the achievability bound itself (i.e. to derive \eqref{eqn:MaxLikeLowerBoundAsymptotic}).
  \item \eqref{eqn:MaxLikeLowerBoundAsymptotic} gives an asymptotic bound that is significantly tighter than the error exponent term alone. It holds above the $\NLD_{cr}$ only, where below $\NLD_{cr}$ and exactly at $\NLD_{cr}$ we have Theorems \ref{thm:MaxLikeAnalysisBelowDeltaCr} and \ref{thm:MaxLikeAnalysisAtDeltaCr} below.
      The asymptotic form \eqref{eqn:MaxLikeLowerBoundAsymptotic} applies to \eqref{eqn:MaxLikeUpperBoundAnalytic}, \eqref{eqn:MaxLikeLowerBoundQfunc} and \eqref{eqn:MaxLikeLowerBoundAnalytic} as well.
\end{itemize}
\end{theorem}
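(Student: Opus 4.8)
The plan is to mirror the structure of the proof of Theorem~\ref{thm:SphereBoundAnalysis}. Recall that with $r^* = \reff$, the ML bound is $P_e^{MLB}(n,\NLD) = e^{n\NLD}V_n\int_0^{r^*} f_R(\tilde r)\tilde r^n\,d\tilde r + \Pr\{\|\bZ\|>r^*\}$, and the second summand is exactly $P_e^{SB}(n,\NLD)$, which Theorem~\ref{thm:SphereBoundAnalysis} already controls tightly (giving the $\Upsilon$-terms in \eqref{eqn:MaxLikeLowerBoundQfunc}--\eqref{eqn:MaxLikeUpperBoundAnalytic} and contributing the factor $1/(\rho^*-1+\tfrac2n)$). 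So the real work is the first (``redundancy'') term. First I would convert it, exactly as in \eqref{eqn:SPBintegral}, into a $\chi^2_n$-type integral: writing $f_R$ via $f_{\chi^2_n}$ and substituting $\rho = \tilde r^2/\sigma^2$, the term $e^{n\NLD}V_n\int_0^{r^*}f_R(\tilde r)\tilde r^n\,d\tilde r$ becomes a constant times $\int_0^{\rho^*}\rho^{n-1}e^{-n\rho/2}\,d\rho$ (note the exponent $n-1$ rather than $\tfrac n2-1$, coming from the extra $\tilde r^n = \sigma^n\rho^{n/2}$ factor). After collecting the prefactors $e^{n\NLD}V_n$, $\sigma$-powers, and $\Gamma(n/2)$ using $V_n = \pi^{n/2}/(\tfrac n2\Gamma(n/2))$ and the definition of $\NLD^*$, one gets a clean expression of the form $e^{n(\NLD^*-\NLD)}e^{n/2}\cdot(\text{const})\cdot\int_0^{\rho^*}\rho^{n-1}e^{-n\rho/2}\,d\rho$.

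Next I would prove a companion to Lemma~\ref{lem:SpherePackingIntegralBounds} for the \emph{finite} integral $\int_0^{x}\rho^{n-1}e^{-n\rho/2}\,d\rho$ when $1-\tfrac2n < x < 2-\tfrac2n$. The integrand $\rho^{n-1}e^{-n\rho/2}$ is maximized at $\rho = 2-\tfrac2n$, which lies \emph{above} the upper limit $x$ in our regime; hence on $[0,x]$ the integrand is increasing and the mass concentrates near $\rho = x$. The natural move is integration by parts (or bounding $\rho^{n-1} = \rho^{n-1}$ against its value at $x$ times a correction), producing a leading term $\dfrac{2x^{n}e^{-nx/2}}{n(2-x-\tfrac2n)}$ together with a Gaussian-type correction governed by $\Psi = \dfrac{\sqrt n(2-\rho^*+\tfrac2n)}{2\sqrt{\rho^*}}$ — the analogue of $\Upsilon$ but measuring the distance from $x=\rho^*$ to the peak $2-\tfrac2n$ rather than to $1-\tfrac2n$. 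Concretely I would establish an upper bound $\int_0^x \le \dfrac{2x^n e^{-nx/2}}{n(2-x-\tfrac2n)}$ and a lower bound of the form $2x^n e^{-nx/2}e^{\Psi^2/2}\sqrt{\pi/(n\rho^*)}\,Q(\Psi)$, then weaken the latter to $\dfrac{2x^n e^{-nx/2}}{n(2-x-\tfrac2n)}\cdot\dfrac{1}{1+\Psi^{-2}}$ using the standard bound $e^{t^2/2}Q(t)\ge \dfrac{1}{t\sqrt{2\pi}}\cdot\dfrac{1}{1+t^{-2}}$ (already invoked in Appendix~\ref{app:IntegralBounds}). Substituting $x=\rho^*$, evaluating $x^n e^{-nx/2} = (\rho^*)^n e^{-n\rho^*/2}$, and noting $(\rho^*)^n\cdot e^{n\NLD}V_n\cdot(\text{prefactors})$ collapses (by the very definition $\rho^* = \reff^2/(n\sigma^2)$ and $\reff = e^{-\NLD}V_n^{-1/n}$) to the clean factor $e^{n(\NLD^*-\NLD)}e^{n/2}$, gives the $\Psi$-terms in \eqref{eqn:MaxLikeLowerBoundQfunc}--\eqref{eqn:MaxLikeUpperBoundAnalytic}.

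Adding the redundancy-term bounds to the sphere-bound bounds from Theorem~\ref{thm:SphereBoundAnalysis} yields \eqref{eqn:MaxLikeUpperBoundAnalytic}, \eqref{eqn:MaxLikeLowerBoundQfunc} and \eqref{eqn:MaxLikeLowerBoundAnalytic} directly; the hypothesis $1-\tfrac2n<\rho^*<2-\tfrac2n$ is exactly what keeps both denominators $\rho^*-1+\tfrac2n$ and $2-\rho^*+\tfrac2n$ positive (and $\rho^*>1-\tfrac2n$ always holds for $\NLD<\NLD^*$, as already noted). For the asymptotics \eqref{eqn:MaxLikeLowerBoundAsymptotic} I would reuse verbatim the expansions \eqref{eqn:rhoStarApprox1}--\eqref{eqn:UpsilonApprox} and \eqref{eqn:exprhoApprox}: $\rho^*\to e^{2(\NLD^*-\NLD)}$, $\Upsilon,\Psi = \Theta(\sqrt n)$ so the $1/(1+\Upsilon^{-2})$ and $1/(1+\Psi^{-2})$ factors are $1+O(1/n)$, and $e^{-n\rho^*/2}$ produces the $(n\pi)^{-\frac12 e^{2(\NLD^*-\NLD)}}$ factor plus $e^{-\frac n2 e^{2(\NLD^*-\NLD)}}$; combined with $e^{n(\NLD^*-\NLD)}e^{n/2}$ and the definition of $\E_r(\NLD)$ on $(\NLD_{cr},\NLD^*)$, the exponential part becomes $e^{-n\E_r(\NLD)}$ and the prefactor $\bigl(2-e^{2(\NLD^*-\NLD)}\bigr)^{-1}\bigl(e^{2(\NLD^*-\NLD)}-1\bigr)^{-1}$ emerges, with the $\Psi$- and $\Upsilon$-terms contributing the two summands $\tfrac{1}{2-e^{2\Delta}}+\tfrac{1}{e^{2\Delta}-1}$ that combine into that single fraction. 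The main obstacle I anticipate is the new integral lemma for $\int_0^x\rho^{n-1}e^{-n\rho/2}d\rho$: because the limit of integration sits on the increasing side of the integrand's peak, the integration-by-parts remainder must be controlled carefully to get a clean $\Psi$-based Gaussian tail rather than a messier bound, and one must verify the error in \eqref{eqn:MaxLikeLowerBoundAsymptotic} is genuinely $O(\log^2 n/n)$ and not worse — i.e. that the two $\Theta(\sqrt n)$ quantities $\Psi,\Upsilon$ never collide or degenerate in the open interval $(\NLD_{cr},\NLD^*)$, which is where the strict inequality $\rho^*<2-\tfrac2n$ is essential.
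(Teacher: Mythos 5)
Your proposal follows essentially the same route as the paper's proof: the same decomposition into the redundancy term plus the sphere bound (handled by Theorem~\ref{thm:SphereBoundAnalysis}), the same reduction of the redundancy term to a prefactor times $\int_0^{\rho^*}\rho^{n-1}e^{-n\rho/2}\,d\rho$, a companion integral lemma matching the paper's Lemma~\ref{lem:MaxLikeIntegralBounds} (which the paper proves via a tangent-line bound on the concave exponent for the upper bound and the substitution $u=1/\rho$ with a quadratic lower bound on the exponent, yielding the $Q(\Psi)$ term), and the same expansions \eqref{eqn:rhoStarApprox2}--\eqref{eqn:UpsilonApprox}, \eqref{eqn:exprhoApprox} for the asymptotics. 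The only blemish is a cosmetic one: the constant in your sketched $Q(\Psi)$ lower bound ($2\sqrt{\pi/(n\rho^*)}$ rather than $\sqrt{2\pi/(n\rho^*)}$) is off by $\sqrt{2}$, which would be corrected automatically on carrying out the computation as in the paper.
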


\begin{proof}
The proof relies on a precise analysis of the ML bound:
\begin{equation}
    e^{n\NLD} V_n \int_0^{r^*} f_R(r) r^n dr + \Pr\left\{ \|\bZ\| > r^* \right\}.
\end{equation}
The second term is exactly the sphere bound, which allows the usage of the analysis of Theorem \ref{thm:SphereBoundAnalysis}. We therefore proceed with analyzing the first term:
\begin{align}
    e^{n\NLD} V_n \int_0^{r^*} f_R(r) r^n dr
    &= e^{n\NLD} V_n\sigma^n \int_0^{\frac{r^*}{\sigma}} f_{\chi_n}(y) \rho^n dy\nonumber\\
    &= e^{n\NLD} V_n\sigma^n \frac{2^{1-\frac{n}{2}}}{\Gamma\left[\frac{n}{2}\right]} \int_0^{\frac{r^*}{\sigma}} e^{-y^2/2} y^{2n-1} dy\nonumber\\
    &= e^{n\NLD} V_n\sigma^n \frac{2^{-\frac{n}{2}}}{\Gamma\left[\frac{n}{2}\right]} \int_0^{\frac{r^{*2}}{\sigma^2}} e^{-t/2} t^{n-1} dt\nonumber\\
    &= \frac{n}{2} e^{n(\NLD+\NLD^*)} V_n^2e^{n/2}
    \sigma^{2n}n^n \int_0^{\rho^*} e^{-n\rho/2}\rho^{n-1} d\rho\label{eqn:preMLIntegralBounds}
\end{align}

We need the following lemma:
\begin{lem}\label{lem:MaxLikeIntegralBounds} Let $0<x<2-\frac{2}{n}$. Then the integral $\int_0^{x} e^{-n \rho/2}  \rho^{n-1} d\rho$ is upper bounded by
\begin{equation}\label{eqn:MLIntegralUpperBound}
    \int_0^{x} e^{-n \rho/2}  \rho^{n-1} d\rho \leq \frac{2x^ne^{-nx/2}}{n\left(2-x-\tfrac{2}{n}\right)}\left(1 - e^{-n\left(1-\tfrac{1}{n} -\tfrac{x}{2}\right)}\right),
\end{equation}
and is lower bounded by
\begin{align}
    \int_0^{x} e^{-n \rho/2}  \rho^{n-1} d\rho
    &\geq x^n e^{-nx/2} e^{\Psi^2/2} \sqrt\frac{2\pi}{n x}Q(\Psi)
    \label{eqn:MLIntegralLowerBoundQfunc}\\
    &\geq \frac{2x^n e^{-nx/2}}{n\left(2-x+\tfrac{2}{n}\right)}\cdot\frac{1}{1+\Psi^{-2}},
    \label{eqn:MLIntegralLowerBoundAnalytic}
\end{align}
where $\Psi \triangleq \frac{\sqrt n \left(2-x+\tfrac{2}{n}\right)}{2\sqrt x}$.
\begin{proof} Appendix~\ref{app:IntegralBounds}.
\end{proof}
\end{lem}

To prove the upper bound \eqref{eqn:MaxLikeUpperBoundAnalytic} we use \eqref{eqn:MLIntegralUpperBound} with  $x = \rho^*$ to bound \eqref{eqn:preMLIntegralBounds}:
\begin{align}
    e^{n\NLD} V_n \int_0^{r^*} f_R(r) r^n dr
    &=
    \frac{n}{2} e^{n(\NLD+\NLD^*)} V_n^2e^{n/2}
    \sigma^{2n}n^n
    \int_0^{\rho^*} e^{-n \rho/2}\rho^{n-1} d\rho\\
    &\leq
    \frac{n}{2} e^{n(\NLD+\NLD^*)} V_n^2e^{n/2}
    \sigma^{2n}n^n
    \cdot\frac{2\rho^{*n}e^{-\tfrac{n}{2}\rho^*}}{n\left(2-\rho^*-\tfrac{2}{n}\right)}\\
    &= \frac{e^{n(\NLD^*-\NLD)} e^{n/2}e^{-\tfrac{n}{2}\rho^*}}{2-\rho^*-\tfrac{2}{n}}.
\end{align}

We combine the above with the upper bound \eqref{eqn:SphereBoundUpperBoundAnalytic} on the sphere bound and get
\begin{equation}
    e^{n\NLD} V_n \int_0^{r^*} f_R(r) r^n dr + \Pr\left\{ \|\bZ\| > r^* \right\} \leq \frac{e^{n(\NLD^*-\NLD)} e^{n/2}e^{-\tfrac{n}{2}\rho^*}}{2-\rho^*-\tfrac{2}{n}} +
    \frac{e^{n(\NLD^*-\NLD)}e^{n/2}e^{-\frac{n}{2}\rho^*}}{\rho^*-1+\frac{2}{n}},
\end{equation}
which immediately leads to \eqref{eqn:MaxLikeUpperBoundAnalytic}.

In order to attain the lower bound \eqref{eqn:MaxLikeLowerBoundQfunc} we use \eqref{eqn:MLIntegralLowerBoundQfunc} with $x=\rho^*$ and get that
$e^{n\NLD} V_n \int_0^{r^*} f_R(r) r^n dr$ is lower bounded by
\begin{align*}
    &\frac{n}{2} e^{n(\NLD+\NLD^*)} V_n^2e^{n/2}
    \sigma^{2n}n^n\cdot \rho^{*n} e^{-n\rho^*/2} e^{\Psi^2/2} \sqrt\frac{2\pi}{n \rho^*}Q(\Psi)\\
    &=e^{n(\NLD^*-\NLD)}e^{n/2}
     e^{-n\rho^*/2} \cdot e^{\Psi^2/2} \sqrt\frac{n\pi}{2 \rho^*}Q(\Psi).
\end{align*}
Eq. \eqref{eqn:MaxLikeLowerBoundQfunc} follows by using the lower bound \eqref{eqn:SphereBoundLowerBoundQfunc} on the sphere bound. The analytic bound \eqref{eqn:MaxLikeLowerBoundAnalytic} follows from \eqref{eqn:MLIntegralLowerBoundAnalytic}.

The asymptotic form \eqref{eqn:MaxLikeLowerBoundAsymptotic} follows by the fact that $\Psi =  \Theta(\sqrt n)$, and by plugging \eqref{eqn:rhoStarApprox2} and \eqref{eqn:UpsilonApprox} into the analytical bounds \eqref{eqn:MaxLikeUpperBoundAnalytic} and \eqref{eqn:MaxLikeLowerBoundAnalytic}.
\end{proof}

In Fig.~\ref{fig:MaxLikeBoundPlotDelta15} we demonstrate the tightness of the bounds and precise asymptotics in Theorem~\ref{thm:MaxLikeAnalysis}. In the figure the ML bound is presented with its bounds and approximations. The image is similar to the Fig.~\ref{fig:SphereBoundPlot}, referring to the sphere bound. The lower bound \eqref{eqn:MaxLikeLowerBoundQfunc} is the tightest lower bound (but is based on the non-analytic $Q$ function). The analytic lower bound \eqref{eqn:MaxLikeLowerBoundAnalytic} is slightly looser than \eqref{eqn:MaxLikeLowerBoundQfunc}, but is tight enough in order to derive the precise asymptotic form \eqref{eqn:MaxLikeLowerBoundAsymptotic}. The upper bound \eqref{eqn:MaxLikeUpperBoundAnalytic} of the sphere bound is also tight. The error exponent itself (without the sub-exponential terms) is clearly way off, compared to the precise asymptotic form \eqref{eqn:MaxLikeLowerBoundAsymptotic}.

\begin{figure}
\psfrag{pemlb}{ML bound $P_e^{MLB}$}
\psfrag{MLBUB}{Upper bound \eqref{eqn:MaxLikeUpperBoundAnalytic}}
\psfrag{MLBLB}{Lower bound \eqref{eqn:MaxLikeLowerBoundAnalytic}}
\psfrag{MLBLBQ}{Lower bound \eqref{eqn:MaxLikeLowerBoundQfunc}}
\psfrag{Asym}{Asymptotic form \eqref{eqn:MaxLikeLowerBoundAsymptotic}}
\psfrag{ErrExp}{$e^{-n \E_{r}(\NLD)}$}
  \includegraphics[width=6in]{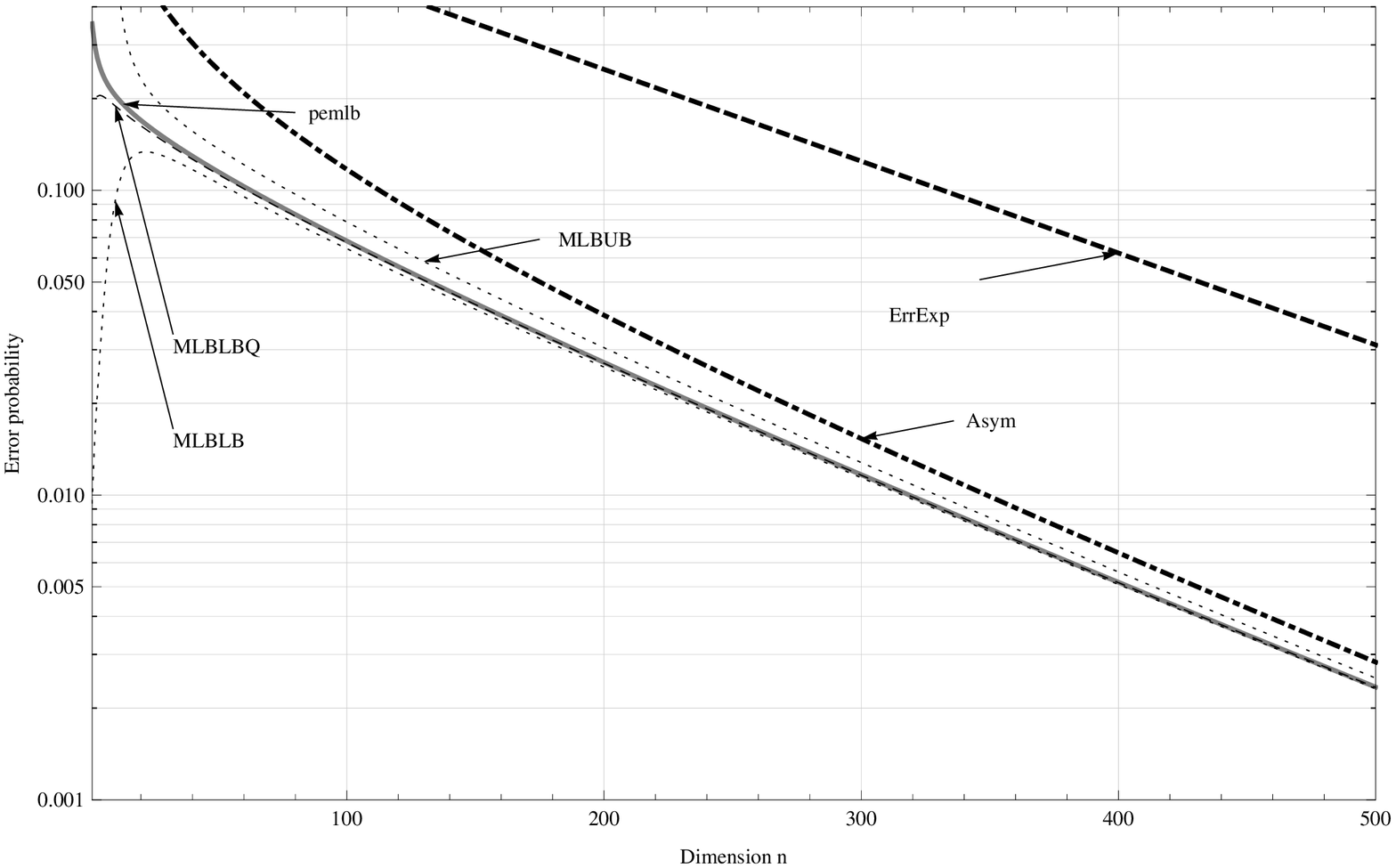}\\
  \caption{Numerical evaluation of the ML bound and its bounds and approximation in Theorem~\ref{thm:MaxLikeAnalysis} vs the dimension $n$. Here $\NLD = -1.5\mathsf{nat}$ (0.704db from capacity).
  The tight bounds \eqref{eqn:MaxLikeUpperBoundAnalytic}, \eqref{eqn:MaxLikeLowerBoundQfunc} and  \eqref{eqn:MaxLikeLowerBoundAnalytic} lead to the asymptotic form \eqref{eqn:MaxLikeLowerBoundAsymptotic}. The error exponent term alone is evidently way off compared to \eqref{eqn:MaxLikeLowerBoundAsymptotic}.
  }\label{fig:MaxLikeBoundPlotDelta15}
\end{figure}

\subsection{Tightness of the Bounds Above $\NLD_{cr}$}\label{ssec:tightness}
\begin{cor}
For $\NLD_{cr} < \NLD < \NLD^*$ the ratio between the upper and lower bounds on $P_e(n,\NLD)$ converges to a constant, i.e.
\begin{equation}
    \frac{P_e^{MLB}(n,\NLD)}{P_e^{SB}(n,\NLD)} = \frac{1}{\left(2-e^{2(\NLD^*-\NLD)}\right)} + O\left(\frac{\log n}{n}\right).
\end{equation}
\begin{proof}
The proof follows from Theorems \ref{thm:SphereBoundAnalysis} and \ref{thm:MaxLikeAnalysis}. Note that the result is tighter than the ratio of the asymptotic forms \eqref{eqn:SphereBoundPreciseAsymptotics} and \eqref{eqn:MaxLikeLowerBoundAsymptotic} (i.e. $O(\tfrac{\log n}{n})$ and not $O(\tfrac{\log^2 n}{n})$) since the term that contributes the $\log^2n$ term is $e^{-\frac{n}{2}\rho^*}$ which is common for both upper and lower bounds.
\end{proof}
\end{cor}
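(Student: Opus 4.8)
The plan is to bypass the asymptotic expansions \eqref{eqn:SphereBoundPreciseAsymptotics} and \eqref{eqn:MaxLikeLowerBoundAsymptotic} altogether and instead divide the \emph{closed-form} analytic estimates of Theorems~\ref{thm:SphereBoundAnalysis} and~\ref{thm:MaxLikeAnalysis}. Since $\NLD$ is fixed with $\NLD_{cr}<\NLD<\NLD^*$, equation \eqref{eqn:rhoStarApprox2} gives $\rho^* = e^{2(\NLD^*-\NLD)}+O(\tfrac{\log n}{n})$, and because $1<e^{2(\NLD^*-\NLD)}<2$ the window $1-\tfrac2n<\rho^*<2-\tfrac2n$ holds for all large enough $n$; hence the hypotheses of both theorems are met. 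I would then bound $P_e^{MLB}(n,\NLD)/P_e^{SB}(n,\NLD)$ from above by the ratio of the upper bound \eqref{eqn:MaxLikeUpperBoundAnalytic} to the lower bound \eqref{eqn:SphereBoundLowerBoundAnalytic}, and from below by the ratio of the lower bound \eqref{eqn:MaxLikeLowerBoundAnalytic} to the upper bound \eqref{eqn:SphereBoundUpperBoundAnalytic}. In both fractions the prefactor $e^{n(\NLD^*-\NLD)}e^{n/2}e^{-\frac n2\rho^*}$ is identical in numerator and denominator and cancels exactly, leaving only elementary rational functions of $\rho^*,\Upsilon,\Psi$ and $n$:
\begin{align*}
    \frac{P_e^{MLB}(n,\NLD)}{P_e^{SB}(n,\NLD)} &\le \frac{1}{2-\rho^*-\frac2n}\left(1+\Upsilon^{-2}\right), \\
    \frac{P_e^{MLB}(n,\NLD)}{P_e^{SB}(n,\NLD)} &\ge \frac{\rho^*-1+\frac2n}{2-\rho^*+\frac2n}\cdot\frac{1}{1+\Psi^{-2}}+\frac{1}{1+\Upsilon^{-2}}.
\end{align*}

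Next I would substitute the asymptotics of the auxiliary quantities. By \eqref{eqn:UpsilonApprox} one has $\Upsilon=\Theta(\sqrt n)$, so $\Upsilon^{-2}=O(\tfrac1n)$; since $2-\rho^*$ stays bounded away from $0$, the definition $\Psi=\frac{\sqrt n(2-\rho^*+\frac2n)}{2\sqrt{\rho^*}}$ likewise gives $\Psi=\Theta(\sqrt n)$ and $\Psi^{-2}=O(\tfrac1n)$. Plugging $\rho^*=e^{2(\NLD^*-\NLD)}+O(\tfrac{\log n}{n})$ into the upper estimate yields $\tfrac{1}{2-e^{2(\NLD^*-\NLD)}}+O(\tfrac{\log n}{n})$. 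For the lower estimate one uses the algebraic identity $\frac{a-1}{2-a}+1=\frac{1}{2-a}$ with $a=e^{2(\NLD^*-\NLD)}$, which also gives $\tfrac{1}{2-e^{2(\NLD^*-\NLD)}}$ up to an $O(\tfrac{\log n}{n})$ remainder. Sandwiching the two estimates then delivers the claimed expansion.

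This is largely bookkeeping, so the ``hard part'' lies only in getting the error order right and checking the edge conditions: one must verify that $2-e^{2(\NLD^*-\NLD)}$ and $e^{2(\NLD^*-\NLD)}-1$ are bounded away from $0$ (true for fixed $\NLD$ strictly between $\NLD_{cr}$ and $\NLD^*$, which is exactly what makes the denominators and the implied $O(\cdot)$ constants finite — these constants do depend on $\NLD$), and one must justify the sharper $O(\tfrac{\log n}{n})$ remainder rather than the $O(\tfrac{\log^2 n}{n})$ appearing in \eqref{eqn:SphereBoundPreciseAsymptotics}--\eqref{eqn:MaxLikeLowerBoundAsymptotic}. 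That sharpening is precisely the point noted in the statement: the $\log^2 n$ term in those asymptotic forms enters only through the approximation \eqref{eqn:exprhoApprox} of $e^{-\frac n2\rho^*}$, and since this factor is common to the numerator and the denominator it is removed before any expansion is performed; what remains depends on $\NLD$ only through rational functions of $\rho^*$, each of which is accurate to $O(\tfrac{\log n}{n})$ by \eqref{eqn:rhoStarApprox2}.
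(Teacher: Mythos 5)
Your proposal is correct and is essentially the paper's own argument made explicit: divide the closed-form analytic bounds \eqref{eqn:MaxLikeUpperBoundAnalytic}--\eqref{eqn:MaxLikeLowerBoundAnalytic} by \eqref{eqn:SphereBoundLowerBoundAnalytic}--\eqref{eqn:SphereBoundUpperBoundAnalytic}, cancel the common factor $e^{n(\NLD^*-\NLD)}e^{n/2}e^{-\frac{n}{2}\rho^*}$ (the sole source of the $\log^2 n$ term), and use $\rho^*=e^{2(\NLD^*-\NLD)}+O(\tfrac{\log n}{n})$ together with $\Upsilon^{-2},\Psi^{-2}=O(\tfrac1n)$ to get the $O(\tfrac{\log n}{n})$ remainder. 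The algebra, including the identity $\frac{a-1}{2-a}+1=\frac{1}{2-a}$ and the edge conditions for $\NLD_{cr}<\NLD<\NLD^*$, checks out.
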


\subsection{The ML Bound Below $\NLD_{cr}$}\label{ssec:MaxLikeAnalysisBelowDeltaCr}

Here we provide the asymptotic behavior of the ML bound at NLD values \emph{below} $\NLD_{cr}$.

\begin{theorem}\label{thm:MaxLikeAnalysisBelowDeltaCr}
For any $\NLD < \NLD_{cr}$, the ML bound can be approximated by
\begin{equation}
    P_e^{MLB}(n,\NLD) = \frac{e^{-n\E_r(\NLD)}}{\sqrt{2 \pi n}}\left(1+O\left(\tfrac{1}{n}\right)\right).\label{eqn:MaxLikeAnalysisBelowDeltaCr}
\end{equation}

\begin{proof}
We start as in the proof of Theorem \ref{thm:AchievabilityMaxLike} to have
\begin{equation}
    e^{n\NLD} V_n \int_0^{r^*} f_R(r) r^n dr
    = \frac{n}{2} e^{n\NLD} V_n^2\sigma^n (2\pi)^{-\frac{n}{2}}n^n \int_0^{\rho^*} e^{-n\rho/2} \rho^{n-1} d\rho.\label{eqn:preLaplace}
\end{equation}
We continue by approximating the integral as follows:
\begin{lem}\label{lem:MaxLikeIntegralLaplace}
Let $x >2$. Then the integral $\int_0^{x} e^{-n \rho/2}  \rho^{n-1} d\rho$ can be approximated by
\begin{align}
    \int_0^{x} e^{-n \rho/2}  \rho^{n-1} d\rho
    =\sqrt{\frac{2\pi}{n}}e^{-n}2^n \left(1+O\left(\tfrac{1}{n}\right)\right).
    \label{eqn:MLIntegralBelowDeltaCrAsymptotic}
\end{align}
\begin{proof}
The proof relies on the fact that the integrand is maximized at the interior of the interval $[0,x]$. Note that the result does not depend on $x$.

We first rewrite the integral to the form
\begin{equation}
    \int_0^{x} \frac{1}{\rho} e^{-n (\rho/2-\log\rho)} d\rho
    = \int_0^{x} g(\rho) e^{-n G(\rho)} d\rho,
\end{equation}
where $g(\rho) \triangleq \frac{1}{\rho}$ and $G(\rho)\triangleq \rho/2-\log\rho$.

When $n$ grows, the asymptotical behavior of the integral is dominated by the value of the integrand at $\tilde \rho = 2$ (which minimizes $G(\rho)$). This is formalized by Laplace's method of integration (see, e.g. \cite[Sec. 3.3]{BarndorffNielsenCox_Asymptotic}):
\begin{align*}
\int_0^{x} g(\rho) e^{-n G(\rho)} d\rho
    &= g(\tilde\rho)e^{-nG(\tilde\rho)} \sqrt\frac{2\pi}{n \frac{\partial^2G(\tilde\rho)}{\partial\rho^2}|_{\rho = \tilde\rho}}\left(1+O\left(\tfrac{1}{n}\right)\right)\\
    &= \frac{1}{2}e^{n(1-\log 2)} \sqrt\frac{2\pi}{n \cdot\tfrac{1}{4}}\left(1+O\left(\tfrac{1}{n}\right)\right),
\end{align*}
which leads to \eqref{eqn:MLIntegralBelowDeltaCrAsymptotic}.
\end{proof}
\end{lem}

Before we apply the result of the lemma to \eqref{eqn:preLaplace}, we note that whenever $\NLD$ is below the critical  $\NLD_{cr}$, $\rho^* > e^{2(\NLD^* - \NLD)} = 2e^{2(\NLD_{cr} - \NLD)} > 2$ for all $n$. Therefore for all $n$ we have
\begin{equation}
    \int_0^{2e^{2(\NLD_{cr} - \NLD)}} e^{-n\rho/2} \rho^{n-1} d\rho
    \leq \int_0^{\rho^*} e^{-n\rho/2} \rho^{n-1} d\rho
    \leq \int_0^{\infty} e^{-n\rho/2} \rho^{n-1} d\rho.
\end{equation}
We apply Lemma \ref{lem:MaxLikeIntegralLaplace} to both sides of the equation and conclude that
\begin{equation}
    \int_0^{\rho^*} e^{-n\rho/2} \rho^{n-1} d\rho = \sqrt{\frac{2\pi}{n}}e^{-n}2^n \left(1+O\left(\tfrac{1}{n}\right)\right).
\end{equation}
The proof of the theorem is completed using the approximation \eqref{eqn:VnApproxMult} for $V_n$.

It should be noted that the sphere bound part of the achievability bound vanishes with a stronger exponent ($E_{sp}(\NLD)$), and therefore does not contribute to the asymptotic value.
\end{proof}\end{theorem}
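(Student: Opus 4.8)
The plan is to split off the dominant term of the ML bound and estimate it by Laplace's method. Starting from the tight form of Theorem~\ref{thm:AchievabilityMaxLike} at its optimal radius $r^*=\reff$,
\begin{equation*}
   P_e^{MLB}(n,\NLD)=e^{n\NLD}V_n\int_0^{r^*}f_R(r)\,r^n\,dr+\Pr\{\|\bZ\|>r^*\},
\end{equation*}
the second summand is exactly the sphere bound $P_e^{SB}(n,\NLD)$, which by Theorem~\ref{thm:SphereBoundAnalysis} decays with exponent $\E_{sp}(\NLD)$. Since the straight-line portion of the random-coding exponent lies strictly below the sphere-packing exponent for $\NLD<\NLD_{cr}$, i.e. $\E_{sp}(\NLD)>\E_r(\NLD)$ there, this term is exponentially smaller than the claimed right-hand side and disappears into its $1+O(1/n)$ factor; so the whole task reduces to pinning down the first summand.

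For that term I would repeat the substitutions from the proof of Theorem~\ref{thm:AchievabilityMaxLike} (pass to a $\chi^2_n$ variable, set $t=r^2/\sigma^2$ and then $t=n\rho$, and eliminate $\Gamma(n/2)$ through $V_n=\pi^{n/2}/((n/2)\Gamma(n/2))$) to rewrite it as
\begin{equation*}
   e^{n\NLD}V_n\int_0^{r^*}f_R(r)\,r^n\,dr=\frac{n}{2}\,e^{n\NLD}V_n^2\,\sigma^n(2\pi)^{-n/2}n^n\int_0^{\rho^*}\rho^{n-1}e^{-n\rho/2}\,d\rho,\qquad \rho^*=\frac{\reff^{2}}{n\sigma^2}.
\end{equation*}
The key structural point, and the reason this regime differs from $\NLD>\NLD_{cr}$, is where the integrand peaks: writing $\rho^{n-1}e^{-n\rho/2}=\rho^{-1}e^{-nG(\rho)}$ with $G(\rho)=\rho/2-\log\rho$, the exponent $-G$ is maximized at the interior point $\rho=2$. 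One verifies that $\rho^*\ge e^{2(\NLD^*-\NLD)}$ for every finite $n$ (equivalently the elementary bound $\Gamma(n/2+1)\ge(n/(2e))^{n/2}$), and $e^{2(\NLD^*-\NLD)}=2e^{2(\NLD_{cr}-\NLD)}>2$ exactly when $\NLD<\NLD_{cr}$; hence $\rho^*>2$ for all $n>2$ and the maximizer lies strictly inside $(0,\rho^*)$.

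The core estimate is then Laplace's method applied to $\int_0^{x}\rho^{n-1}e^{-n\rho/2}\,d\rho$: with $G(2)=1-\log 2$, $G''(2)=1/4$, and the slowly varying factor $\rho^{-1}$ equal to $1/2$ at the maximizer, it gives
\begin{equation*}
   \int_0^{x}\rho^{n-1}e^{-n\rho/2}\,d\rho=\sqrt{\tfrac{2\pi}{n}}\,e^{-n}2^{n}\bigl(1+O(1/n)\bigr)
\end{equation*}
for any fixed $x>2$, and likewise for $x=\infty$; the result does not depend on $x$. Because $\rho^*$ moves with $n$ but stays strictly above $2$, I would sandwich $\int_0^{\rho^*}$ between $\int_0^{2e^{2(\NLD_{cr}-\NLD)}}$ and $\int_0^{\infty}$ — both equal to $\sqrt{2\pi/n}\,e^{-n}2^n(1+O(1/n))$ — and deduce the same for $\int_0^{\rho^*}$. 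Plugging this, the expansion \eqref{eqn:VnApproxMult} of $V_n$, and $\sigma^2=e^{-2\NLD^*}/(2\pi e)$ into the displayed identity, the $n^{-n}$ coming from $V_n^2$ cancels the $n^n$, the algebraic prefactor collapses to $1/\sqrt{2\pi n}$, and the $a^n$-type part collapses to $e^{-n\E_r(\NLD)}$, yielding \eqref{eqn:MaxLikeAnalysisBelowDeltaCr}.

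The only genuinely delicate step is making the Laplace estimate rigorous with a uniform $O(1/n)$ remainder while the upper limit of integration depends on $n$; the sandwich between a fixed $x$ strictly larger than $2$ and $x=\infty$ reduces this to the standard fixed-$x$ statement, so I expect this to be the main (if modest) obstacle. The supporting inequality $\rho^*\ge e^{2(\NLD^*-\NLD)}$ for all finite $n$ — needed so that $\rho=2$ is interior for every $n>2$ rather than only asymptotically — and the final bookkeeping of powers of $n$, $\pi$, $e$, $2$ and $\sigma$ via the definition of $\NLD^*$ are routine.
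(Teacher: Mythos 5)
Your proposal follows the paper's own proof essentially step for step: the same reduction of the first term to $\frac{n}{2}e^{n\NLD}V_n^2\sigma^n(2\pi)^{-n/2}n^n\int_0^{\rho^*}\rho^{n-1}e^{-n\rho/2}d\rho$, the same Laplace-method evaluation at the interior maximizer $\rho=2$ (with $G''(2)=1/4$ and prefactor $1/2$), the same sandwich of $\int_0^{\rho^*}$ between a fixed limit above $2$ and $\infty$ to handle the $n$-dependent endpoint, and the same dismissal of the sphere-bound term via its stronger exponent $\E_{sp}(\NLD)$. It is correct; your explicit justification of $\rho^*\ge e^{2(\NLD^*-\NLD)}$ via $\Gamma(n/2+1)\ge(n/(2e))^{n/2}$ is a small welcome detail the paper leaves implicit.
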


In Fig.~\ref{fig:MaxLikeBoundPlotDelta18} we demonstrate the tightness of the precise asymptotics in Theorem~\ref{thm:MaxLikeAnalysisBelowDeltaCr}. Here too the precise asymptotic form 
 is significantly tighter than the error exponent only.

\begin{figure}
  \psfrag{pemlb}{ML bound $P_e^{MLB}$}
  \psfrag{Asym}{Asymptotic form \eqref{eqn:MaxLikeAnalysisBelowDeltaCr}}
  \psfrag{ErrExp}{$e^{-n \E_{r}(\NLD)}$}
  \includegraphics[width=6in]{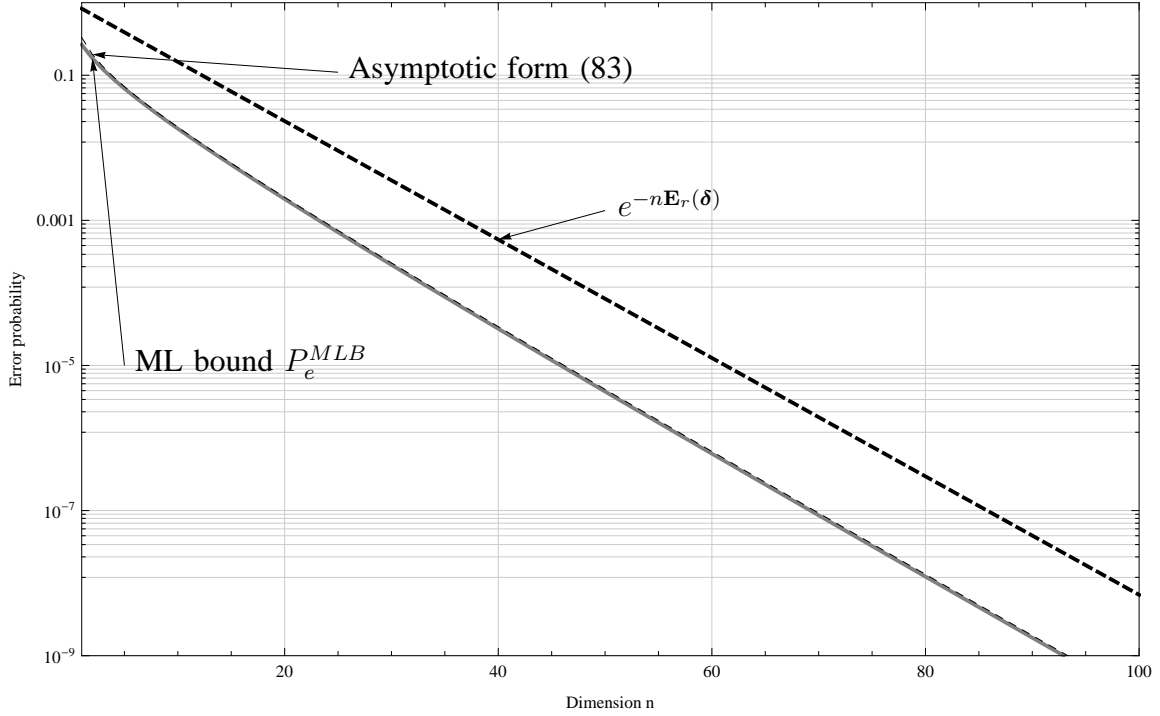}\\
  \caption{Numerical evaluation of the ML bound and its approximation in Theorem~\ref{thm:MaxLikeAnalysisBelowDeltaCr} vs the dimension $n$. Here $\NLD = -1.8\mathsf{nat}$ (3.31db from capacity).  The precise asymptotic form \eqref{eqn:MaxLikeAnalysisBelowDeltaCr} is clearly tighter than the error exponent only.
  }\label{fig:MaxLikeBoundPlotDelta18}
\end{figure}

\subsection{The ML Bound at $\NLD_{cr}$}\label{ssec:MaxLikeAnalysisAtDeltaCr}
In previous subsections we provided asymptotic forms for the upper bound on $P_e(n,\NLD)$, for $\NLD > \NLD_{cr}$ and for $\NLD < \NLD_{cr}$ (Theorems \ref{thm:MaxLikeAnalysis} and \ref{thm:MaxLikeAnalysisBelowDeltaCr} respectively). Unfortunately, neither theorem holds for $\NLD_{cr}$ exactly. We now analyze the upper bound at $\NLD_{cr}$, and show that its asymptotic form is different at this point. As a consequence, at the critical NLD, the ratio between the upper and lower bounds on $P_e(n,\NLD)$ is of the order of $\sqrt n$ (this ratio above $\NLD_{cr}$ is a constant, and below $\NLD_{cr}$ the ratio increases exponentially since the error exponents are different).

\bigskip

\begin{theorem}\label{thm:MaxLikeAnalysisAtDeltaCr}
    At $\NLD = \NLD_{cr}$, the ML bound is given asymptotically by
\begin{align}
    P_e^{MLB}(n,\NLD_{cr})
    &= e^{-n\E_r(\NLD_{cr})}
    \frac{1}{2\pi}
    \left[\sqrt{\frac{\pi}{2 n}} + \frac{\log(n\pi e^2)}{n}\right]
    \left(1+O\left(\tfrac{\log^2 n}{n}\right)\right)
    \label{eqn:MaxLikeAnalysisAtDeltaCr}\\
    &= e^{-n\E_r(\NLD_{cr})}
    \frac{1}{\sqrt{8\pi n}}\left(1+O\left(\tfrac{\log n}{\sqrt n}\right)\right)
    \label{eqn:MaxLikeAnalysisAtDeltaCrLoose}
\end{align}

\begin{proof} Appendix \ref{app:delta_cr}.\end{proof}
\end{theorem}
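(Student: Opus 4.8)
The plan is to split the ML bound at $\NLD=\NLD_{cr}$ into its two natural pieces,
\[
 P_e^{MLB}(n,\NLD_{cr}) = e^{n\NLD_{cr}}V_n\int_0^{r^*}f_R(\tilde r)\,\tilde r^{\,n}\,d\tilde r \;+\; \Pr\{\norm{\bZ}>r^*\},\qquad r^*=\reff,
\]
and to analyze each term asymptotically. The second term is precisely the sphere bound, and since $e^{2(\NLD^*-\NLD_{cr})}=2$, Theorem~\ref{thm:SphereBoundAnalysis} (equation \eqref{eqn:SphereBoundPreciseAsymptotics}) gives $\Pr\{\norm{\bZ}>r^*\}=e^{-n\E_{sp}(\NLD_{cr})}(n\pi)^{-1}\bigl(1+O(\log^2 n/n)\bigr)$, and using $\E_{sp}(\NLD_{cr})=\E_r(\NLD_{cr})=\tfrac12(1-\log 2)$ (the value of the curved branch of \eqref{eqn:Er} at $\NLD_{cr}$) this is $e^{-n\E_r(\NLD_{cr})}\tfrac1{n\pi}\bigl(1+O(\log^2 n/n)\bigr)$. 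In particular the sphere-bound piece is $\Theta(1/n)$, hence of the same order as the $1/n$ correction that will emerge from the first term, and it must be carried along to the level of precision claimed in \eqref{eqn:MaxLikeAnalysisAtDeltaCr}.

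For the first term I would reuse the reduction behind the proof of Theorem~\ref{thm:MaxLikeAnalysis}: by \eqref{eqn:preMLIntegralBounds} it equals $\tfrac n2 e^{n(\NLD_{cr}+\NLD^*)}V_n^2 e^{n/2}\sigma^{2n}n^n\int_0^{\rho^*}e^{-n\rho/2}\rho^{n-1}d\rho$, where $\rho^*=\reff^2/(n\sigma^2)$. By \eqref{eqn:rhoStarApprox1}, $\rho^*=2(n\pi)^{1/n}\bigl(1+O(1/n^2)\bigr)=2+\tfrac2n\log(n\pi)+O(\log^2 n/n^2)$. This is exactly the transitional situation that neither Lemma~\ref{lem:MaxLikeIntegralBounds} (valid only for $x<2-\tfrac2n$) nor Lemma~\ref{lem:MaxLikeIntegralLaplace} (valid for $x>2$, with the interior maximum bounded away from the endpoint) handles, because the integrand $\rho^{n-1}e^{-n\rho/2}$ is maximized at $\rho=2-\tfrac2n$, a distance $\Theta(\log n/n)$ from $\rho^*$, which is much smaller than the $\Theta(n^{-1/2})$ width of the peak. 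I would therefore establish a dedicated ``boundary--Laplace'' estimate for $\int_0^{\rho^*}\rho^{n-1}e^{-n\rho/2}d\rho$: write the integrand as $\tfrac1\rho e^{-nG(\rho)}$ with $G(\rho)=\rho/2-\log\rho$ (convex, $G(2)=1-\log 2$, $G'(2)=0$, $G''(2)=\tfrac14$), substitute $\rho=2+u$, and split $\int_0^{\rho^*}=\int_0^2+\int_2^{\rho^*}$. The $\int_0^2$ part is ``half'' a Gaussian--Laplace integral, equal to $\tfrac12 e^{-n(1-\log 2)}\sqrt{2\pi/n}$ up to $O(1/n)$-relative corrections (here one must keep the cubic term of $G$ at $\rho=2$, since $nu^3\sim u$ on the scale $u\sim n^{-1/2}$); the $\int_2^{\rho^*}$ part is a thin sliver of width $\delta:=\rho^*-2=\Theta(\log n/n)$ on which the integrand is essentially constant, contributing $\tfrac12 e^{-n(1-\log 2)}\delta\,(1+o(1))$. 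Summing and inserting $\delta=\tfrac2n\log(n\pi)+O(\log^2 n/n^2)$ yields $\int_0^{\rho^*}\rho^{n-1}e^{-n\rho/2}d\rho=\tfrac12 e^{-n(1-\log 2)}\bigl[\sqrt{2\pi/n}+\tfrac2n\log(n\pi)+O(1/n)\bigr]$.

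It then remains to assemble. Using \eqref{eqn:VnApproxMult} for $V_n$ and $\sigma^2=e^{-2\NLD^*}/(2\pi e)$ (so that $(2\pi e)^n\sigma^{2n}=e^{-2n\NLD^*}$), the prefactor $\tfrac n2 e^{n(\NLD_{cr}+\NLD^*)}V_n^2 e^{n/2}\sigma^{2n}n^n$ collapses to $\tfrac1{2\pi}e^{\frac n2(1-\log 2)}\bigl(1+O(1/n)\bigr)=\tfrac1{2\pi}e^{n\E_r(\NLD_{cr})}\bigl(1+O(1/n)\bigr)$. Multiplying by the integral estimate gives the first term as $e^{-n\E_r(\NLD_{cr})}\bigl[\tfrac1{\sqrt{8\pi n}}+\tfrac{\log(n\pi)}{2\pi n}+O(1/n)\bigr]$; adding the sphere-bound piece $e^{-n\E_r(\NLD_{cr})}\tfrac1{n\pi}$ and collecting all $O(1/n)$ contributions folds the residual constant into the logarithm, producing the form \eqref{eqn:MaxLikeAnalysisAtDeltaCr}. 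Retaining only the dominant $\Theta(n^{-1/2})$ term (everything else being $O(\log n/\sqrt n)$ relative) then gives the cruder \eqref{eqn:MaxLikeAnalysisAtDeltaCrLoose}.

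\textbf{Main obstacle.} The delicate point is the transitional Laplace estimate of the second paragraph: because the integrand's maximizer and the integration endpoint nearly coincide, there is no clean interior-Laplace or monotone-tail lemma to invoke, and one must carefully glue the ``half-Gaussian'' bulk $\int_0^2$ to the $\Theta(\log n/n)$-wide sliver $\int_2^{\rho^*}$, keeping enough terms (the cubic coefficient of $G$ at $\rho=2$ and the $(n\pi)^{1/n}$ correction hidden in $\rho^*$) so that the coefficient of $\log n/n$ — hence the exact argument of the logarithm in \eqref{eqn:MaxLikeAnalysisAtDeltaCr} — is correct, while verifying that every neglected term is genuinely $O(\log^2 n/n)$ relative.
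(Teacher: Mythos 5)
Your proposal follows essentially the same route as the paper's Appendix proof: the same two-term split with the sphere-bound piece evaluated via \eqref{eqn:SphereBoundPreciseAsymptotics} at $e^{2(\NLD^*-\NLD_{cr})}=2$ (supplying the $\tfrac{1}{n\pi}$ that turns $\log(n\pi)$ into $\log(n\pi e^2)$), the same reduction to $\int_0^{\rho^*}\rho^{n-1}e^{-n\rho/2}d\rho$ split at $\rho=2$ into a half-Laplace bulk $2^n e^{-n}\sqrt{\pi/(2n)}$ plus a sliver of width $\rho^*-2=\tfrac{2}{n}\log(n\pi)+O(\log^2 n/n^2)$, and the same assembly through \eqref{eqn:VnApproxMult}. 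The one soft spot you flag yourself --- the unresolved $O(1/n)$ inside the bracket, which strictly speaking leaves the additive constant inside the logarithm of \eqref{eqn:MaxLikeAnalysisAtDeltaCr} not fully pinned down --- is present at the same level in the paper's own argument (its boundary-Laplace step is stated with an equally optimistic error term), so your proof matches the original in both structure and rigor.
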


In Fig.~\ref{fig:MaxLikeBoundPlotDeltaCr} we demonstrate the tightness of the precise asymptotics of Theorem~\ref{thm:MaxLikeAnalysisAtDeltaCr}.

\begin{figure}
  \psfrag{pemlb}{ML bound $P_e^{MLB}$}
  \psfrag{Asym}{Tight asymptotic form \eqref{eqn:MaxLikeAnalysisAtDeltaCr}}
  \psfrag{Asym2}{Loose asymptotic form \eqref{eqn:MaxLikeAnalysisAtDeltaCrLoose}}
  \psfrag{ErrExp}{$e^{-n \E_{r}(\NLD_{cr})}$}
  \includegraphics[width=6in]{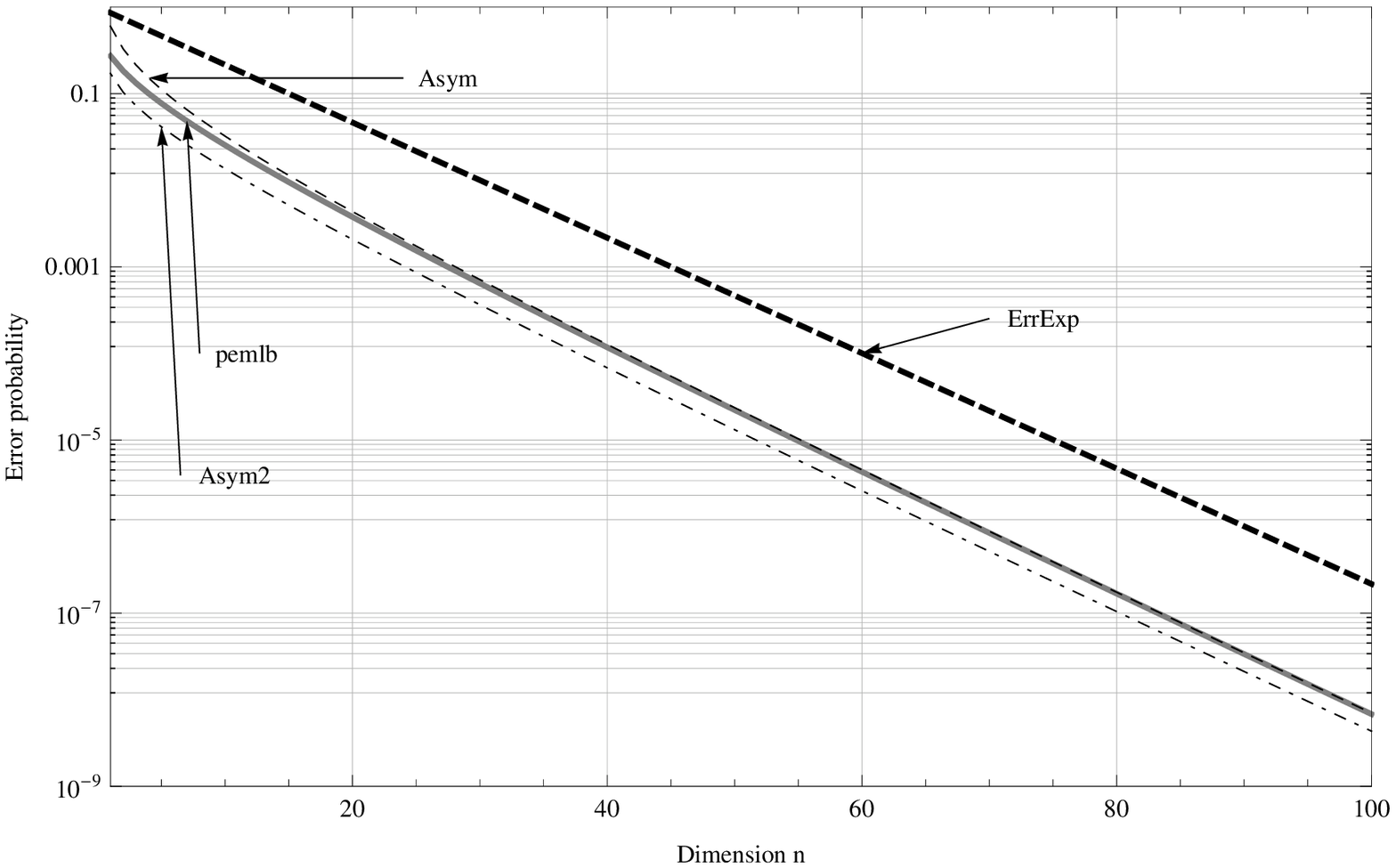}\\
  \caption{Numerical evaluation of the ML bound at $\NLD = \NLD_{cr}$ (3.01db from capacity) and its approximations in Theorem~\ref{thm:MaxLikeAnalysisAtDeltaCr} vs the dimension $n$.
  The asymptotic form \eqref{eqn:MaxLikeAnalysisAtDeltaCr} is tighter than the simpler \eqref{eqn:MaxLikeAnalysisAtDeltaCrLoose}. Both forms approximate the true value of the ML bound better than the error exponent term alone.
  }\label{fig:MaxLikeBoundPlotDeltaCr}
\end{figure}

\subsection{Asymptotic Analysis of the Typicality Bound}\label{ssec:TypicalityPreciseAsymptotics}
The typicality upper bound on $P_e(n,\NLD)$ (Theorem \ref{thm:AchievabilityTypicality}) is typically weaker than the ML-based bound (Theorem \ref{thm:AchievabilityMaxLike}). In fact, it admits a weaker exponential behavior than the random coding exponent $\E_r(\NLD)$. Define the \emph{typicality exponent} $\E_t(\NLD)$ as
\begin{equation}
    \E_t(\NLD) \triangleq \NLD^*-\NLD -\frac{1}{2}\log(1+2(\NLD^*-\NLD)).
\end{equation}

\begin{theorem}\label{thm:TypicalityPreciseAsymptotics}
 For any $\NLD < \NLD^*$, the typicality upper bound is given asymptotically by
\begin{equation}\label{eqn:TypicalityPreciseAsymptotics}
    P_e^{TB}(n,\NLD) = \frac{e^{-n \E_t(\NLD)} }{\sqrt{n\pi}} \cdot \frac{1 + 2(\NLD^* - \NLD)}{2(\NLD^* - \NLD)}\left(1+O\left(\frac{1}{n}\right)\right)
\end{equation}
\begin{proof} Appendix~\ref{app:ProofAsymptoticsTypicality}.\end{proof}
\end{theorem}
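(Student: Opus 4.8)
The plan is to evaluate the two summands of $P_e^{TB}(n,\NLD)=e^{n\NLD}V_n(r^*)^n+\Pr\{\|\bZ\|>r^*\}$ at the optimal radius $r^*=\sigma\sqrt{n(1+2\NLD^*-2\NLD)}$ from Theorem~\ref{thm:AchievabilityTypicality}, and to show that each summand is, on its own, of the form $\frac{e^{-n\E_t(\NLD)}}{\sqrt{n\pi}}$ times a constant; adding them then produces the claimed constant $\frac{1+2(\NLD^*-\NLD)}{2(\NLD^*-\NLD)}=1+\frac{1}{2(\NLD^*-\NLD)}$. Throughout I would write $\Delta\triangleq\NLD^*-\NLD$ and $\beta\triangleq 1+2\Delta$, so that $r^*=\sigma\sqrt{n\beta}$, $(r^*)^2/\sigma^2=n\beta$, and $\E_t(\NLD)=\Delta-\tfrac12\log\beta$.

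For the geometric (first) term I would use the identity $2\pi e\sigma^2=e^{-2\NLD^*}$ (the definition of $\NLD^*$) together with the multiplicative asymptotics $V_n=\left(\tfrac{2\pi e}{n}\right)^{n/2}\tfrac{1}{\sqrt{n\pi}}(1+O(1/n))$ from \eqref{eqn:VnApproxMult}. Since $(r^*)^n=\sigma^n n^{n/2}\beta^{n/2}$, the factor $n^{n/2}$ cancels the $n^{-n/2}$ in $V_n$, which leaves
\begin{equation*}
e^{n\NLD}V_n(r^*)^n=e^{n\NLD}(2\pi e\sigma^2)^{n/2}\beta^{n/2}\tfrac{1}{\sqrt{n\pi}}\bigl(1+O(1/n)\bigr)=e^{-n\Delta}\beta^{n/2}\tfrac{1}{\sqrt{n\pi}}\bigl(1+O(1/n)\bigr)=\frac{e^{-n\E_t(\NLD)}}{\sqrt{n\pi}}\bigl(1+O(1/n)\bigr).
\end{equation*}

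For the tail (second) term I would note that $\Pr\{\|\bZ\|>r^*\}=\Pr\{\chi^2_n>n\beta\}$, which by the same change of variable as in \eqref{eqn:SPBintegral} equals $\tfrac{2^{-n/2}n^{n/2}}{\Gamma(n/2)}\int_\beta^\infty\rho^{n/2-1}e^{-n\rho/2}\,d\rho$. Apply Lemma~\ref{lem:SpherePackingIntegralBounds} with $x=\beta$; the hypothesis $x>1-\tfrac2n$ holds for every $n$ because $\beta>1$. The essential point is that $\beta$ is a fixed constant bounded away from $1$, so $\Upsilon=\tfrac{n(\beta-1+2/n)}{\sqrt{2(n-2)}}=\Theta(\sqrt n)$; hence the upper bound \eqref{eqn:SpherePackingIntegralUB} and the lower bound \eqref{eqn:SpherePackingIntegralLB2} agree up to a factor $1+O(1/n)$, and using $\beta-1=2\Delta$ one gets $\int_\beta^\infty\rho^{n/2-1}e^{-n\rho/2}\,d\rho=\tfrac{\beta^{n/2}e^{-n\beta/2}}{n\Delta}(1+O(1/n))$. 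Reusing \eqref{eqn:VnApproxMult} to simplify the prefactor $\tfrac{2^{-n/2}n^{n/2}}{n\Gamma(n/2)}=\tfrac{V_n}{2}\left(\tfrac{n}{2\pi}\right)^{n/2}=\tfrac{e^{n/2}}{2\sqrt{n\pi}}(1+O(1/n))$, and observing $\tfrac12\log(e\beta e^{-\beta})=\tfrac12(1+\log\beta-\beta)=\tfrac12\bigl(\log\beta-2\Delta\bigr)=-\E_t(\NLD)$, this yields
\begin{equation*}
\Pr\{\|\bZ\|>r^*\}=\frac{1}{\sqrt{n\pi}}\cdot\frac{(e\beta e^{-\beta})^{n/2}}{2\Delta}\bigl(1+O(1/n)\bigr)=\frac{e^{-n\E_t(\NLD)}}{\sqrt{n\pi}}\cdot\frac{1}{2\Delta}\bigl(1+O(1/n)\bigr).
\end{equation*}
Adding the two displays and recalling $\Delta=\NLD^*-\NLD$ gives exactly \eqref{eqn:TypicalityPreciseAsymptotics}.

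The computation is otherwise routine; the one point requiring care is that \emph{both} summands carry the same exponential factor $e^{-n\E_t(\NLD)}$ — this is precisely why $r^*$ is the optimal (balancing) choice in Theorem~\ref{thm:AchievabilityTypicality}, and it is confirmed by the identity $\E_t(\NLD)=\Delta-\tfrac12\log\beta=\tfrac12(\beta-1-\log\beta)$, which is the Chernoff exponent of the $\chi^2_n$ tail at level $n\beta$. The only analytic input beyond bookkeeping with the $V_n$ expansion is the sandwich of Lemma~\ref{lem:SpherePackingIntegralBounds}, whose correction is $O(\Upsilon^{-2})=O(1/n)$ here because $\beta=1+2(\NLD^*-\NLD)$ is a fixed constant strictly greater than $1$; if $\NLD$ were allowed to approach $\NLD^*$ with $n$ this $O(1/n)$ control would degrade, but for fixed $\NLD$ it is uniform.
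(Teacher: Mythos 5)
Your proposal is correct and follows essentially the same route as the paper's proof in Appendix~\ref{app:ProofAsymptoticsTypicality}: evaluate the volume term directly via the expansion \eqref{eqn:VnApproxMult}, rewrite the noise-tail term as the $\chi^2$ integral and sandwich it with Lemma~\ref{lem:SpherePackingIntegralBounds} at $x=1+2(\NLD^*-\NLD)$, where $\Upsilon=\Theta(\sqrt n)$ gives the $1+O(1/n)$ control. Your write-up merely makes explicit the bookkeeping (the cancellation $\tfrac12(1+\log\beta-\beta)=-\E_t(\NLD)$ and the constants $1$ and $\tfrac{1}{2(\NLD^*-\NLD)}$ from the two summands) that the paper leaves implicit, so there is nothing to add.
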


The error exponent $\E_t(\NLD)$ is illustrated in Figure~\ref{fig:Exponents}. As seen in the figure, $\E_t(\NLD)$ is lower than $\E_r(\NLD)$ for all $\NLD$.

\begin{figure}
  \psfrag{esp}{$\E_{sp}(\delta)$}
\psfrag{er}{$\E_{r}(\delta)$}
\psfrag{et}{$\E_{t}(\delta)$}
  \begin{center}
  \includegraphics[width=6in]{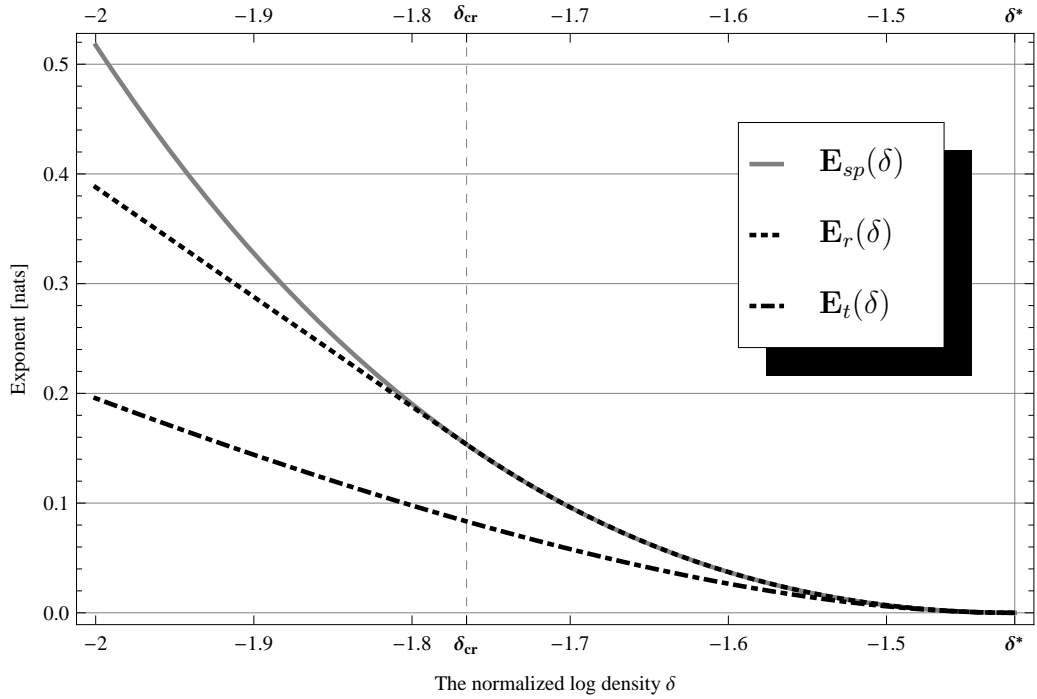}\\
    \caption{Error exponents for the unconstrained AWGN channel. The typicality error exponent $\E_t(\delta)$ (dot-dashed) vs. the random coding exponent $\E_r(\delta)$ (dotted) and the sphere packing $\E_{sp}(\delta)$ (solid). The noise variance $\sigma^2$ is set to 1.}\label{fig:Exponents}
  \end{center}
\end{figure}

\subsection{Asymptotic Analysis of $P_e^{MLB}$ with Poltyrev's $r=\sqrt n \sigma e^{\NLD^*-\NLD}$}\label{ssec:AsymptoticsMLPoltyrev}

In Poltyrev's proof of the random coding exponent \cite{Poltyrev94_CodingWithoutRestrictions}, the suboptimal value for $r$ was used, cf. Section \ref{sec:NewBounds} above. Instead of the optimal $r = \reff = e^{-\NLD}V_n^{1/n}$, he chose $r=\sqrt n \sigma e^{\NLD^*-\NLD}$. In Figures~\ref{fig:AllBoundsDelta15}~and~\ref{fig:AllBoundsDelta2} above we demonstrated how this suboptimal choice of $r$ affects the ML bound at finite $n$. In the figures, it is shown that for $\NLD=-1.5\mathsf{nat}$ (above $\NLD_{cr}$) the loss is more significant than for $\NLD=-2\mathsf{nat}$ (below $\NLD_{cr}$).
Here we utilize the techniques used in the current section in order to provide asymptotic analysis of the ML bound with the suboptimal $r$, and by that explain this phenomenon.

\begin{theorem}\label{thm:AsymptoticsMLPoltyrev}
The ML bound $P_e^{MLB}$, with $r=\sqrt n \sigma e^{\NLD^*-\NLD}$, denoted $\tilde P_e^{MLB}(n,\NLD)$, is given asymptotically as follows:

For $\NLD_{cr} < \NLD < \NLD^*$:
\begin{align}
    \tilde P_e^{MLB}(n,\NLD)
    &= e^{-n \E_r(\NLD)} \left[\frac{1}{n\pi (2-e^{2(\NLD^*-\NLD)})} + \frac{1}{\sqrt{n\pi}(e^{2(\NLD^*-\NLD)}-1)} \right]\left(1+O\left(\tfrac{1}{n}\right)\right)
    \label{eqn:PoltyrevBoundAnalysisAboveDeltaCrTight}\\
    &= e^{-n \E_r(\NLD)}
    \frac{1}{\sqrt{n\pi}(e^{2(\NLD^*-\NLD)}-1)}\left(1+O\left(\tfrac{1}{\sqrt n}\right)\right).
    \label{eqn:PoltyrevBoundAnalysisAboveDeltaCrLoose}
\end{align}

For $\NLD < \NLD_{cr}$:
\begin{equation}\label{eqn:PoltyrevBoundAnalysisBelowDeltaCr}
    \tilde P_e^{MLB}(n,\NLD) = e^{-n\E_r(\NLD)}\frac{1}{\sqrt{2 \pi n}}\left(1+O\left(\tfrac{1}{n}\right)\right).
\end{equation}

For $\NLD = \NLD_{cr}$:
\begin{equation}\label{eqn:PoltyrevBoundAnalysisAtDeltaCr}
    \tilde P_e^{MLB}(n,\NLD_{cr}) = e^{-n\E_r(\NLD_{cr})}\frac{1}{\sqrt{\pi n}}
    \left[1+\frac{1}{\sqrt 8}\right]
    \left(1+O\left(\tfrac{1}{n}\right)\right)
\end{equation}
\end{theorem}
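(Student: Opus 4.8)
The plan is to reuse, almost verbatim, the machinery developed for the ML bound with the optimal radius $r^{*}=\reff$ (Theorems~\ref{thm:MaxLikeAnalysis}, \ref{thm:MaxLikeAnalysisBelowDeltaCr} and~\ref{thm:MaxLikeAnalysisAtDeltaCr}), with the single observation that replacing $\reff$ by Poltyrev's $r=\sqrt n\,\sigma\,e^{\NLD^{*}-\NLD}$ amounts to replacing $\rho^{*}=\reff^{2}/(n\sigma^{2})=e^{-2\NLD}V_n^{-2/n}/(n\sigma^{2})$ by the ``clean'' value $\mu\triangleq r^{2}/(n\sigma^{2})=e^{2(\NLD^{*}-\NLD)}$ as the upper limit of the relevant integrals. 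By \eqref{eqn:rhoStarApprox1}, $\rho^{*}=\mu\,(n\pi)^{1/n}\bigl(1+O(1/n^{2})\bigr)=\mu\bigl(1+\tfrac1n\log(n\pi)+O(\tfrac{\log^{2}n}{n^{2}})\bigr)$, so this substitution is precisely what removes the spurious factor $(n\pi)^{-\frac12 e^{2(\NLD^{*}-\NLD)}}$ that $\rho^{*}$ generates through $\rho^{*\,n}$ and $e^{-n\rho^{*}/2}$; this is both the structural reason Theorem~\ref{thm:AsymptoticsMLPoltyrev} differs from Theorems~\ref{thm:MaxLikeAnalysis}--\ref{thm:MaxLikeAnalysisAtDeltaCr} and the reason the suboptimal choice costs a polynomial factor above $\NLD_{cr}$ but nothing asymptotically below it (where the limit disappears from the answer altogether).

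Concretely, I would start from $\tilde P_e^{MLB}(n,\NLD)=e^{n\NLD}V_n\int_0^{r}f_R(\tilde r)\tilde r^{n}\,d\tilde r+\Pr\{\|\bZ\|>r\}$ and, exactly as in the derivations of \eqref{eqn:preMLIntegralBounds} and \eqref{eqn:SPBintegral}, rewrite the two terms as
\[
e^{n\NLD}V_n\int_0^{r}f_R(\tilde r)\tilde r^{n}\,d\tilde r=\frac n2\,e^{n(\NLD+\NLD^{*})}V_n^{2}e^{n/2}\sigma^{2n}n^{n}\int_0^{\mu}\rho^{\,n-1}e^{-n\rho/2}\,d\rho ,
\]
\[
\Pr\{\|\bZ\|>r\}=\frac{2^{-n/2}n^{n/2}}{\Gamma(n/2)}\int_{\mu}^{\infty}\rho^{\,n/2-1}e^{-n\rho/2}\,d\rho .
\]
Substituting the multiplicative approximation \eqref{eqn:VnApproxMult} gives $V_n^{2}\sigma^{2n}n^{n}=e^{-2n\NLD^{*}}(n\pi)^{-1}(1+O(1/n))$ and $\tfrac{2^{-n/2}n^{n/2}}{\Gamma(n/2)}=\tfrac n2 e^{n/2}(n\pi)^{-1/2}(1+O(1/n))$, and since $\mu=e^{2(\NLD^{*}-\NLD)}$ the powers $\mu^{n}$, $\mu^{n/2}$ combine with the remaining exponentials so that both terms carry the common exponential $e^{-n\E_{sp}(\NLD)}$ (which equals $e^{-n\E_r(\NLD)}$ for $\NLD\ge\NLD_{cr}$). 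Everything is thus reduced to estimating the two $\rho$-integrals with the same three lemmas used in Section~\ref{sec:Properties}.

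For $\NLD_{cr}<\NLD<\NLD^{*}$ we have $1<\mu<2$, hence $\mu<2-\tfrac2n$ for $n$ large; Lemma~\ref{lem:MaxLikeIntegralBounds} with $x=\mu$ gives $\int_0^{\mu}\rho^{\,n-1}e^{-n\rho/2}\,d\rho=\tfrac{2\mu^{n}e^{-n\mu/2}}{n(2-\mu)}(1+O(1/n))$, and Lemma~\ref{lem:SpherePackingIntegralBounds} with $x=\mu>1$ gives $\int_{\mu}^{\infty}\rho^{\,n/2-1}e^{-n\rho/2}\,d\rho=\tfrac{2\mu^{n/2}e^{-n\mu/2}}{n(\mu-1)}(1+O(1/n))$; the integral term then contributes $\tfrac{1}{n\pi(2-\mu)}$ and the tail term $\tfrac{1}{\sqrt{n\pi}(\mu-1)}$, which is exactly \eqref{eqn:PoltyrevBoundAnalysisAboveDeltaCrTight}, and dropping the $\Theta(1/\sqrt n)$-smaller integral term yields \eqref{eqn:PoltyrevBoundAnalysisAboveDeltaCrLoose}. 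For $\NLD<\NLD_{cr}$ we have $\mu>2$, so the integrand $\rho^{\,n-1}e^{-n\rho/2}$ attains its maximum at the interior point $\rho=2$ and Lemma~\ref{lem:MaxLikeIntegralLaplace} applies; since its conclusion $\sqrt{2\pi/n}\,e^{-n}2^{n}(1+O(1/n))$ is independent of the upper limit, the integral term equals the one appearing with $r=\reff$ in the proof of Theorem~\ref{thm:MaxLikeAnalysisBelowDeltaCr}, while the tail term has the strictly larger exponent $\E_{sp}(\NLD)>\E_r(\NLD)$ and is absorbed into the error term; hence \eqref{eqn:PoltyrevBoundAnalysisBelowDeltaCr} coincides with \eqref{eqn:MaxLikeAnalysisBelowDeltaCr}.

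The delicate case — and the main obstacle — is $\NLD=\NLD_{cr}$, where $\mu=2$ exactly, so the maximiser of $\rho^{\,n-1}e^{-n\rho/2}=\rho^{-1}e^{-nG(\rho)}$ with $G(\rho)=\rho/2-\log\rho$ sits precisely at the right endpoint $\rho=2$; there neither Lemma~\ref{lem:MaxLikeIntegralBounds} (which needs $x<2-\tfrac2n$) nor Lemma~\ref{lem:MaxLikeIntegralLaplace} (which needs $x>2$) applies. I would handle $\int_0^{2}\rho^{-1}e^{-nG(\rho)}\,d\rho$ by a one-sided (``half'') Laplace expansion about $\rho=2$, using $G'(2)=0$, $G''(2)=\tfrac14$: only the left half of the Gaussian peak lies inside the interval, so $\int_0^{2}\rho^{\,n-1}e^{-n\rho/2}\,d\rho=\tfrac12\sqrt{2\pi/n}\,e^{-n}2^{n}\bigl(1+O(\tfrac1{\sqrt n})\bigr)$, half of the full Laplace value, which supplies the $\tfrac1{\sqrt{8\pi n}}$ summand; the tail integral at $\mu=2$ is handled by Lemma~\ref{lem:SpherePackingIntegralBounds} with $x=2$, giving $\int_2^{\infty}\rho^{\,n/2-1}e^{-n\rho/2}\,d\rho=\tfrac{2\cdot 2^{n/2}e^{-n}}{n}(1+O(1/n))$ and hence the dominant $\tfrac1{\sqrt{\pi n}}$ summand; adding these gives $e^{-n\E_r(\NLD_{cr})}\tfrac1{\sqrt{\pi n}}\bigl[1+\tfrac1{\sqrt 8}\bigr]$, as in \eqref{eqn:PoltyrevBoundAnalysisAtDeltaCr}. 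The technical care is entirely in the error accounting of this boundary expansion — tracking the $\rho^{-1}$ prefactor and the cubic coefficient $G'''(2)$, which govern the sub-leading correction — and in the $\tfrac2n$ shifts inside Lemma~\ref{lem:SpherePackingIntegralBounds}; the genuinely harder sister computation behind Theorem~\ref{thm:MaxLikeAnalysisAtDeltaCr} (where the limit is $\rho^{*}$, a hair above $2$ rather than equal to it) provides a useful consistency check.
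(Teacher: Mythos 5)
Your proposal follows essentially the same route as the paper's proof in Appendix~I: the same two-term decomposition with upper/lower integration limit $\tilde\rho=e^{2(\NLD^*-\NLD)}$, Lemmas~\ref{lem:MaxLikeIntegralBounds} and~\ref{lem:SpherePackingIntegralBounds} for $\NLD_{cr}<\NLD<\NLD^*$, Lemma~\ref{lem:MaxLikeIntegralLaplace} (with the exponentially smaller tail absorbed) for $\NLD<\NLD_{cr}$, and at $\NLD=\NLD_{cr}$ the one-sided Laplace expansion at the endpoint $\rho=2$ giving the $\tfrac{1}{\sqrt 8}$ summand plus the sphere-bound term via Lemma~\ref{lem:SpherePackingIntegralBounds} -- exactly how the paper adapts Theorems~\ref{thm:MaxLikeAnalysisBelowDeltaCr} and~\ref{thm:MaxLikeAnalysisAtDeltaCr} with $\tilde\rho$ in place of $\rho^*$ (no splitting needed since $\tilde\rho=2$). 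The argument and the resulting constants are correct, so this is a faithful reconstruction of the paper's proof.
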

Notes:
\begin{itemize}
  \item For $\NLD > \NLD_{cr}$, $\tilde P_e^{MLB}(n,\NLD)$ is indeed asymptotically worse than $P_e^{MLB}$ with the optimal $r=\reff$ \eqref{eqn:AchievabilityMaxLike}, see \eqref{eqn:MaxLikeLowerBoundAsymptotic}. Specifically, the choice of $r=\sqrt n \sigma e^{\NLD^*-\NLD}$ only balances the exponents of the two expressions of the bound \eqref{eqn:AchievabilityMaxLike}, while leaving the sub-exponential terms unbalanced - see \eqref{eqn:PoltyrevBoundAnalysisAboveDeltaCrTight}. The optimal selection $r=\reff$ balances the sub-exponential terms to the order of $n^{-\frac{1}{2}e^{2(\NLD^*-\NLD)}}$, see Theorem \ref{thm:MaxLikeAnalysis}. This in fact quantifies the asymptotic gap between the bounds, as seen in the Fig.~\ref{fig:AllBoundsDelta15}.
  \item For $\NLD < \NLD_{cr}$, the selection of the suboptimal $r$ has no asymptotic effect, as seen by comparing \eqref{eqn:PoltyrevBoundAnalysisBelowDeltaCr} and \eqref{eqn:MaxLikeAnalysisBelowDeltaCr}. This corroborates the numerical findings presented in Fig.~\ref{fig:AllBoundsDelta2}.
  \item For $\NLD = \NLD_{cr}$ the asymptotic form of the bound is changes by a constant (compare \eqref{eqn:PoltyrevBoundAnalysisAtDeltaCr} and \eqref{eqn:MaxLikeAnalysisAtDeltaCr},\eqref{eqn:MaxLikeAnalysisAtDeltaCrLoose}), and the correction term in the approximation tighter.
\end{itemize}
\begin{proof} Appendix~\ref{app:ProofAsymptoticsMLPoltyrev}.\end{proof}

\fi 

\section{Asymptotics for Fixed Error Probability}\label{sec:NormalApprox}
\if \NormalApprox 1

In the previous section we were interested in the asymptotic behavior of $P_e(n,\NLD)$ when the NLD $\NLD$ is fixed. We now turn to look at a related scenario where the error probability $\eps$ is fixed, and we are interested in the asymptotic behavior of the optimal achievable NLD, denoted $\NLD_\eps(n)$, with $n \ra \infty$. This setting parallels the channel dispersion type results \cite{Strassen62_Asymptotische}\cite{PolyanskiyPVFiniteLength10}\cite[Problem 2.1.24]{CsiszarKorner81}, and is strongly related to the dispersion of the power constrained AWGN channel \cite{PolyanskiyPV09_GaussianDispersion}\cite{PolyanskiyPVFiniteLength10}.

\subsection{The Dispersion of Infinite Constellations}

Let $\eps>0$ denote a fixed error probability value. Clearly, for any $\eps$, $\NLD_\eps(n)$ approaches the optimal NLD $\NLD^*$ as $n \ra \infty$. Here we study the asymptotic behavior of this convergence.
\begin{theorem}\label{thm:NormalApprox}
For a fixed error probability $\eps$, the optimal NLD $\NLD_\eps(n)$ is given, for large enough $n$, by
\begin{equation}\label{eqn:NormalApprox}
    \NLD_\eps(n) = \NLD^* - \sqrt\frac{1}{2n}Q^{-1}(\eps) +\frac{1}{2n}\log n + O\left(\frac{1}{n}\right).
\end{equation}
\end{theorem}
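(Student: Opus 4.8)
The plan is to sandwich $\NLD_\eps(n)$ between a converse, coming from the sphere bound (Theorem~\ref{thm:ConverseFiniteDimIC}), and an achievability bound, coming from the ML bound (Theorem~\ref{thm:AchievabilityMaxLike}), with the Berry--Esseen theorem supplying the estimates of the tail probability $\Pr\{\|\bZ\|>\reff\}$ that appears in both. Two preliminary facts do the bookkeeping. First, $\|\bZ\|^2=\sum_{i=1}^n Z_i^2$ is a sum of i.i.d.\ variables with mean $\sigma^2$, variance $2\sigma^4$ and a finite third moment, so Berry--Esseen gives $\bigl|\Pr\{\|\bZ\|^2>t\}-Q\bigl(\tfrac{t-n\sigma^2}{\sqrt{2n}\,\sigma^2}\bigr)\bigr|\le c/\sqrt n$ uniformly in $t$. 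Second, writing $\rho^*=\reff^2/(n\sigma^2)$ and $\delta_n=\rho^*-1$, the approximation \eqref{eqn:VnApproxMult} together with $e^{-2\NLD^*}=2\pi e\sigma^2$ gives $\reff^2=n\sigma^2 e^{2(\NLD^*-\NLD)}(n\pi)^{1/n}(1+O(1/n^2))$; hence $\delta_n=2(\NLD^*-\NLD)+\tfrac1n\log n+O(1/n)$ whenever $\NLD^*-\NLD=O(1/\sqrt n)$, and, using only $e^{2x}\ge1+2x$, the one-sided estimate $\delta_n\ge2(\NLD^*-\NLD)+\tfrac1n\log n-O(1/n)$ holds with no smallness assumption on $\NLD^*-\NLD$. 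The decisive structural point is that a $1/\sqrt n$ error in a threshold on $\|\bZ\|^2$ becomes, after the $\sqrt{n/2}$ normalization and inversion of $Q$ and of the $\delta_n$-to-$\NLD$ map, only an $O(1/n)$ error in $\NLD$.

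For the converse, let $\S$ be any IC of NLD $\NLD$ with $P_e(\S)\le\eps$. Theorem~\ref{thm:ConverseFiniteDimIC} and the first fact give $\eps\ge P_e^{SB}(n,\NLD)=\Pr\{\|\bZ\|^2>\reff^2\}\ge Q(\sqrt{n/2}\,\delta_n)-c/\sqrt n$, so $\sqrt{n/2}\,\delta_n\ge Q^{-1}(\eps+c/\sqrt n)=Q^{-1}(\eps)+O(1/\sqrt n)$, using that $Q^{-1}$ is continuously differentiable at $\eps\in(0,1)$; thus $\delta_n\ge\sqrt{2/n}\,Q^{-1}(\eps)+O(1/n)$. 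Feeding this into $e^{2(\NLD^*-\NLD)}(n\pi)^{1/n}(1+O(1/n^2))=1+\delta_n$ and taking logarithms yields $\NLD\le\NLD^*-\sqrt{1/(2n)}\,Q^{-1}(\eps)+\tfrac1{2n}\log n+O(1/n)$, and since this holds for every admissible $\S$, the same bound applies to $\NLD_\eps(n)$.

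For the achievability, fix a constant $K$ and set $\NLD_n:=\NLD^*-\sqrt{1/(2n)}\,Q^{-1}(\eps)+\tfrac1{2n}\log n-\tfrac Kn$, so $\NLD^*-\NLD_n=O(1/\sqrt n)$; by the second preliminary fact, $\sqrt{n/2}\,\delta_n$ then exceeds $Q^{-1}(\eps)$ by a positive amount of order $K/\sqrt n$ (up to an $O(1/\sqrt n)$ error independent of $K$). Theorem~\ref{thm:AchievabilityMaxLike} with $r=\reff$ produces a lattice of NLD $\NLD_n$ whose error is at most $\Pr\{\|\bZ\|>\reff\}$ plus the ``redundancy'' term $e^{n\NLD_n}V_n\int_0^{\reff}f_R(\tilde r)\tilde r^n\,d\tilde r$. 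Berry--Esseen followed by a first-order Taylor expansion of $Q$ at $Q^{-1}(\eps)$ bounds the first term by $\eps-\beta K/\sqrt n+O(1/\sqrt n)$ for some $\beta>0$ depending only on $\eps$, the $O(1/\sqrt n)$ being independent of $K$. For the redundancy term, using $e^{n\NLD_n}V_n=\reff^{-n}$ and the elementary upper bound of Lemma~\ref{lem:MaxLikeIntegralBounds} (applicable since $0<\rho^*<2-2/n$ for large $n$), it is at most $\frac{e^{n(\NLD^*-\NLD_n)-\frac n2\delta_n}}{\,2-\rho^*-2/n\,}$, and the unconditional lower bound on $\delta_n$ forces the exponent to $-\tfrac12\log n+O(1)$, making this term $O(1/\sqrt n)$ (polynomially, not exponentially, small) with implied constant independent of $K$. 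Choosing $K$ large enough (depending on $\eps$ through $\beta$) makes the $-\beta K/\sqrt n$ slack dominate all the remaining $O(1/\sqrt n)$ terms, so the total error is $\le\eps$ for $n$ large; hence $\NLD_\eps(n)\ge\NLD_n=\NLD^*-\sqrt{1/(2n)}\,Q^{-1}(\eps)+\tfrac1{2n}\log n-O(1/n)$, which matches the converse up to $O(1/n)$ and yields \eqref{eqn:NormalApprox}.

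The step I expect to be the main obstacle is this last one. The fixed-$\NLD$ asymptotics of Section~\ref{sec:Properties} cannot be used here: as $\NLD\to\NLD^*$ the denominators in \eqref{eqn:MaxLikeLowerBoundAsymptotic} vanish, so the redundancy term must instead be controlled \emph{uniformly} over the window $\NLD^*-\NLD=O(1/\sqrt n)$ via the non-asymptotic Lemma~\ref{lem:MaxLikeIntegralBounds}; one must also verify that every expansion invoked ($V_n$, $(n\pi)^{1/n}$, $\log(1+x)$, $e^{2x}$, $Q^{-1}$) carries an error term uniform over that window, which is exactly what allows the two separate $O(1/n)$'s to be fused into the single $O(1/n)$ of \eqref{eqn:NormalApprox}. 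A subtler but important sanity check is that the redundancy term is genuinely $\Theta(1/\sqrt n)$ in this regime (the $e^{n\NLD_n}$ growth is cancelled only up to the $\tfrac12\log n$ correction carried by $\reff$), so it does not vanish on its own and must be absorbed by the $K/n$ margin in $\NLD_n$.
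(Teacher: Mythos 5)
Your proposal is correct and follows essentially the same route as the paper: the converse is identical (sphere bound of Theorem~\ref{thm:ConverseFiniteDimIC} plus the Berry--Esseen estimate of Lemma~\ref{lem:GaussianOutOfSphereBerryEssen}, then inversion of $Q$ and of the $\reff$-to-$\NLD$ map), and the achievability rests on the same three ingredients the paper uses -- the ML bound of Theorem~\ref{thm:AchievabilityMaxLike}, Lemma~\ref{lem:GaussianOutOfSphereBerryEssen}, and the integral upper bound of Lemma~\ref{lem:MaxLikeIntegralBounds}. The only difference is bookkeeping: you fix the NLD at the target value minus a $K/n$ margin, take $r=\reff$, and verify $P_e\leq\eps$ (correctly noting the uniformity in $K$ and that the redundancy term is only $\Theta(1/\sqrt n)$), whereas the paper fixes the error split $\eps(1-1/\sqrt n)+\eps/\sqrt n$, lets this determine $r$ and $\gamma$, and reads off the resulting NLD -- the two parametrizations reduce to the same estimates.
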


The proof is based on an asymptotic analysis of the finite-dimensional bounds derived in Section~\ref{sec:NewBounds}. Specifically, the converse bound (an upper bound in \eqref{eqn:NormalApprox}) is based on the sphere bound \eqref{eqn:SphereBoundFirst}. The achievability part (a lower bound in \eqref{eqn:NormalApprox}) is based on the ML bound \eqref{eqn:AchievabilityMaxLike}.
The weaker typicality bound is also useful for deriving a result of the type \eqref{eqn:NormalApprox}, but in a slightly weaker form - the typicality bound can only lead to
\begin{equation}\label{eqn:NLDnormalApproxLoose}
    \NLD_\eps(n) \geq \NLD^* - \sqrt\frac{1}{2n}Q^{-1}(\eps) + O\left(\frac{1}{n}\right).
\end{equation}

In Fig.~\ref{fig:NLDBoundsForDispersion} we show the bounds on $\NLD_\eps(n)$ that are derived from the finite dimensional bounds on $P_e(n,\NLD)$ given in Sec.~\ref{sec:NewBounds}, along with the asymptotic form \eqref{eqn:NormalApprox}, derived in this section, which tightly approximates $\NLD_\eps(n)$. In addition, the term \eqref{eqn:NLDnormalApproxLoose} is also depicted, which only loosely approximates $\NLD_\eps(n)$. The chosen error probability for the figure is $\eps = 0.01$.

\begin{figure}
  {\scriptsize
  \psfrag{zzPoltyrevCapacity}{Poltyrev capacity $\NLD^*$}
  \psfrag{zzPoltyrevCritical}{Critical rate $\NLD_{cr}$}
  \psfrag{zzConverse}{Converse}
  \psfrag{zzDirectML}{ Achievability (ML)}
  \psfrag{zzDirectT}{ Achievability (Typicality)}
  \psfrag{zzApprox}{Tight approximation \eqref{eqn:NormalApprox}
  $\NLD^* - \sqrt\frac{1}{2n}Q^{-1}(\eps) + \frac{1}{2n}\log n$}
  \psfrag{zzApprox2}{Loose approximation \eqref{eqn:NLDnormalApproxLoose}
  $\NLD^* - \sqrt\frac{1}{2n}Q^{-1}(\eps)$}
  \noindent\makebox[\textwidth]{
  \includegraphics[width=8in]{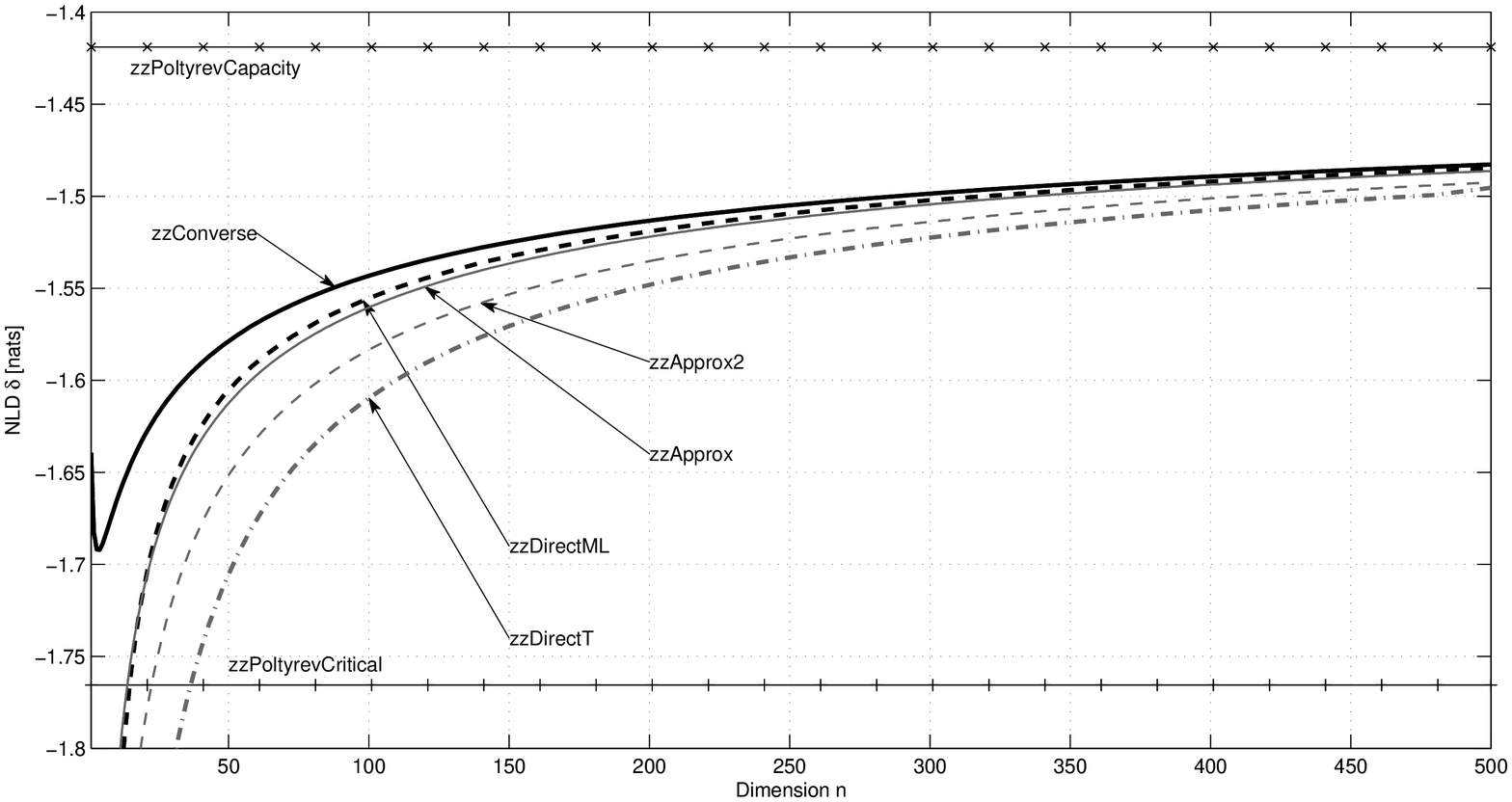}}\\
    }
  \caption{Bounds and approximations of the optimal NLD $\NLD_\eps(n)$ for error probability $\eps = 0.01$. Here the noise variance $\sigma^2$ is set to $1$.}\label{fig:NLDBoundsForDispersion}
\end{figure}

\bigskip

Before proving the theorem, let us discuss the result. By the similarity of Equations \eqref{eqn:DMCdispersion} and \eqref{eqn:NormalApprox} we can isolate the constant $\frac{1}{2}$ and identify it as the dispersion of the unconstrained AWGN setting. This fact can be intuitively explained from several directions.

One interesting property of the channel dispersion theorem \eqref{eqn:DMCdispersion} is the following connection to the error exponent. Under some mild regularity assumptions, the error exponent can be approximated near the capacity by
\begin{equation}\label{eqn:ApproxErrExpByDispersion}
    \E(R) \cong \frac{(C-R)^2}{2V},
\end{equation}
where $V$ is the channel dispersion.
The fact that the error exponent can be approximated by a parabola with second derivative $\frac{1}{V}$ was already known to Shannon (see \cite[Fig. 18]{PolyanskiyPVFiniteLength10}). This property holds for DMC's and for the power constrained AWGN channel and is conjectured to hold in more general cases.
Note, however, that while the parabolic behavior of the exponent hints that the gap to the capacity should behave as $O\left(\frac{1}{\sqrt{n}}\right)$, the dispersion theorem
cannot be derived directly from the error exponent theory. Even if the error probability was given by $e^{-n\E(R)}$ exactly, \eqref{eqn:DMCdispersion} cannot be deduced from \eqref{eqn:ApproxErrExpByDispersion} (which holds only in the Taylor approximation sense).

Analogously to \eqref{eqn:ApproxErrExpByDispersion}, we examine the error exponent for the unconstrained Gaussian setting. For NLD values above the critical NLD $\NLD_{cr}$ (but below $\NLD^*$), the error exponent is given by
\cite{Poltyrev94_CodingWithoutRestrictions}:
\begin{equation}\label{eqn:PoltyrevErrExp}
    \E(\NLD) = \frac{e^{-2\NLD}}{4 \pi e \sigma^2} + \NLD + \frac{1}{2}\log 2\pi\sigma^2.
\end{equation}
By straightforward differentiation we get that the second derivative (w.r.t. $\NLD$) of $\E(\NLD,\sigma^2)$ at $\NLD = \NLD^*$ is given by $2$, so according to \eqref{eqn:ApproxErrExpByDispersion}, it is expected that the dispersion for the unconstrained AWGN channel will be $\frac{1}{2}$. This agrees with our result \eqref{eqn:NormalApprox} and its similarity to \eqref{eqn:DMCdispersion}, and extends the correctness of the conjecture \eqref{eqn:ApproxErrExpByDispersion} to the unconstrained AWGN setting as well. It should be noted, however, that our result provides more than just proving the conjecture: there also exist examples where the error exponent is well defined (with second derivative), but a connection of the type \eqref{eqn:ApproxErrExpByDispersion} can only be achieved asymptotically with $\eps \ra 0$ (see, e.g. \cite{IngberFederPBICM_full}). Our result \eqref{eqn:NormalApprox} holds for any finite $\eps$, and also gives the exact $\frac{1}{n}\log n$ term in the expansion.

Another indication that the dispersion for the unconstrained setting should be $\frac{1}{2}$ comes from the connections to the power constrained AWGN. While the capacity $\frac{1}{2}\log(1+P)$, where $P$ denotes the channel SNR, is clearly unbounded with $P$, the form of the error exponent curve does have a nontrivial limit as $P \ra \infty$. In \cite{ErezZamirAWGN} it was noticed that this limit is the error exponent of the unconstrained AWGN channel (sometimes termed the `Poltyrev exponent'), where the distance to the capacity is replaced by the NLD distance to $\NLD^*$.
By this analogy we examine the dispersion of the power constrained AWGN channel at high SNR. In \cite{PolyanskiyPVFiniteLength10} the dispersion was found, given (in $\mathsf{nat}^2$ per channel use) by
\begin{equation}\label{eqn:PCAWGNdispersion}
    V_{AWGN} = \frac{P(P+2)}{2(P+1)^2}.
\end{equation}
This term already appeared in Shannon's 1959 paper on the AWGN error exponent \cite{Shannon59Gaussian}, where its inverse is exactly the second derivative of the error exponent at the capacity (i.e. \eqref{eqn:ApproxErrExpByDispersion} holds for the AWGN channel). It is therefore no surprise that by taking $P\ra\infty$, we get the desired value of $\frac{1}{2}$, thus completing the analogy between the power constrained AWGN and its unconstrained version. This convergence is quite fast, and is tight for SNR as low as $10$dB (see Fig. \ref{fig:AWGNdispersion}).

\begin{figure}\begin{center}
  \includegraphics[width=5in]{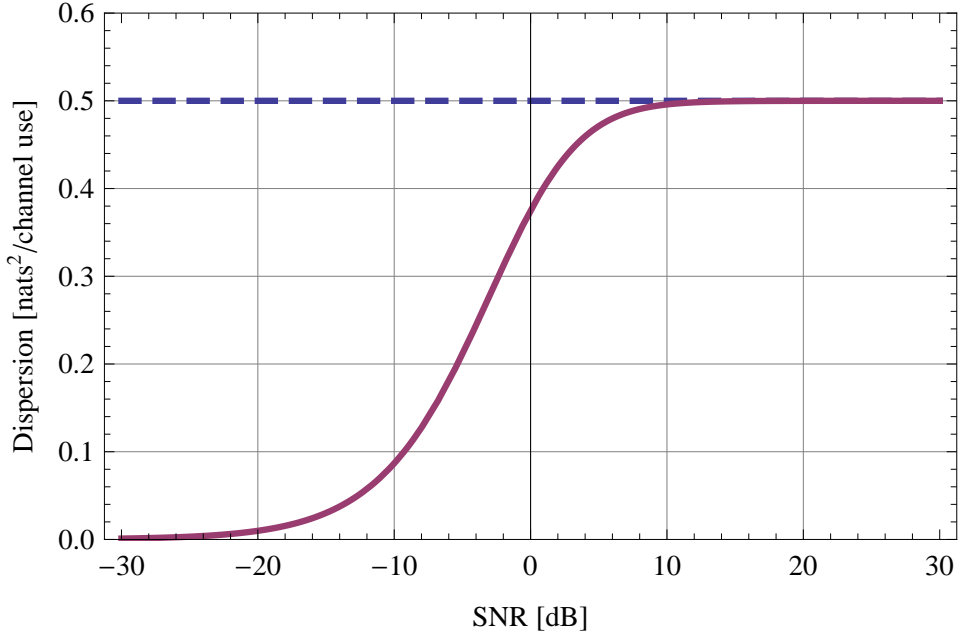}\\
  \caption{The power-constrained AWGN dispersion \eqref{eqn:PCAWGNdispersion} (solid) vs. the unconstrained dispersion (dashed)}\label{fig:AWGNdispersion}
\end{center}\end{figure}

\subsection{A Key Lemma}

In order to prove Theorem~\ref{thm:NormalApprox} we need the following lemma regarding the norm of a Gaussian vector.
\begin{lem}\label{lem:GaussianOutOfSphereBerryEssen}
Let $\bZ = [Z_1,...,Z_n]^T$ be a vector of $n$ zero-mean, independent Gaussian random variables, each with mean $\sigma^2$. Let $r>0$ be a given arbitrary radius. Then the following holds for any dimension $n$:
\begin{equation}
    \left|\Pr\{ \|\bZ\| > r \} - Q\left( \frac{r^2 - n\sigma^2}{\sigma^2 \sqrt{2n}}\right) \right| \leq \frac{6T}{\sqrt n},
\end{equation}
where
\begin{equation}\label{eqn:T}
    T = \E\left[\left|\frac{X^2-1}{\sqrt 2}\right|^3\right]\approx 3.0785,
\end{equation}
for a Standard Gaussian RV $X$.
\begin{proof} The proof relies on the convergence of a sum of independent random variables to a Gaussian random variable, i.e. the central limit theorem. We first note that
\begin{align}
  \Pr\{ \|\bZ\| > r \} &=
     \Pr\left\{ \sum_{i=1}^nZ_i^2 > r^2 \right\}.
\end{align}

Let $Y_i = \frac{Z_i^2-\sigma^2}{\sigma^2\sqrt 2}$. It is easy to verify that $\E[Y_i]=0$ and that $\VAR[Y_i]=1$. Let $S_n \triangleq \frac{1}{\sqrt{n}}\sum_{i=1}^{n}Y_i$. Note that $S_n$ also has zero mean and unit variance. It follows that
\begin{align}
    \Pr\left\{\sum_{i=1}^n Z_i^2 \geq r^2\right\}&=
    \Pr\left\{\sum_{i=1}^n \frac{Z_i^2-\sigma^2}{\sigma^2 \sqrt 2} \geq \frac{r^2-n\sigma^2}{\sigma^2 \sqrt 2}\right\}\nonumber\\
    &=\Pr\left\{\sum_{i=1}^n Y_i \geq \frac{r^2-n\sigma^2}{\sigma^2 \sqrt 2}\right\}\nonumber\\
    &= \Pr\left\{S_n \geq \frac{r^2-n\sigma^2}{\sigma^2 \sqrt{2n}}\right\}.
\end{align}

$S_n$ is a normalized sum of i.i.d. variables, and by the central limit theorem converges to a standard Gaussian random variables. The Berry-Esseen theorem (see Appendix \ref{app:CLT}) quantifies the rate of convergence in the cumulative distribution function sense. In the specific case discussed in the lemma we get

\begin{equation}
    \left| \Pr\left\{S_n \geq \frac{r^2-n\sigma^2}{\sigma^2 \sqrt{2n}}\right\} -
     Q\left[ \frac{r^2-n\sigma^2}{\sigma^2 \sqrt{2n}}\right]\right| \leq \frac{6T}{\sqrt n},
\end{equation}
where $T=\E[|Y_i|^3]$. Note that $T$ is independent of $\sigma^2$, finite, and can be evaluated numerically to about $3.0785$.
\end{proof}
\end{lem}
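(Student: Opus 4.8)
The plan is to rewrite the tail event $\{\|\bZ\|>r\}$ as a one-sided large-deviation event for a normalized sum of i.i.d.\ random variables, and then invoke the Berry--Esseen theorem (in the form recalled in Appendix~\ref{app:CLT}) to compare its probability with the Gaussian tail $Q(\cdot)$.

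First I would observe that $\Pr\{\|\bZ\|>r\}=\Pr\{\sum_{i=1}^n Z_i^2>r^2\}$ and then center and normalize each summand by setting $Y_i\triangleq (Z_i^2-\sigma^2)/(\sigma^2\sqrt2)$. Using $\EE[Z_i^2]=\sigma^2$ and $\EE[Z_i^4]=3\sigma^4$ one checks that $\E[Y_i]=0$ and $\VAR[Y_i]=\VAR[Z_i^2]/(2\sigma^4)=2\sigma^4/(2\sigma^4)=1$. With $S_n\triangleq n^{-1/2}\sum_{i=1}^n Y_i$ (again zero mean, unit variance), the event $\{\sum_i Z_i^2>r^2\}$ is exactly $\{S_n>x\}$ for $x\triangleq (r^2-n\sigma^2)/(\sigma^2\sqrt{2n})$, so $\Pr\{\|\bZ\|>r\}=\Pr\{S_n>x\}$. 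Since $\sum_i Z_i^2$ has a continuous distribution, strict and non-strict inequalities agree here, which is exactly what is needed to match the one-sided Berry--Esseen estimate.

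Next I would apply Berry--Esseen to the i.i.d.\ sequence $(Y_i)$: it yields $\sup_t|\Pr\{S_n\le t\}-\Phi(t)|\le 6\rho/\sqrt n$ with $\rho\triangleq \E[|Y_i|^3]$, and hence, taking complements, $\sup_t|\Pr\{S_n>t\}-Q(t)|\le 6\rho/\sqrt n$. It then remains to identify $\rho$ with the constant $T$ in \eqref{eqn:T} and to see that it does not depend on $\sigma^2$: writing $Z_i=\sigma X$ with $X$ standard Gaussian gives $Y_i=(X^2-1)/\sqrt2$, so $\rho=\E\big[|(X^2-1)/\sqrt2|^3\big]=T$, a finite absolute moment of a quadratic in a Gaussian, numerically $\approx 3.0785$. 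Specializing the uniform bound to the specific value of $x$ above gives the asserted inequality for all $n$.

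I do not expect a genuine obstacle here: the argument is moment bookkeeping together with a black-box application of Berry--Esseen. The only points that require care are (i) the computation $\VAR[Z_i^2]=2\sigma^4$ (equivalently $\E[(X^2-1)^2]=2$); (ii) the scaling argument showing $T$ is independent of $\sigma^2$; and (iii) quoting a version of Berry--Esseen with the explicit constant $6$, as done in Appendix~\ref{app:CLT} --- any universal-constant version would give the same $O(1/\sqrt n)$ rate, only with a different numerical prefactor.
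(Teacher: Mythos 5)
Your proposal is correct and follows essentially the same route as the paper: normalize $Z_i^2$ to $Y_i=(Z_i^2-\sigma^2)/(\sigma^2\sqrt2)$, identify the tail event with $\{S_n\geq x\}$, and apply the Berry--Esseen bound of Appendix~\ref{app:CLT} with $T=\E[|Y_i|^3]$. Your added remarks (explicit variance check, scaling $Z_i=\sigma X$ to show $T$ is $\sigma$-independent, and the continuity remark resolving strict versus non-strict inequality) only make explicit what the paper leaves implicit.
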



\subsection{Proof of Theorem \ref{thm:NormalApprox}}
\begin{proof}[Proof of Direct part]

Let $\eps$ denote the required error probability. We shall prove the existence of an IC (more specifically, a lattice) with error probability at most $\eps$ and NLD satisfying \eqref{eqn:NormalApprox}.

It is instructive to first prove a slightly weaker version of \eqref{eqn:NormalApprox} based on the typicality decoder (Theorem~\ref{thm:AchievabilityTypicality}). While easier to derive, this will show the existence of lattices with NLD $\NLD = \NLD^* - \sqrt\frac{1}{2n}Q^{-1}(\eps)+ O\left(\frac{1}{n}\right)$. Proving the stronger result \eqref{eqn:NormalApprox} is more technical and will proven afterwards using the ML achievability bound (Theorem \ref{thm:AchievabilityMaxLike}).

\bigskip

Recall the achievability bound in Theorem \ref{thm:AchievabilityTypicality}: for any $r>0$ there exist lattices with NLD $\NLD$ and error probability $P_e$ that is upper bounded by
\begin{equation}
    P_e \leq \gamma V_n r^n + \Pr\left\{ \|\bZ\| > r \right\}.
\end{equation}

We determine $r$ s.t. $\Pr(\|\bZ\|>r) = \eps\left[1-\frac{1}{\sqrt n} \right]$ and $\gamma$ s.t. $\gamma V_n r^n= \frac{\eps}{\sqrt n}$. This way it is assured that the error probability is not greater than the required $\eps\left[1-\frac{1}{\sqrt n} \right] + \frac{\eps}{\sqrt n} = \eps$. Now define $\alpha_n$ s.t. $r^2 = n\sigma^2(1+\alpha_n)$ (note that $r$ implicitly depends on $n$ as well).

\begin{lem}\label{lem:alpha_n}
$\alpha_n$, defined above, is given by
\begin{equation}
    \alpha_n = \sqrt\frac{2}{n} Q^{-1}(\eps) + O\left(\frac{1}{n}\right).
\end{equation}
    \begin{proof} By construction, $r$ is chosen s.t.
    \begin{equation}
        \Pr(\|\bZ\|^2>r^2) = \eps\left[1-\frac{1}{\sqrt n}\right].
    \end{equation}
    By the definition of $\alpha_n$,
    \begin{equation}\label{eqn:alphadef}
        \Pr(\|\bZ\|^2>n\sigma^2 (1+\alpha_n)) = \eps\left[1-\frac{1}{\sqrt n}\right].
    \end{equation}
    By Lemma \ref{lem:GaussianOutOfSphereBerryEssen},
    \begin{align}
        \Pr(\|\bZ\|^2>n\sigma^2 (1+\alpha_n))
        &= Q\left( \frac{n\sigma^2 (1+\alpha_n) - n\sigma^2}{\sigma^2 \sqrt{2n}}\right) + O\left(\frac{1}{\sqrt n}\right)\nonumber\\
        &= Q\left( \sqrt{\frac{n}{2}} \alpha_n\right) + O\left(\frac{1}{\sqrt n}\right).
    \end{align}
    Combined with \eqref{eqn:alphadef}, we get
    \begin{equation}
        \eps\left[1-\frac{1}{\sqrt n}\right] = Q\left( \sqrt{\frac{n}{2}} \alpha_n\right) + O\left(\frac{1}{\sqrt n}\right),
    \end{equation}
    or
    \begin{equation}
        \eps+ O\left(\frac{1}{\sqrt n}\right) = Q\left( \sqrt{\frac{n}{2}} \alpha_n\right).
    \end{equation}
    Taking $Q^{-1}(\cdot)$ of both sides, we get
    \begin{equation}
        \sqrt{\frac{n}{2}} \alpha_n = Q^{-1}\left(\eps+ O\left(\frac{1}{\sqrt n}\right)\right).
    \end{equation}
    By the Taylor approximation of $Q^{-1}(\eps + x)$ around $x=0$, we get
    \begin{equation}
        \sqrt{\frac{n}{2}} \alpha_n = Q^{-1}\left(\eps \right) + O\left(\frac{1}{\sqrt n}\right),
    \end{equation}
    or
    \begin{equation}
        \alpha_n = \sqrt{\frac{2}{n}} Q^{-1}\left(\eps \right) + O\left(\frac{1}{n}\right),
    \end{equation}
    as required. \end{proof}
\end{lem}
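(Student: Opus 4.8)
The plan is to treat the defining relation for $\alpha_n$ as an implicit equation, turn it into a clean equation of the form $Q(\cdot) = \eps + O(1/\sqrt n)$ via the Berry--Esseen estimate, and then invert. First I would rewrite the construction constraint $\Pr\{\|\bZ\|>r\} = \eps\left[1-\tfrac{1}{\sqrt n}\right]$ in terms of $\alpha_n$ by substituting $r^2 = n\sigma^2(1+\alpha_n)$, so that the tail event becomes $\{\|\bZ\|^2 > n\sigma^2(1+\alpha_n)\}$. The whole point of this substitution is that it makes the argument of the $Q$ function in the next step collapse to something proportional to $\alpha_n$.

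Next I would invoke Lemma~\ref{lem:GaussianOutOfSphereBerryEssen}, which controls $\left|\Pr\{\|\bZ\|>r\} - Q\!\left(\tfrac{r^2-n\sigma^2}{\sigma^2\sqrt{2n}}\right)\right|$ by $6T/\sqrt n$. With $r^2 = n\sigma^2(1+\alpha_n)$, the argument of $Q$ simplifies \emph{exactly} to $\tfrac{n\sigma^2\alpha_n}{\sigma^2\sqrt{2n}} = \sqrt{\tfrac{n}{2}}\,\alpha_n$, so the defining equation reads $Q\!\left(\sqrt{\tfrac{n}{2}}\,\alpha_n\right) = \eps\left[1-\tfrac{1}{\sqrt n}\right] + O\!\left(\tfrac{1}{\sqrt n}\right)$. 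Here both the $\eps/\sqrt n$ term from the construction and the $6T/\sqrt n$ Berry--Esseen term are absorbed into a single $O(1/\sqrt n)$, leaving $Q\!\left(\sqrt{\tfrac{n}{2}}\,\alpha_n\right) = \eps + O\!\left(\tfrac{1}{\sqrt n}\right)$.

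I would then apply $Q^{-1}$ to both sides to get $\sqrt{\tfrac{n}{2}}\,\alpha_n = Q^{-1}\!\left(\eps + O\!\left(\tfrac{1}{\sqrt n}\right)\right)$, Taylor-expand $Q^{-1}$ to first order about $\eps$ to reach $\sqrt{\tfrac{n}{2}}\,\alpha_n = Q^{-1}(\eps) + O\!\left(\tfrac{1}{\sqrt n}\right)$, and finally divide by $\sqrt{n/2}$. The division is what upgrades the residual: an $O(1/\sqrt n)$ term divided by $\sqrt{n/2} = \Theta(\sqrt n)$ becomes $O(1/n)$, yielding the claimed $\alpha_n = \sqrt{\tfrac{2}{n}}\,Q^{-1}(\eps) + O\!\left(\tfrac{1}{n}\right)$.

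The substitution and the final division are routine; the one step that deserves genuine care, and which I would treat as the main obstacle, is pushing the $O(1/\sqrt n)$ perturbation through the nonlinear map $Q^{-1}$. One must check that $Q^{-1}$ is smooth with bounded, nonzero derivative on a fixed neighbourhood of $\eps$ (which holds because $Q'=-\phi$ is nonvanishing for $\eps\in(0,1)$), and that for $n$ large enough the perturbed argument $\eps + O(1/\sqrt n)$ actually lands inside that neighbourhood so the Taylor remainder is genuinely $O(1/\sqrt n)$ rather than uncontrolled. This is precisely the place where the hypothesis that $\eps$ is a \emph{fixed} value in $(0,1)$ (rather than one drifting toward $0$ or $1$ with $n$) is essential.
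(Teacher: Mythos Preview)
Your proposal is correct and follows essentially the same route as the paper's proof: substitute $r^2=n\sigma^2(1+\alpha_n)$, apply the Berry--Esseen lemma to get $Q\!\left(\sqrt{n/2}\,\alpha_n\right)=\eps+O(1/\sqrt n)$, invert $Q$ with a first-order Taylor expansion, and divide through by $\sqrt{n/2}$. Your added remark justifying the Taylor step via smoothness of $Q^{-1}$ near a fixed $\eps\in(0,1)$ is a point the paper leaves implicit.
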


So far, we have shown the existence of a lattice $\Lambda$ with error probability at most $\eps$.
The NLD is given by
\begin{align*}
    \NLD &=\frac{1}{n}\log\gamma\\
    &= \frac{1}{n}\log \frac{\eps}{ V_n r^n \sqrt n} \\
    &= -\frac{1}{n}\log V_n - \log r - \frac{\log n}{2n} +\frac{1}{n}\log \eps\\
    &= -\frac{1}{n}\log V_n - \frac{1}{2}\log [n \sigma^2 (1+\alpha_n)] - \frac{\log n}{2n} +\frac{1}{n} \log\eps.
\end{align*}

$V_n$ can be approximated by (see Appendix \ref{app:Vn}) by
\begin{equation}\label{eqn:VnApprox}
    \frac{1}{n}\log V_n = \frac{1}{2}\log \frac{2 \pi e}{n} - \frac{1}{2n}\log n + O\left(\frac{1}{n}\right).
\end{equation}
We therefore have
\begin{align}
    \NLD &= -\frac{1}{2}\log (2 \pi e \sigma^2) - \frac{1}{2}\log  (1+\alpha_n)  + O\left(\frac{1}{n}\right)\\
    &\overset{(a)}{=} \NLD^* - \frac{1}{2}\log  (1+\alpha_n)  + O\left(\frac{1}{n}\right)\\
    &\overset{(b)}{=} \NLD^* - \frac{1}{2}\alpha_n + O\left(\frac{1}{n}\right)\\
    &\overset{(c)}{=} \NLD^* - \sqrt\frac{1}{2n}Q^{-1}(\eps) + O\left(\frac{1}{n}\right),
\end{align}
where $(a)$ follows from the definition of $\NLD^*$, $(b)$ follows from the Taylor approximation for $\log(1+\alpha_n)$ around $\alpha_n=0$ and from the fact that $\alpha_n = O(1/\sqrt n)$, and $(c)$ follows from Lemma \ref{lem:alpha_n}. This completes the achievability part based on the typicality decoder.

\bigskip

In order to prove the stronger achievability result that fits \eqref{eqn:NormalApprox} we follow the same steps with the ML achievability bound. By Theorem \ref{thm:AchievabilityMaxLike} we get that for any $r>0$ there exist a lattice with density $\gamma$ and error probability upper bounded by
\begin{equation}
    P_e \leq \gamma V_n \int_0^r f_R(\tilde r) \tilde r^n d\tilde r + \Pr\left\{ \|\bZ\| > r \right\}.
\end{equation}

Now determine $r$ s.t. $\Pr(\|\bZ\|>r) = \eps\left[1-\frac{1}{\sqrt n} \right]$ and $\gamma$ s.t. $\gamma V_n \int_0^r f_R(\tilde r)  \tilde r^n  d\tilde r = \frac{\eps}{\sqrt n}$. Again, it is assured that the error probability is upper bounded by $\eps$. Define $\alpha_n$ s.t. $r^2 = n\sigma^2(1+\alpha_n)$.

The resulting density is given by
\begin{equation}
    \gamma = \frac{\eps}{\sqrt n V_n \int_0^r f_R(\tilde r)  \tilde r^n  d\tilde r},
\end{equation}
and the NLD by
\begin{align}
    \NLD &= \frac{1}{n}\log \gamma\nonumber\\
     &= \frac{1}{n}\log \eps - \frac{1}{n}\log \left[\sqrt n V_n \int_0^r f_R(\tilde r)  \tilde r^n  d\tilde r\right]\nonumber\\
     &= \frac{1}{n}\log \eps - \frac{1}{2n}\log n -  \frac{1}{n}\log V_n  -  \frac{1}{n}\log\int_0^r f_R(\tilde r)  \tilde r^n  d\tilde r\nonumber\\
    &= - \frac{1}{2} \log \frac{2 \pi e}{n} -  \frac{1}{n}\log\int_0^{\sqrt{n(1+\alpha_n)}} f_R(\tilde r)  \tilde r^n  d\tilde r + O\left(\frac{1}{n}\right).\label{eqn:preApproxIntegral}
\end{align}
where the last equality follows from the approximation \eqref{eqn:VnApprox} for $V_n$.

We repeat the derivation as in Eq. \eqref{eqn:preMLIntegralBounds} where $r^*$ is replaced by $r=\sqrt{n\sigma^2(1+\alpha_n)}$ and have
\begin{align*}
    \int_0^{\sqrt{n\sigma^2(1+\alpha_n)}} f_R(\tilde r) \tilde r^n d\tilde r
    &= \sigma^n \frac{2^{-n/2}n^n}{\Gamma\left[\frac{n}{2}\right]} \int_0^{1+\alpha_n} e^{-n\tilde r/2} \tilde r^{n-1} d\tilde r\\
    &\leq \sigma^n \frac{2^{-n/2}n^n}{\Gamma\left[\frac{n}{2}\right]}
    \frac{2(1+\alpha_n)^ne^{-n(1+\alpha_n)/2}}{n\left(1-\alpha_n-\tfrac{2}{n}\right)}\\
    &= \sigma^n \frac{2^{-n/2}n^n}{\Gamma\left[\frac{n}{2}\right]}
    \frac{   2e^{n\log(1+\alpha_n)}   e^{-n(1+\alpha_n)/2}  }
    {n\left(1-\alpha_n-\tfrac{2}{n}\right)},
 \end{align*}
where the inequality follows from Lemma~\ref{lem:MaxLikeIntegralBounds}. Therefore the term in \eqref{eqn:preApproxIntegral} can be bounded by
\begin{align*}
    &\frac{1}{n}\log\int_0^{\sqrt{n\sigma^2(1+\alpha_n)}} f_R(\tilde r) \tilde r^n d\tilde r \\
    &\leq
    \frac{1}{2}\log\sigma^2 -\tfrac{1}{2}\log2 +\log n + \log(1+\alpha_n) - \tfrac{1}{2}(1+\alpha_n) +
    \frac{1}{n}\log \frac{1}{\tfrac{n}{2}\Gamma\left[\frac{n}{2}\right]}
    \frac{1}{\left(1-\alpha_n-\tfrac{2}{n}\right)}\\
    &=
    \frac{1}{2}\log\sigma^2 -\tfrac{1}{2}\log2 +\log n + \log(1+\alpha_n) - \tfrac{1}{2}(1+\alpha_n) +
    \frac{1}{n}\log \frac{1}{\tfrac{n}{2}\Gamma\left[\frac{n}{2}\right]}
    + O\left(\tfrac{1}{n}\right)\\
    &=
    \frac{1}{2}\log\sigma^2 -\tfrac{1}{2}\log2 +\log n + \log(1+\alpha_n) - \tfrac{1}{2}(1+\alpha_n) -
    \tfrac{1}{n}\left(\tfrac{1}{2}\log(\pi n) + \tfrac{n}{2}\log\tfrac{n}{2e}\right)
    + O\left(\tfrac{1}{n}\right)\\
    &=
    \frac{1}{2}\log\sigma^2  +\tfrac{1}{2}\log n + \log(1+\alpha_n) - \tfrac{1}{2}\alpha_n
     -\tfrac{1}{2n}\log n
    + O\left(\tfrac{1}{n}\right)\\
    &\overset{(a)}{=}
    \frac{1}{2}\log\sigma^2  +\tfrac{1}{2}\log n +  \tfrac{1}{2}\alpha_n
     -\tfrac{1}{2n}\log n
    + O\left(\tfrac{1}{n}\right)\\
 \end{align*}
$(a)$ follows from the Taylor expansion of $\log(1+\xi)$ at $\xi = 0$ and from the fact that $\alpha_n = O(\frac{1}{\sqrt n})$.
Plugging back to \eqref{eqn:preApproxIntegral} combined with Lemma \ref{lem:alpha_n} completes the proof of the direct part.

\bigskip

\emph{Proof of Converse Part:}

Let $\eps>0$, and let $\left\{\S_n\right\}_{n\in\Naturals}$ be a series of IC's, where for each $n$, $P_e(\S_n) \leq \eps$. Our goal is to upper bound the NLD $\NLD_n$ of $\S_n$ for growing $n$.

By the sphere bound we have
\begin{equation}
    \eps \geq P_e(\S_n) \geq \Pr\{\|\bZ\|> r^*\},
\end{equation}
where $r^* = e^{-\NLD_n}V_n^{-1/n}$. By Lemma~\ref{lem:GaussianOutOfSphereBerryEssen},
\begin{equation}
    \eps \geq \Pr\{\|\bZ\|> r^*\} \geq Q\left( \frac{r^{*2} - n\sigma^2}{\sigma^2 \sqrt{2n}}\right) - \frac{6T}{\sqrt n},
\end{equation}
where $T$ is a constant. It follows by algebraic manipulations that
\begin{equation}
    \NLD_n  \leq -\frac{1}{2}\log\left(1 + \sqrt\frac{2}{n}Q^{-1}\left(\eps +\frac{6T}{\sqrt n}\right)\right) - \frac{1}{n}\log V_n - \frac{1}{2}\log (n\sigma^2).
\end{equation}
By the Taylor approximation of $\log(1+x)$ at $x=0$ and of $Q^{-1}(y)$ at $y=\eps$, and by the approximation \eqref{eqn:VnApprox} for $V_n$,
\begin{equation}
     \NLD_n  \leq -\sqrt\frac{1}{2n}Q^{-1}\left(\eps\right) - \frac{1}{2}\log \frac{2 \pi e}{n} +\frac{1}{2n}\log n - \frac{1}{2}\log (n\sigma^2)+ O\left(\frac{1}{n}\right).
\end{equation}
By the definition of $\NLD^*$ we finally get
\begin{equation}
     \NLD_n  \leq \NLD^*  -\sqrt\frac{1}{2n}Q^{-1}\left(\eps\right) +\frac{1}{2n}\log n + O\left(\frac{1}{n}\right),
\end{equation}
as required.
\end{proof}

\fi 

\section{Comparison with Known Infinite Constellations}\label{sec:Comparison}
\if \Comparison 1

In this section we compare the finite dimensional bounds of Section ~\ref{sec:NewBounds} with the performance of some known IC's.

We start with the low dimensional IC's, which include classic sphere packings: the integer lattice, the hexagonal lattice, the packings $D_4$, $E_8$, $BW_{16}$ and the leech lattice $\Lambda_{24}$ (see Conway and Sloane \cite{ConwaySloane1993}). In low dimensions we present Monte Carlo simulation results based on the ML decoder.
In higher dimensions we consider low density lattice codes (LDLC) \cite{LDLC_IT2008} with dimensions $n=100,300,500$ and $1000$ (designed by Y. Yona). In dimension $n=127$ we present the results for the packing $S_{127}$ \cite{AgrawalVardy2000GMD}.

In Fig.~\ref{fig:NLDBoundsAndLowDimLattices} we show the gap to (Poltyrev's) capacity of the low dimensional IC's, where the error probability is set to $\eps=0.01$. As seen in the figure, these low dimensional IC's outperform the best achievability bound (Theorem~\ref{thm:AchievabilityMaxLike}). At $n=1$, the integer lattice achieves the sphere bound (and is, essentially, the only lattice for $n=1$).

\begin{figure}
  {\scriptsize
  \psfrag{zzPoltyrevCapacity}{Poltyrev capacity $\NLD^*$}
  \psfrag{zzPoltyrevCritical}{Critical rate $\NLD_{cr}$}
  \psfrag{zzConverse}{Converse}
  \psfrag{zzDirectML}{ Achievability (ML)}
  \psfrag{zzApprox}{Approximation \eqref{eqn:NormalApprox}}
\noindent\makebox[\textwidth]{
  \includegraphics[width=8in]{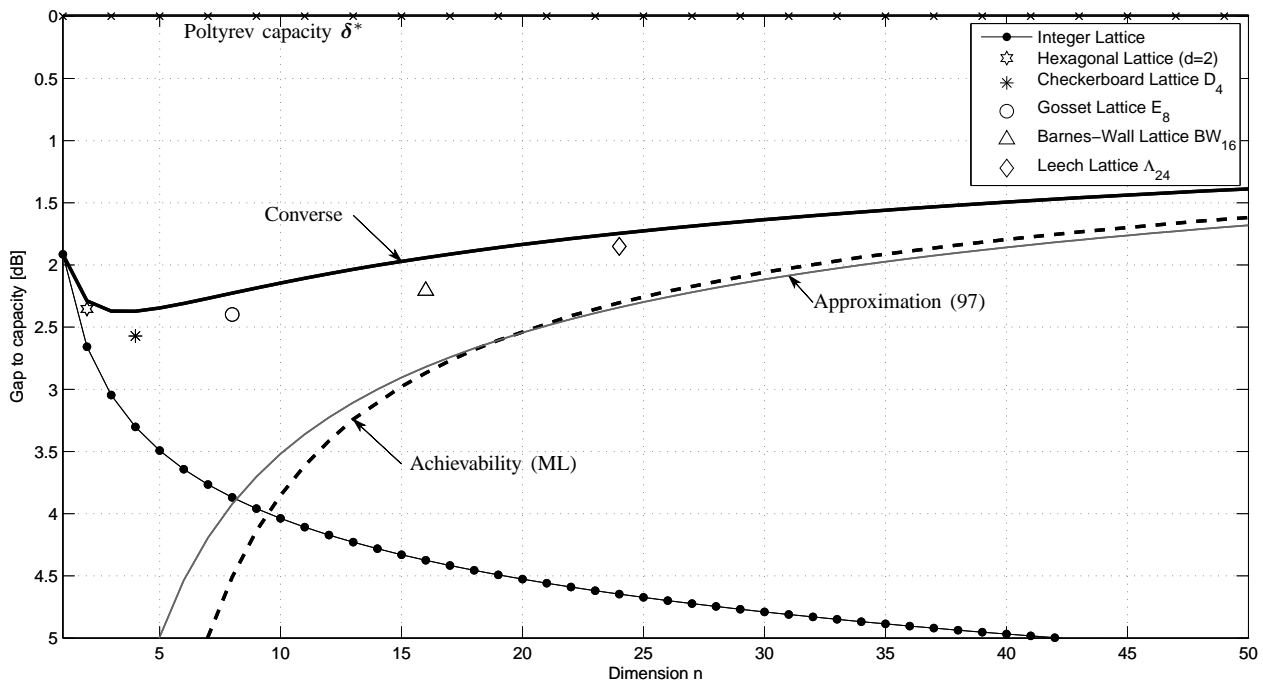}}}\\
  \caption{Low-dimensional IC's for coding over the unconstrained AWGN. The error probability is set to  $\eps=0.01$.}\label{fig:NLDBoundsAndLowDimLattices}
\end{figure}

\bigskip

From the presentation of Fig.~\ref{fig:NLDBoundsAndLowDimLattices} it is difficult to compare IC's with different dimensions. For example, the hexagonal lattice closer to the capacity than the lattice $D_4$, and also the gap to the sphere bound is smaller. Obviously this does not mean that $D_4$ is inferior. To facilitate the comparison between different dimensions we propose the following comparison: Set a fixed value for the error probability for $n=1$ denoted $\eps_1$. Then define, for each $n$, the normalized error probability
$$\eps_n \triangleq  1 - (1-\eps_1)^n.$$
Using this normalization enables the true comparison between IC's with different dimensions. The achieved gap to capacity with a normalized error probability remains the same when a scheme is used say $k$ times, and the block length becomes $k\cdot n$. For example, the integer lattice maintains a constant $\NLD$ for any $n$ with the normalized error probability, as opposed to the case presented in Fig.~\ref{fig:NLDBoundsAndLowDimLattices}, where the performance \emph{decreases}. In Fig.~\ref{fig:NLDBoundsAndLowDimLatticesNormalized} we plot the same data as in Fig.~\ref{fig:NLDBoundsAndLowDimLattices} for normalized error probability with $\eps_1 = 10^{-5}$. We also plot the normalized error probability itself for reference. In Fig.~\ref{fig:NLDBoundsAndMidDimLatticesNormalized} we present the performance of IC's in higher dimensions (again, with normalized error probability and $\eps_1 = 10^{-5}$). The included constellations are the leech lattice again (for reference), LDLC with $n=100,300,500,1000$ and degrees $5,6,7,7$ respectively (cf. \cite{LDLC_IT2008} and \cite{YairISIT2009} for more details on the construction of LDLC and the degree). For LDLC's, the figure shows simulation results based on a suboptimal low complexity parametric iterative decoder \cite{YairISIT2009}. In addition, we present the performance of the packing $S_{127}$\cite{AgrawalVardy2000GMD} (which is a multilevel coset code\cite{ForneySphereBound00}).

Notes:
\begin{itemize}
  \item At higher dimensions, the performance of the presented IC's no longer outperforms the achievability bound.
  \item It is interesting to note that the Leech lattice replicated 4 times (resulting in an IC at $n=96$) outperforms the LDLC with $n=100$ (but remember that the LDLC performance is based on a low complexity suboptimal decoder where the Leech lattice performance is based on the ML decoder).
  \item The approximation \eqref{eqn:NormalApprox} no longer holds formally for the case of normalized error probability. This follows since the correction term in \eqref{eqn:NormalApprox} depends on the error probability. Nevertheless, as appears in Fig.~\ref{fig:NLDBoundsAndMidDimLatticesNormalized}, the approximation appears to still hold.
\end{itemize}

\begin{figure}
  {\scriptsize
  \psfrag{zzPoltyrevCapacity}{Poltyrev capacity $\NLD^*$}
  \psfrag{zzPoltyrevCritical}{Critical rate $\NLD_{cr}$}
  \psfrag{zzConverse}{Converse}
  \psfrag{zzDirectML}{\hspace{-.5in}Achievability (ML)}
  \psfrag{zzApprox}{Approximation \eqref{eqn:NormalApprox}}
\noindent\makebox[\textwidth]{
  \includegraphics[width=8in]{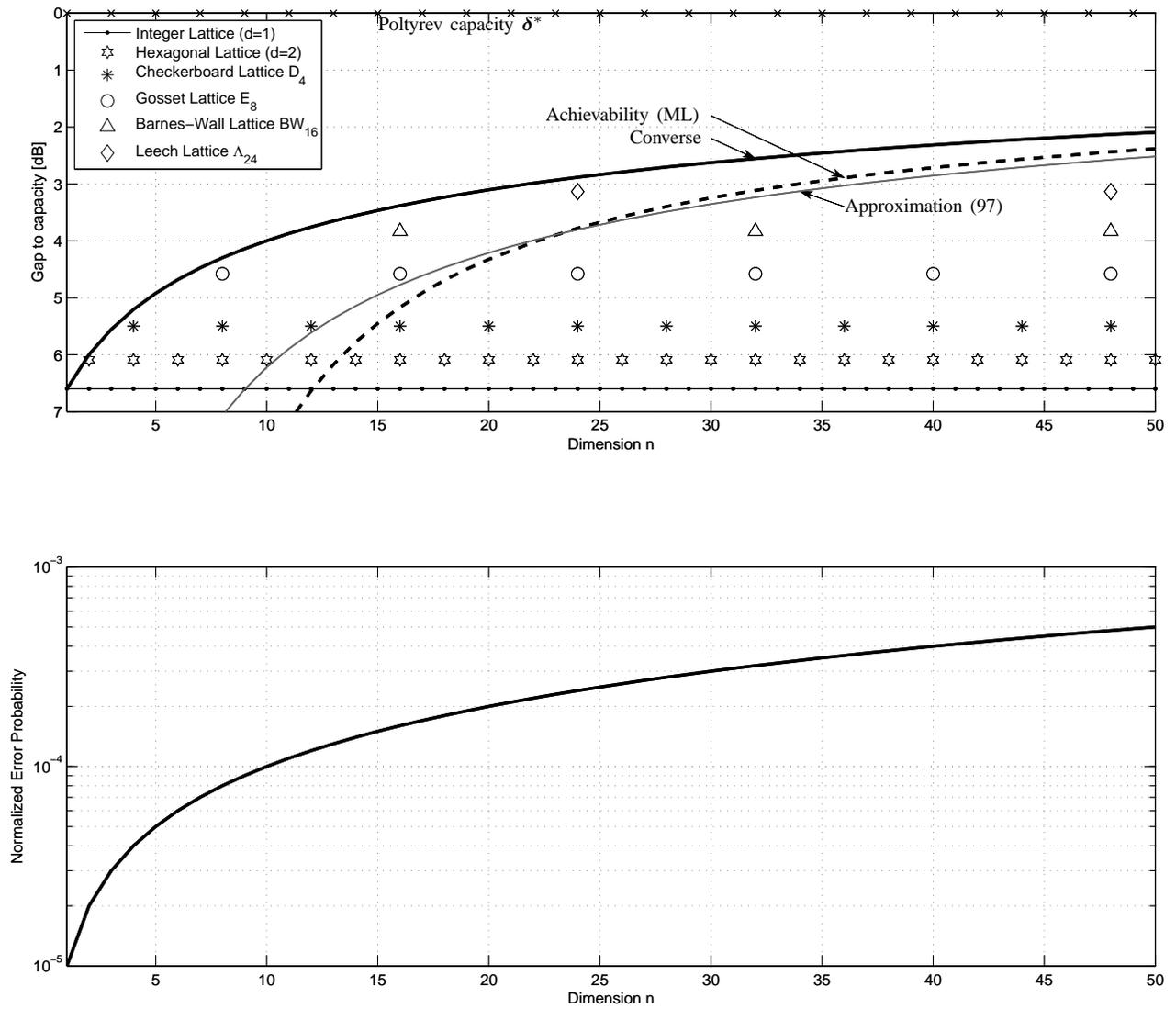}}}\\
  \caption{Top: performance of different constellations (dimensions $1-24$) for normalized error probability, with $\eps_1=10^{-5}$. Bottom: the normalized error probability.}\label{fig:NLDBoundsAndLowDimLatticesNormalized}
\end{figure}

\begin{figure}
  {\scriptsize
  \psfrag{zzPoltyrevCapacity}{Poltyrev capacity $\NLD^*$}
  \psfrag{zzPoltyrevCritical}{Critical rate $\NLD_{cr}$}
  \psfrag{zzConverse}{ Converse}
  \psfrag{zzDirectML}{ Achievability (ML)}
  \psfrag{zzApprox}{Approximation \eqref{eqn:NormalApprox}}
\noindent\makebox[\textwidth]{
  \includegraphics[width=8in]{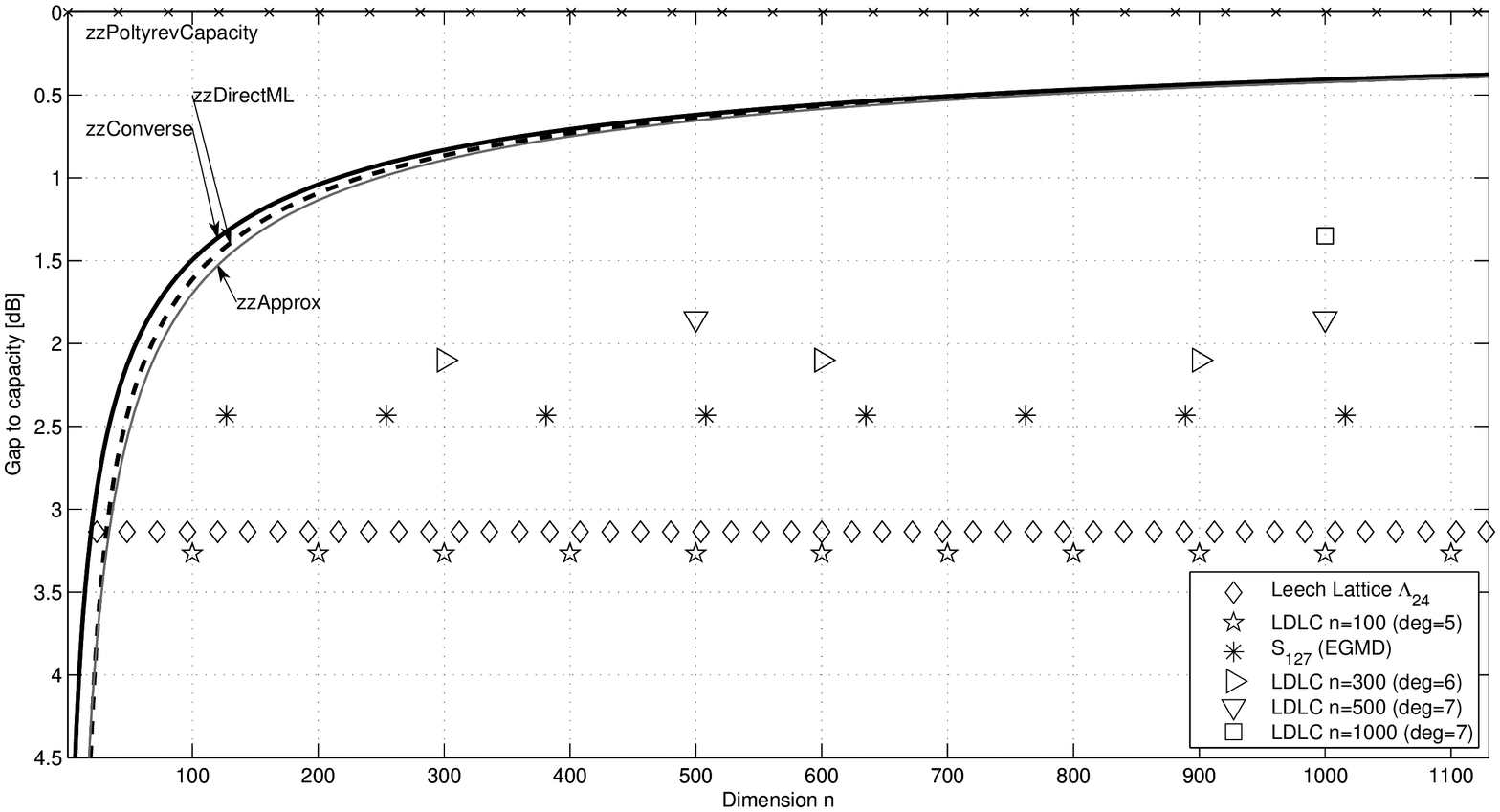}}}\\
  \caption{Performance of different constellations (dimensions $24-1000$) for normalized error probability, with $\eps_1=10^{-5}$.}\label{fig:NLDBoundsAndMidDimLatticesNormalized}
\end{figure}

\fi 

\section{Volume-to-Noise Ratio Analysis}\label{sec:VNR}
\if \VNR 1

The VNR $\mu$, defined in \eqref{eqn:gapToCapacityRatioSigma}, can describe the distance from optimality for a given IC and noise variance, and we say that an IC $\S$ operating at noise level $\sigma^2$ is in fact operating at VNR $\mu$. Equivalently, we can define the VNR as a function of the IC and the error probability: For a given IC $\S$ and error probability $\eps$, let $\mu(\S,\eps)$ be defined as follows:
\begin{equation}\label{eqn:defVNR}
    \mu(\S, \eps) \triangleq \frac{e^{-2\NLD(\S)}}{2\pi e \sigma^2(\eps)},
\end{equation}
where $\sigma^2(\eps)$ is the noise variance s.t. the error probability is exactly $\eps$. Note that $\mu(\S, \eps)$ does not depend on scaling of the IC $\S$, and therefore can be thought of as a quantity that depends only on the `shape' of the IC.

We now define a related fundamental quantity $\mu_n(\eps)$, as the minimal value of $\mu(\S, \eps)$ among all $n$-dimensional IC's. It is known that for any $0 < \eps < 1$, $\mu_n(\eps) \ra 1$ as $n \ra \infty$ \cite{ZamirLatticesEverywhere09}. We now quantify this convergence, based on the analysis of $\NLD_\eps(n)$.
It follows from the definitions of $\mu_n(\eps)$ and $\NLD_\eps(n)$ that the following relation holds for any $\sigma^2$:
\begin{equation}\label{eqn:mu-NLD}
    \mu_n(\eps) = \frac{e^{-2\NLD_\eps(n)}}{2\pi e \sigma^2} = e^{2(\NLD^* - \NLD_\eps(n))}.
\end{equation}
(note that $\NLD_\eps(n)$ implicitly depends on $\sigma^2$ as well). We may therefore use the results in the paper to understand the behavior of $\mu_n(\eps)$. For example, any of the bounds in Theorem~\ref{thm:AchievabilityTypicality}, Theorem~\ref{thm:AchievabilityMaxLike} or the sphere bound (Theorem~\ref{thm:SphereBound}) can be applied in order to bound $\mu_n(\eps)$ for finite $n$ and $\eps$. Furthermore, the asymptotic behavior of $\mu_n(\eps)$ is characterized by the following:

\begin{theorem}
For a fixed error probability $0<\eps<1$, The optimal VNR $\mu_n(\eps)$ is given by
\begin{equation}\label{eqn:VNR}
    \mu_n(\eps) = 1 + \sqrt\frac{2}{n}Q^{-1}(\eps) - \frac{1}{n} \log n+ O\left(\frac{1}{n}\right).
\end{equation}
\begin{proof}
In Theorem~\ref{thm:NormalApprox} we have shown that for given $\eps$ and $\sigma^2$, the optimal NLD $\NLD$ is given by
\begin{equation}
    \NLD_\eps(n) = \NLD^* - \sqrt\frac{1}{2n}Q^{-1}(\eps) + \frac{1}{2n}\log n + O\left(\frac{1}{n}\right),
\end{equation}
where $\NLD^* = \frac{1}{2}\log\frac{1}{2 \pi e \sigma^2}$.

According to \eqref{eqn:mu-NLD} we write
\begin{align}
\mu_n(\eps) &= \exp\left[\sqrt\frac{2}{n} Q^{-1}(\eps) - \frac{1}{n}\log n + O\left(\frac{1}{n}\right)\right]\nonumber\\
&= 1+\sqrt\frac{2}{n} Q^{-1}(\eps) - \frac{1}{n}\log n + O\left(\frac{1}{n}\right)\nonumber\\
\end{align}
where the last step follows from the Taylor expansion of $e^x$. 
\end{proof}
\end{theorem}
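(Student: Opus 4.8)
The plan is to deduce the statement directly from Theorem~\ref{thm:NormalApprox} (the dispersion of the optimal NLD) via the exact correspondence between the VNR and the NLD. First I would record the identity \eqref{eqn:mu-NLD},
\[
    \mu_n(\eps) = e^{2(\NLD^* - \NLD_\eps(n))},
\]
and argue it carefully: fixing an arbitrary noise level $\sigma^2>0$ (legitimate because $\mu(\S,\eps)$ does not depend on scaling of $\S$), an $n$-dimensional IC $\S$ has error probability at most $\eps$ at this noise level if and only if $\NLD(\S)\le \NLD_\eps(n)$, and for such $\S$ the ratio $\mu(\S,\eps)=e^{-2\NLD(\S)}/(2\pi e\sigma^2)$ is a strictly decreasing function of $\NLD(\S)$; therefore the smallest attainable $\mu(\S,\eps)$ corresponds to $\NLD(\S)=\NLD_\eps(n)$, and using $\NLD^*=\tfrac12\log\tfrac{1}{2\pi e\sigma^2}$ one gets the displayed identity.

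Next I would substitute the expansion of $\NLD_\eps(n)$ from Theorem~\ref{thm:NormalApprox} into this identity, obtaining
\[
    \mu_n(\eps) = \exp\!\left[\sqrt{\tfrac{2}{n}}\,Q^{-1}(\eps) - \tfrac{1}{n}\log n + O\!\left(\tfrac{1}{n}\right)\right],
\]
and then Taylor-expand $e^{x}$ about $x=0$. The exponent $x_n$ tends to $0$ with leading order $\Theta(1/\sqrt n)$, so $e^{x_n}=1+x_n+O(x_n^2)=1+x_n+O(1/n)$; the $-\tfrac1n\log n$ part of $x_n$ is of larger order than $1/n$ and must be kept explicitly, while the remaining $O(1/n)$ terms inside $x_n$ and the quadratic remainder $O(x_n^2)=O(1/n)$ merge into a single $O(1/n)$. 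This is exactly \eqref{eqn:VNR}.

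I do not expect a genuine obstacle here: the substantive content is entirely contained in Theorem~\ref{thm:NormalApprox}, and the only step requiring a moment's care is the justification of the first identity (the scale-invariance of $\mu$ and the monotone dictionary between $\mu$ and $\NLD$). As a remark, one could instead give a self-contained argument bypassing Theorem~\ref{thm:NormalApprox}: apply the sphere bound (Theorem~\ref{thm:SphereBound}) for the converse direction on $\mu_n(\eps)$ and the ML bound (Theorem~\ref{thm:AchievabilityMaxLike}) for achievability, in each case using Lemma~\ref{lem:GaussianOutOfSphereBerryEssen} to control $\Pr\{\|\bZ\|>r\}$ and the approximation \eqref{eqn:VnApprox} for $V_n$, and then translate the resulting bounds on $\NLD$ into bounds on $\mu$. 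This, however, merely reproduces the work already carried out in Section~\ref{sec:NormalApprox} and is unnecessary once that section is in place.
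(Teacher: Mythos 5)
Your proposal is correct and follows essentially the same route as the paper: invoke the identity $\mu_n(\eps)=e^{2(\NLD^*-\NLD_\eps(n))}$ from \eqref{eqn:mu-NLD}, substitute the expansion of Theorem~\ref{thm:NormalApprox}, and Taylor-expand $e^x$, noting that the quadratic remainder is $O(1/n)$ while the $\frac{1}{n}\log n$ term is kept. Your added justification of the $\mu$--$\NLD$ dictionary (scale invariance and monotonicity) is a small elaboration of a step the paper states without comment, not a different argument.
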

\fi 

\section{Summary}\label{sec:Summary}

\if \Summary 1
In this paper we examined the unconstrained AWGN channel setting in the finite dimension regime. We provided two new achievability bounds and extended the converse bound (sphere bound) to finite dimensional IC's. We then analyzed these bounds asymptotically in two settings. In the first setting where the NLD (which is equivalent to the rate in classic channel coding) was fixed, we evaluated the (bounds on the) error probability when the dimension $n$ grows, and provided asymptotic expansions that are significantly tighter than those in the existing error exponent analysis. In the second setting, the error probability $\eps$ is fixed, and we investigated the optimal achievable NLD for growing $n$. We showed that the optimal NLD can be tightly approximated by a closed-form expression, and the gap to the optimal NLD vanishes as the inverse of the square root of the dimension $n$. The result is analogous to the channel dispersion theorem in classical channel coding, and agrees with the interpretation of the unconstrained setting as the high-SNR limit of the power constrained AWGN channel. The approach and tools developed in this paper can be used to extend the results to more general noise models, and also to finite constellations.

\section*{Acknowledgment}
The authors are grateful to Y. Yona for assistance in providing the simulation results for low-density lattice codes in Section~\ref{sec:Comparison}.

\fi 

\if \app 1
\appendices

\section{Proof of the Bounds Equivalence}\label{app:Equivalence}

\begin{proof}[Proof of Theorem~\ref{thm:Equivalence}]
It remains to show that
\begin{align}
    n \int_0^{2r} w^{n-1}\Pr\{\bZ \in D(r,w)\} dw  = \int_0^{r^*} f_R(\rho) \rho^n d\rho.
\end{align}

\begin{lem}
For $\bZ\sim N(0,I\sigma^2)$, and any $r \geq w/2 \geq 0$,
\begin{equation}   
    \Pr\{\bZ \in D(r,w)\} = \int_{w/2}^r f_{Z}(z) \int_0^{\frac{\sqrt{r^2-z^2}}{\sigma}} f_{\chi_{n-1}}(t)dt dz,
\end{equation}
where $D(w,r)$ was defined after Eq. \eqref{eqn:AchievabilityPoltyrev}, $f_Z(z) = \frac{1}{\sqrt{2\pi\sigma^2}}e^{-z^2/(2\sigma^2)}$ is the pdf of a $N(0,\sigma^2)$ random variable, and
$f_{\chi_{n-1}}(t) = \frac{t^{n-2}e^{-t^2/2}}{2^{\frac{n-1}{2}-1}\Gamma\left(\frac{n-1}{2}\right)}$
is the pdf of a $\chi$ random variable with $n-1$ degrees of freedom.
\begin{proof}
By the spherical symmetry of the Gaussian pdf, we may assume w.l.o.g. that the hyperplane at distance $\frac{w}{2}$ is perpendicular to the $Z_1$ axis. We therefore have
\begin{align*}
    \Pr\{\bZ \in D(w,r)\}
    &= \Pr\{Z_1 > \frac{w}{2} , \|\bZ\| \leq r \} \\
    &= \Pr\left\{Z_1 > \frac{w}{2} , \sum_{i=1}^n Z_i^2 \leq r^2 \right\} \\
    &= \int_{w/2}^r f_{Z}(z) \Pr\left\{\sum_{i=1}^n Z_i^2 \leq r^2 | Z_1 = z\right\}dz\\
    &= \int_{w/2}^r f_{Z}(z) \Pr\left\{\sum_{i=2}^n Z_i^2 \leq r^2-z^2\right\}dz\\
    &= \int_{w/2}^r f_{Z}(z) \Pr\left\{\frac{1}{\sigma}\sqrt{\sum_{i=2}^n Z_i^2} \leq \frac{\sqrt{r^2-z^2}}{\sigma}\right\}dz\\
    &= \int_{w/2}^r f_{Z}(z) \int_0^{\frac{\sqrt{r^2-z^2}}{\sigma}} f_{\chi_{n-1}}(t)dt dz,
\end{align*}
where the last equality follows from the fact that a $\chi_{n-1}$ random variable is equivalent to the square root of a sum of $n-1$ independent squared standard Gaussian random variables.
\end{proof}\end{lem}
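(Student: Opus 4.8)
The plan is to exploit the rotational invariance of the isotropic Gaussian, which reduces the computation to a single integral over the coordinate perpendicular to the cutting hyperplane. Since $\bZ\sim N(0,I\sigma^2)$ has a density that depends on $\bz$ only through $\|\bz\|$, its law is invariant under every orthogonal transformation. The region $D(r,w)$ is the intersection of the ball $\ball{r}$ with a halfspace whose bounding hyperplane lies at distance $w/2$ from the origin. By applying a suitable rotation I may therefore assume, without loss of generality, that this hyperplane is $\{z_1=w/2\}$, i.e. perpendicular to the $Z_1$-axis, so that $D(r,w)=\{z_1>w/2\}\cap\ball{r}$ and the probability $\Pr\{\bZ\in D(r,w)\}$ is unchanged.

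Next I would condition on the first coordinate. Writing the event as $\{Z_1>w/2\}\cap\{\sum_{i=1}^n Z_i^2\le r^2\}$ and integrating over the value $z$ of $Z_1$ gives
\begin{equation*}
    \Pr\{\bZ\in D(r,w)\}=\int_{w/2}^{r} f_Z(z)\,\Pr\Big\{\textstyle\sum_{i=2}^n Z_i^2\le r^2-z^2\Big\}\,dz.
\end{equation*}
Two points deserve care here. First, the upper limit is $r$ rather than $\infty$: once $z>r$ the squared-norm constraint $\sum_{i=1}^n Z_i^2\le r^2$ is impossible, since the remaining squared coordinates cannot be negative, so the conditional probability vanishes. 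Second, I use the independence of the coordinates, so that conditioning on $Z_1=z$ leaves $Z_2,\dots,Z_n$ distributed as an $(n-1)$-dimensional isotropic Gaussian with the same per-coordinate variance $\sigma^2$, and the marginal density of $Z_1$ is exactly $f_Z$.

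Finally I would identify the inner probability as a $\chi$-distribution tail. By definition, $\tfrac1\sigma\sqrt{\sum_{i=2}^n Z_i^2}$ is a $\chi$ random variable with $n-1$ degrees of freedom, since $Z_2/\sigma,\dots,Z_n/\sigma$ are independent standard Gaussians, whence
\begin{equation*}
    \Pr\Big\{\textstyle\sum_{i=2}^n Z_i^2\le r^2-z^2\Big\}=\Pr\Big\{\tfrac1\sigma\sqrt{\textstyle\sum_{i=2}^n Z_i^2}\le \tfrac{\sqrt{r^2-z^2}}{\sigma}\Big\}=\int_0^{\frac{\sqrt{r^2-z^2}}{\sigma}} f_{\chi_{n-1}}(t)\,dt.
\end{equation*}
Substituting this into the previous display yields precisely the claimed formula. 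The argument is essentially a direct calculation, so I do not anticipate a serious obstacle; the only steps requiring genuine justification are the reduction to a coordinate-aligned hyperplane via rotational invariance and the correct truncation of the outer integral at $z=r$. The explicit normalization constant in $f_{\chi_{n-1}}$ plays no role in the derivation and is simply carried along.
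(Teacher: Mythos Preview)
Your proposal is correct and follows essentially the same approach as the paper: both exploit the rotational invariance of the isotropic Gaussian to align the hyperplane with the $Z_1$-axis, condition on $Z_1=z$ and integrate over $z\in[w/2,r]$, and then identify the inner probability as the CDF of a $\chi_{n-1}$ variable via the independence of the remaining coordinates. Your write-up is slightly more explicit in justifying the upper integration limit $r$ and the role of independence, but the argument is identical in substance.
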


We use the result of the lemma and get
\begin{align*}
    &n \int_0^{2r} w^{n-1}\Pr\{\bZ \in D(r,w)\} dw  \\
    =& n \int_0^{2r} w^{n-1}\int_{w/2}^r f_{Z}(z) \int_0^{\frac{\sqrt{r^2-z^2}}{\sigma}} f_{\chi_{n-1}}(t)dt dz dw\\
    =& n \int_0^{2r} \int_{w/2}^r w^{n-1} f_{Z}(z) \int_0^{\frac{\sqrt{r^2-z^2}}{\sigma}} f_{\chi_{n-1}}(t)dt dz dw\\
    \overset{(a)}{=}& n \int_{0}^r \int_0^{2z}  w^{n-1} f_{Z}(z) \int_0^{\frac{\sqrt{r^2-z^2}}{\sigma}} f_{\chi_{n-1}}(t)dt dw dz\\
    =& \int_{0}^r \int_0^{\frac{\sqrt{r^2-z^2}}{\sigma}} (2z)^n f_{Z}(z)  f_{\chi_{n-1}}(t)dt dz,
\end{align*}
were $(a)$ follows from changing the order of integration. We set $u = z/\sigma$ and get
\begin{align*}
    &\int_{0}^{r/\sigma} \int_0^{\sqrt{r^2/\sigma^2-u^2}} (2\sigma u)^n \sigma f_{Z}(\sigma u)  f_{\chi_{n-1}}(t)dt du \\
    =&\int_{0}^{r/\sigma} \int_0^{\sqrt{r^2/\sigma^2-u^2}} (2\sigma u)^n \frac{1}{\sqrt{2 \pi}}e^{-u^2/2}  \frac{t^{n-2}e^{-t^2/2}}{2^{\frac{n-1}{2}-1}\Gamma\left(\frac{n-1}{2}\right)} dt du \\
    =&\frac{2^{\frac{n}{2}+1}\sigma^n}{\sqrt{\pi}\Gamma\left(\frac{n-1}{2}\right)}
    \int_{0}^{r/\sigma} \int_0^{\sqrt{r^2/\sigma^2-u^2}}  u^n t^{n-2}e^{-(u^2+t^2)/2} dt du.
\end{align*}
We switch to polar coordinates and set $u = \rho \cos \theta $ and $t = \rho \sin\theta$. The expression becomes
\begin{align}
    &\frac{2^{\frac{n}{2}+1}\sigma^n}{\sqrt{\pi}\Gamma\left(\frac{n-1}{2}\right)}
    \int_{0}^{r/\sigma} \int_0^{\pi/2}  \rho^{2n-1} \cos^n\theta\sin^{n-2}\theta e^{-\rho^2/2} d\theta d\rho\nonumber\\
    =& \frac{2^{\frac{n}{2}+1}\sigma^n}{\sqrt{\pi}\Gamma\left(\frac{n-1}{2}\right)}
    \int_{0}^{r/\sigma}  \rho^{2n-1} e^{-\rho^2/2}  d\rho \int_0^{\pi/2} \cos^n\theta\sin^{n-2}\theta d\theta
    .\label{eqn:pre-sincosIntegral}
\end{align}
It can be shown using e.g. \cite[Eqs. 18.32 and 25.9]{SchaumMath99} that
\begin{equation}
    \int_0^{\pi/2} \cos^n\theta\sin^{n-2}\theta d\theta = \frac{2^{-n}\sqrt{\pi}\Gamma\left(\frac{n-1}{2}\right)}{\Gamma\left(\frac{n}{2}\right)}.
\end{equation}
Equation \eqref{eqn:pre-sincosIntegral} now simplifies to
\begin{align}
   &\frac{2^{1-\frac{n}{2}}\sigma^n}{\Gamma\left(\frac{n}{2}\right)}
    \int_{0}^{r/\sigma}  \rho^{2n-1} e^{-\rho^2/2}  d\rho\nonumber\\
   =& \sigma^n \int_{0}^{r/\sigma}  f_{\chi_n}(\rho)\rho^n d\rho\nonumber\\
   =& \sigma^n \int_{0}^{r/\sigma}  \sigma f_R(\sigma \rho)\rho^n d\rho \nonumber\\
   =&  \int_{0}^{r} f_R(\rho')\rho'^n d\rho' \nonumber\\
   =&  \int_{0}^{r} f_R(\rho)\rho^n d\rho, \nonumber
\end{align}
where $f_{\chi_n}(\cdot)$ is the pdf of a $\chi$ random variable with $n$ degrees of freedom, and $f_R(\cdot)$ is the pdf of $\|\bZ\|$. This completes the proof of the theorem.
\end{proof}

\section{Properties of Regular IC's}\label{app:AvgVol}

\begin{proof}[Proof of Lemma \ref{lem:avgVol}]
Let $\S$ be a regular IC with a given $r_0$. Let $\Vor(a)$ denote the union of all the Voronoi cells of code points in $\cube{a}$:
\begin{equation}
    \Vor(a) \triangleq \bigcup_{s\in \S\cap \cube{a}} W(s).
\end{equation}
Since all Voronoi cells are bounded in spheres of radius $r_0$, we note the following (for $a > 2r_0$):
\begin{itemize}
  \item All the Voronoi cells of the code points in $\cube{a}$ are contained in $\cube{a+2r_0}$, and therefore $\Vor(a) \subseteq \cube{a+2r_0}$.
  \item Any point in $\cube{a - 2r_0}$ must be in a Voronoi cell of some code point. These code points cannot be outside $\cube{a}$ because the Voronoi cells are bounded in spheres of radius $r_0$, so they must lie within $\cube{a}$, and we get that $ \cube{a-2r_0} \subseteq \Vor(a)$.
\end{itemize}
It follows that
\begin{equation*}
    (a-2r_0)^n \leq |\Vor(r)| \leq (a+2r_0)^n,
\end{equation*}
or
\begin{equation}
    (a-2r_0)^n \leq \sum_{s\in \S\cap \cube{a}} v(s) \leq (a+2r_0)^n.
\end{equation}
Dividing by $a^n$ and taking the limit of $a\ra\infty$ gives
\begin{equation}
    \lim_{a\ra\infty} \frac{\sum_{s\in \S\cap \cube{a}} v(s)}{a^n} = 1.
\end{equation}

Since, by assumption, the limit of the density $\gamma(\S)$ exists, we get
\begin{align}
  \gamma(\S) &= \lim_{a\ra\infty} \frac{M(\S,a)}{a^n}\nonumber\\
   &= \lim_{a\ra\infty} \frac{M(\S,a)}{\sum_{s\in \S\cap \cube{a}} v(s)}\nonumber \\
   &= \frac{1}{\lim_{a\ra\infty}
   \EE_a[v(S)]}
   \nonumber\\
   &= \frac{1}{v(\S)}.
\end{align}
As a corollary, we get that for regular IC's the average volume $v(\S)$ exists in the limit (and not only in the $\limsup$ sense).
\end{proof}

\section{Convexity of the equivalent sphere bound}\label{app:SPBconvexity}

\begin{proof}[Proof of Lemma \ref{lem:SPBconvexity}]

Suppose $v$ is the volume of the Voronoi cell. The radius of the equivalent sphere is given by $r=v^{1/n}V_n^{-1/n}$. The equivalent sphere bound is given by
\begin{align}
    \SPB(v) &= \Pr\left\{\sum_{i=1}^n Z_i^2 \geq r^2\right\}\nonumber\\
    &= \Pr\left\{\sum_{i=1}^n (Z_i/\sigma)^2 \geq \frac{v^{2/n}}{V_n^{2/n}\sigma^2}\right\}\nonumber\\
    &\triangleq \Pr\left\{\sum_{i=1}^n (Z_i/\sigma)^2 \geq \left(C_1 \cdot v\right)^{2/n}
    \right\},
\end{align}
where $C_1$ is a constant.

We note that $\sum_{i=1}^n (Z_i/\sigma)^2$ is a sum of $n$ i.i.d. squared Gaussian RV's with zero mean and unit variance, which is exactly a $\chi^2$ distribution with $n$ degrees of freedom. We therefore get:
\begin{align}
    \SPB(v) &= \frac{1}{\Gamma(n/2)2^{n/2}}\int_{\left(C_1\cdot v\right)^{2/n}}^\infty x^{n/2-1}e^{-x/2}dx\nonumber\\
     &= C_2\int_{\left(C_1\cdot v\right)^{2/n}}^\infty x^{n/2-1}e^{-x/2}dx\nonumber\\
     &\triangleq C_2 F(C_1\cdot v),
\end{align}
where $C_2$ is a constant and $F(t) \triangleq \int_{t^{2/n}}^\infty x^{n/2-1}e^{-x/2}dx$. It can be verified by straightforward differentiation that
\begin{equation}
    \frac{\partial^2}{\partial t^2} F(t) =
    \frac{\partial^2}{\partial t^2} \int_{t^{2/n}}^\infty x^{n/2-1}e^{-x/2}dx = \frac{2}{n^2}t^{\frac{2}{n}-1}\exp\left(-\frac{1}{2}t^{2/n}\right),
\end{equation}
which is strictly positive for all $t>0$. Therefore $F(t)$ is convex, and the equivalent sphere bound $\SPB(v) = C_2 F(C_1\cdot v)$ is a convex function of $v$.\end{proof}

\section{Proof of the Regularization Lemma}\label{app:regularization}
\begin{proof}[Proof of Lemma \ref{lem:regularization}]
Our first step will be to find a hypercube $\cube{a_*}$, so that the density of the points in $\S\cap \cube{a_*}$ and the error probability of codewords in $\S\cap \cube{a_*}$ are close enough to $\gamma$ and $\eps$, respectively. We then replicate this cube in order to get a regular IC with the desired properties. The idea is similar to that used in \cite[Appendix C]{Poltyrev94_CodingWithoutRestrictions}, where it was used for expurgation purposes. As discussed in \ref{ssec:SphereBoundIC} above, we wish to avoid expurgation process that weakens the bound for finite dimensional IC's.

By the definition of $P_e(\S)$ and $\gamma(\S)$,
\begin{equation}
    \gamma(\S) = \limsup_{a \ra\infty} \frac{M(\S,a)}{a^n} = \lim_{a \ra\infty} \sup_{b>a}\frac{M(\S,b)}{b^n}
\end{equation}
\begin{equation}
    \eps = P_e(\S) = \limsup_{a \ra\infty} \frac{1}{M(\S,a)} \sum_{s\in \S\cap \cube{a}}P_e(s) = \lim_{a \ra\infty} \sup_{b>a}\frac{1}{M(\S,b)} \sum_{s\in \S\cap \cube{b}}P_e(s).
\end{equation}

Let $\tau_\gamma = \sqrt{1+\xi}$ and $\tau_{\eps}=1+\frac{\xi}{2}$.

By definition of the limit, there must exist $a_0$ large enough s.t. for every $a>a_0$, both hold:
\begin{equation}\label{eqn:sup_gamma}
    \sup_{b>a}\frac{M(\S,b)}{b^n} > \gamma \cdot \frac{1}{\tau_\gamma},
\end{equation}
and
\begin{equation}\label{eqn:sup_lambda}
    \sup_{b>a}\frac{1}{M(\S,b)} \sum_{s\in \S\cap \cube{b}}P_e(s) <\eps \cdot \tau_\eps.
\end{equation}

Define $\Delta$ s.t. $Q(\Delta/\sigma) = \eps \cdot \frac{\xi}{2}$, and define $a_\Delta$ as the solution to
\begin{equation}
    \left(\frac{a_\Delta+2\Delta}{a_\Delta}\right)^n = \sqrt{1+\xi}.
\end{equation}
Let $a_{\max} = \max\{a_0,a_\Delta\}$.
According to \eqref{eqn:sup_gamma}, there must exist $a_*>a_{\max}$ s.t.
\begin{equation}\label{eqn:bound_card_G}
    \frac{M(\S,a_*)}{a_*^n} >\gamma \cdot \frac{1}{\tau_\gamma}.
\end{equation}
By \eqref{eqn:sup_lambda} we get that
\begin{equation}
   \frac{1}{M(\S,a_*)} \sum_{s\in \S\cap \cube{a_*}}P_e(s) \leq \sup_{b>a_{\max}}\frac{1}{M(\S,b)} \sum_{s\in S\cap \cube{b}}P_e(s) <\eps \cdot \tau_\eps.
\end{equation}

Now consider the \emph{finite} constellation $G = \S\cap \cube{a_*}$. For $s\in G$, denote by $P_e^G (s)$ the error probability of $s$ when $G$ is used for transmission with Gaussian noise. Since $G\subset \S$, clearly $P_e^G(s) \leq P_e(s)$ for all $s\in G$. The average error probability for $G$ is bounded by
\begin{equation}\label{eqn:UB_lambdaG}
    P_e(G) \triangleq \frac{1}{|G|}\sum_{s\in G}P_e^G(s) \leq \frac{1}{|G|}\sum_{s\in G}P_e(s) \leq \eps \cdot \tau_\eps.
\end{equation}

We now turn to the second part - constructing an IC from the code $G$.

Define the IC $\S'$ as an infinite replication of $G$ with spacing of $2\Delta$ between every two copies as follows:
\begin{equation}\label{eqn:S'construction}
    \S' \triangleq \left\{ s+I \cdot (a_* + 2\Delta) : s\in G, I \in \Integers_n\right\},
\end{equation}
where $\Integers_n$ denotes the integer lattice of dimension $n$.

Now consider the error probability of a point $s\in \S'$ denoted by $P_e^{\S'}(s)$. This error probability equals the probability of decoding by mistake to another codeword from the same copy of $G$ or to a codeword in another copy. By the union bound, we get that
\begin{equation}
    P_e^{\S'}(s) \leq P_e^{G}(s) + Q(\Delta/\sigma).
\end{equation}
The right term follows from the fact that in order to make a mistake to a codeword in a different copy of $G$, the noise must have an amplitude of at least $\Delta$. The average error probability over $\S'$ is bounded by
\begin{align}
    P_e(\S') &\leq P_e(G)+Q(\Delta/\sigma) \leq \eps \cdot \tau_\eps + Q(\Delta/\sigma) = \eps(1+\xi)
\end{align}
as required, where the last equality follows from the definition of $\tau_\eps$ and $\Delta$.

The density of points in the new IC enclosed within a cube of edge size $a_*+2\Delta$ is given by $|G|(a_*+2\Delta)^{-n}$. Define $\tilde a_k = (a_*+2\Delta)(2k-1) $ for any integer $k$. Note that for any $k>0$, $\cube{\tilde a_k}$ contains exactly $(2k-1)^n$ copies of $G$, and therefore
\begin{equation}
    \frac{M(\S',\tilde a_k)}{\tilde a_k^n} = \frac{|G|(2k-1)^n}{\tilde a_k^n} = \frac{|G|}{(a_*+2\Delta)^n}.
\end{equation}

For any $a>0$, let $k^*$ be the minimal integer $k$ s.t. $\tilde a_k \geq a$. Clearly,
\begin{equation}
    \tilde a_{k^*-1} = \tilde a_{k^*}-(a_*+2\Delta) <  a \leq \tilde a_{k^*}.
\end{equation}
Therefore
\begin{equation}
\frac{M(\S',\tilde a_{k^*-1})}{a^n}<    \frac{M(\S',a)}{a^n} \leq \frac{M(\S',\tilde a_{k^*})}{a^n},
\end{equation}
and
\begin{equation}\label{eqn:gammaS'beforeLimit}
\frac{|G|}{(a_*+2\Delta)^n}\frac{\tilde a_{k^*-1}^n}{a^n}
<\frac{M(\S',a)}{a^n} \leq
\frac{|G|}{(a_*+2\Delta)^n}\frac{\tilde a_{k^*}^n}{a^n}.
\end{equation}
By taking the limit $a \ra \infty$ of \eqref{eqn:gammaS'beforeLimit}, we get that the limit exists and is given by
\begin{equation}\label{eqn:gammaS'limit}
    \gamma(\S') = \lim_{a\ra\infty} \frac{M(\S',a)}{a^n} = \frac{|G|}{(a_*+2\Delta)^n}.
\end{equation}


It follows that
\begin{align}
  \gamma(\S') &= \frac{|G|}{(a_*+2\Delta)^n} \nonumber\\
  &= \frac{|G|}{a_*^n}\frac{a_*^n}{(a_*+2\Delta)^n} \nonumber\\
   &\overset{(a)}{\geq} \gamma(\S) \frac{1}{\tau_\gamma}\left(\frac{a_*}{a_*+2\Delta}\right)^n  \nonumber\\
   &\overset{(b)}{\geq} \gamma(\S) \frac{1}{1+\xi}.
\end{align}
where $(a)$ follows from \eqref{eqn:bound_card_G} and $(b)$ follows from the definitions of $\tau_\gamma,a_\Delta$ and from the fact that $a_\Delta \leq a_*$.

It remains to show that the resulting IC $\S'$ is regular, i.e. that all the Voronoi cells can be bounded in a sphere with some fixed radius $r_0$. The fact that the average density is achieved in the limit (ant not only in the $\limsup$ sense) was already established in \eqref{eqn:gammaS'limit}.

Let $s$ be an arbitrary point in $\S'$. By construction (see \eqref{eqn:S'construction}), the points
\begin{equation*}
    \{s \pm (a_*+2\Delta) e_i|i=1,...,n\}
\end{equation*}
 are also in $\S'$ (where $e_i$ denotes the vector of $1$ in the $i$-th coordinate, and the rest are zeros). We therefore conclude that the Voronoi cell $W(s)$ is contained in the hypercube $s + \cube{a_*+2\Delta}$, and is clearly bounded within a sphere of radius $r_0 \triangleq \sqrt n (a_*+2\Delta)$.
\end{proof}

\section{Proof of Integral Bounding Lemmas}\label{app:IntegralBounds}

\begin{proof}[Proof of Lemma~\ref{lem:SpherePackingIntegralBounds}]
Define
\begin{align*}
F(\rho)\triangleq \log\left[ \rho^{\frac{n}{2}-1}e^{-n\rho/2}\right] = \left(\frac{n}{2}-1\right)\log \rho - \frac{n\rho}{2},
\end{align*}
so that $\rho^{\frac{n}{2}-1}e^{-n\rho/2} = \exp[F(\rho)]$. Let $F_1(\rho)$ and $F_2(\rho)$ be the first and second order Taylor series of $F(\rho)$ around $\rho=x$, respectively, i.e.
\begin{align}
    F_1(\rho) = \alpha + \beta(\rho-x);\quad F_2(\rho) = \alpha + \beta(\rho-x)-\tau^2(\rho-x)^2
\end{align}
where
\begin{align*}
 \alpha &\triangleq \left(\frac{n}{2}-1\right)\log x - \frac{n x}{2};\\
 \beta &\triangleq \frac{\frac{n}{2}-1}{x} - \frac{n}{2};\\
 \tau &\triangleq \sqrt{\frac{\frac{n}{2}-1}{2x^2}}.
\end{align*}
We note that for any $\xi>0$,
\begin{equation}\label{eqn:logBounds}
    \xi  - \frac{\xi^2}{2}\leq \log (1+\xi) \leq \xi.
\end{equation}
It follows that for all $\rho>x$,
\begin{equation}
    F_2(\rho) \leq F(\rho) \leq F_1(\rho),
\end{equation}
and the integral is bounded from above and below by
\begin{equation}
    \int_x^{\infty}e^{F_2(\rho)}d\rho \leq  \int_x^{\infty} \rho^{\frac{n}{2}-1}e^{-n\rho/2}d\rho \leq  \int_x^{\infty}e^{F_1(\rho)}d\rho.
\end{equation}

To prove the upper bound \eqref{eqn:SpherePackingIntegralUB} we continue with
\begin{align*}
    \int_x^{\infty}e^{F(\rho)}d\rho
    &\leq \int_x^{\infty}e^{F_1(\rho)}d\rho\\
    &= e^\alpha \int_x^{\infty}\exp\left[\beta(\rho-x)\right]d\rho\\
    &= \frac{e^\alpha}{-\beta},
\end{align*}
where the last equality follows from the assumption $x>1-\frac{2}{n}$. Plugging the values for $\alpha$ and $\beta$ yields \eqref{eqn:SpherePackingIntegralUB}.

To prove the lower bound \eqref{eqn:SpherePackingIntegralUB} we write
\begin{align*}
    \int_x^{\infty}e^{F(\rho)}d\rho
    &\geq \int_x^{\infty}e^{F_2(\rho)}d\rho\\
    &= \int_x^{\infty}\exp\left[\alpha + \beta(\rho-x) -\tau^2(\rho-x)^2\right]d\rho\\
    &= \int_0^{\infty}\exp\left[\alpha + \beta\rho -\tau^2\rho^2\right]d\rho\\
    &= \exp\left(\alpha +\frac{\beta^2}{4\tau^2}\right)\int_0^{\infty}\exp\left[-\left(\tau\rho - \frac{\beta}{2\tau}\right)^2\right]d\rho\\
    &= \exp\left(\alpha +\frac{\beta^2}{4\tau^2}\right)
    \frac{\sqrt{\pi}}{\tau}
    \int_0^{\infty}
    \frac{\tau}{\sqrt{\pi}}
    \exp\left[-\tau^2\left(\rho - \frac{\beta}{2\tau^2}\right)^2\right]d\rho\\
    &= \exp\left(\alpha +\frac{\beta^2}{4\tau^2}\right)
    \frac{\sqrt{\pi}}{\tau}
    Q\left(\frac{-\beta}{\tau\sqrt{2}}\right).
\end{align*}
Plugging back the values for $\alpha, \beta$ and $\tau$ leads to \eqref{eqn:SpherePackingIntegralLB1}.

We continue with a well known lower bound for the $Q$ function:
\begin{equation}
    Q(z) \geq \frac{1}{\sqrt{2\pi}z}e^{-z^2/2}\left(\frac{1}{1+z^{-2}}\right) \quad \forall z >0.\label{eqn:QfuncLowerBound}
\end{equation}
Recalling the definition of $\Upsilon$, we write
\begin{align*}
    \int_x^{\infty} \rho^{\frac{n}{2}-1}e^{-n\rho/2}d\rho
    &\geq
    2x^{\frac{n}{2}}e^{ - \frac{n x}{2}}
    \exp\left[\Upsilon^2/2\right]
    \sqrt{\frac{\pi}{n-2}}
    Q\left(\Upsilon\right)\\
    &\geq
    \frac{2x^{\frac{n}{2}}e^{ - \frac{n x}{2}}}{n(x-1+\frac{2}{n})}
    \left(\frac{1}{1+\Upsilon^{-2}}\right),
\end{align*}
to arrive at \eqref{eqn:SpherePackingIntegralLB2}. Eq. \eqref{eqn:SpherePackingIntegralLB3} follows immediately since $1-\xi \leq \frac{1}{1+\xi}$ for all $\xi \in \Reals$.
\end{proof}

\bigskip

\begin{proof}[Proof of Lemma~\ref{lem:MaxLikeIntegralBounds}]
We rewrite the integrand as $e^{G(\rho)}$ where $G(\rho) \triangleq -n \rho/2 + (n-1) \log \rho$. Since $G(\rho)$ is
concave, it is upper bounded its first order Taylor approximation at any point. We choose the tangent at $\rho=x$. We denote by $G_1(\rho)$ the first order Taylor approximation at that point, and get
\begin{equation}
    G(\rho) \leq G_1(\rho) \triangleq G(x) + G'(x) (\rho-x),
\end{equation}
where $G'(\rho) = \frac{\partial G(\rho)}{\partial \rho} = -\frac{n}{2} + \frac{n-1}{\rho}$. It follows that
\begin{align}
    G_1(\rho)
     &= (n-1) (\log x-1)  +\left(-\frac{n}{2} + \frac{n-1}{x}\right)\rho. \nonumber
\end{align}

Since $G(\rho) \leq G_1(\rho)$ for all $\rho$, we have
\begin{align}
     \int_0^x e^{-n \rho/2}  \rho^{n-1} d\rho &=  \int_0^x e^{G(\rho)} d\rho\nonumber\\
     &\leq \int_0^x e^{G_1(\rho)} d\rho\nonumber\\
     &= x^{n-1} e^{-(n-1)}\int_0^x \exp\left[\left(-\frac{n}{2} + \frac{n-1}{x}\right)\rho\right] d\rho\nonumber\\
     &= \frac{2x^n}{n\left(2-x-\tfrac{2}{n}\right)}\left(e^{-\tfrac{n}{2}x} - e^{-(n-1)}\right),
\end{align}
which gives \eqref{eqn:MLIntegralUpperBound}.

Some extra effort is required in order to prove the lower bound \eqref{eqn:MLIntegralLowerBoundQfunc}. We first switch variables and get
\begin{align}\label{eqn:switchVars}
    \int_0^{x} e^{-n \rho/2}  \rho^{n-1} d\rho
    &=\int_{1/x}^{\infty} e^{-\frac{n}{2u}} u^{-n-1} du\\
    &=\int_{1/x}^{\infty} \exp\left(-\frac{n}{2u} -(n+1)\log u\right) du.
\end{align}
We lower bound the exponent as follows:
\begin{align*}
    -\frac{n}{2u} -(n+1)\log u
    &= -\frac{n}{2u} +(n+1)(\log x - \log (u x))\\
    &= -\frac{n}{2u} +(n+1)(\log x - \log (1 + u x-1))\\
    &\overset{(a)}{\geq} -\frac{n}{2u} +(n+1)(\log x - (u x-1))\\
    &\overset{(b)}{\geq} -\frac{nx}{2}(u^2x^2-3ux+3) +(n+1)(\log x - (u x-1))\\
    &= -\frac{nx}{2}[x^2(u-1/x)^2 -x(u-1/x)+1] +(n+1)(\log x - x(u -1/x)).
\end{align*}
$(a)$ follows from the fact that $\log(1+\xi) \leq \xi$ for all $\xi \in \Reals$. $(b)$ follows from the fact that $\frac{1}{\xi} \leq \xi^2 -3\xi+3$ for all $\xi>1$ (which follows from the fact that $(\xi-1)^3\geq 0$).

So far the integral $\int_0^{x} e^{-n \rho/2}  \rho^{n-1} d\rho$ is lower bounded by \begin{equation}
    \int_{1/x}^\infty \exp(\alpha + \beta (u-1/x) - \tau^2 (u-1/x)^2)du,
\end{equation}
where
\begin{align*}
    \alpha &\triangleq (n+1)\log x-\frac{nx}{2};\\
    \beta &\triangleq  \frac{nx^2}{2}-(n+1)x;\\
    \tau &\triangleq \sqrt{\frac{nx^3}{2}}.
\end{align*}

Following the same steps as in the proof of \eqref{eqn:SphereBoundLowerBoundQfunc} in Lemma \ref{lem:SpherePackingIntegralBounds} gives
\begin{equation}
    \int_{1/x}^\infty \exp(\alpha + \beta (u-1/x) - \tau^2 (u-1/x)^2)
    \geq
    \exp\left(\alpha +\frac{\beta^2}{4\tau^2}\right)
    \frac{\sqrt{\pi}}{\tau}
    Q\left(\frac{-\beta}{\tau\sqrt{2}}\right).
\end{equation}
Plugging the values for $\alpha, \beta$ and $\tau$ yields \eqref{eqn:MLIntegralLowerBoundQfunc}. \eqref{eqn:MLIntegralLowerBoundAnalytic} follows by applying the lower bound \eqref{eqn:QfuncLowerBound} on the $Q$ function.
\end{proof}

\section{Approximating $V_n$}\label{app:Vn}
Here we derive the required approximations for $V_n$, used in Sections~\ref{sec:Properties} and \ref{sec:NormalApprox}.

We first derive \eqref{eqn:VnApproxMult}.

The volume of a hypersphere of unit radius $V_n$ is given by $\frac{\pi^{n/2}}{\Gamma(n/2+1)}$ (see, e.g. \cite[p. 9]{ConwaySloane1993}).

We use the Stirling approximation for the Gamma function for $z\in \Reals$ (see, e.g. \cite[Sec. 5.11]{DLMF}).
\begin{align}
    \Gamma(z+1) = z\Gamma(z) &= z\sqrt \frac{2 \pi}{z} \left(\frac{z}{e}\right)^z \left(1+O\left(\frac{1}{z}\right)\right)\nonumber\\
      &= \sqrt {2 \pi z} \left(\frac{z}{e}\right)^z \left(1+O\left(\frac{1}{z}\right)\right).
\end{align}
$V_n$ becomes
\begin{align}
    V_n &= \frac{\pi^{n/2}}{\Gamma(n/2+1)} = \left(\frac{2\pi e}{n}\right)^{n/2}\frac{1}{\sqrt{n\pi}}\left(1+O\left(\frac{1}{n}\right)\right).
\end{align}

Eq. \eqref{eqn:VnApprox} follows by taking $\frac{1}{n}\log(\cdot)$ and the Taylor expansion of $\log(1+x)$ around $x=0$:

\begin{align}
    \frac{1}{n}\log V_n &= \frac{1}{2}\log \pi - \frac{1}{2n}\log{n} - \frac{1}{2} \log \frac{n}{2e} +  O\left(\frac{1}{n}\right)\nonumber\\
    &=\frac{1}{2}\log \frac{2\pi e}{n} - \frac{1}{2n}\log{n} +  O\left(\frac{1}{n}\right).
\end{align}

\section{Evaluating the ML Bound at $\NLD_{cr}$}\label{app:delta_cr}
\begin{proof}[Proof of Theorem \ref{thm:MaxLikeAnalysisAtDeltaCr}]
We start as in the proof of Theorem \ref{thm:AchievabilityMaxLike} to have
\begin{equation}
    e^{n\NLD_{cr}} V_n \int_0^{r^*} f_R(r) r^n dr
    = \frac{n}{2} e^{n\NLD} V_n^2\sigma^n (2\pi)^{-\frac{n}{2}}n^n \int_0^{\rho^*} e^{-n\rho/2} \rho^{n-1} d\rho.
\end{equation}

We evaluate the integral in two parts:
\begin{equation}\label{eqn:TwoPartIntegral}
    \int_0^{\rho^*} e^{-n\rho/2} \rho^{n-1} d\rho = \int_0^{2} e^{-n\rho/2} \rho^{n-1} d\rho + \int_2^{\rho^*} e^{-n\rho/2} \rho^{n-1} d\rho.
\end{equation}
The term $\int_0^{2} e^{-n\rho/2} \rho^{n-1} d\rho$ can be evaluated by the Laplace method, as in the proof of Lemma \ref{lem:MaxLikeIntegralLaplace}. The difference is that the exponent is minimized with zero first derivative at the boundary point $\rho = 2$, which causes the integral to be evaluated to half the value of the integral in Lemma \ref{lem:MaxLikeIntegralLaplace}, i.e.
\begin{equation}\label{eqn:IntegralAtDeltaCrPart1}
    \int_0^{2} e^{-n\rho/2} \rho^{n-1} d\rho =
    \sqrt{\frac{\pi}{2n}}e^{-n}2^n \left(1+O\left(\tfrac{1}{n}\right)\right).
\end{equation}
The second term in \eqref{eqn:TwoPartIntegral} requires some extra effort.
We first upper bound it as follows:
\begin{align*}
    \int_2^{\rho^*} e^{-n\rho/2} \rho^{n-1} d\rho
    &= \int_2^{\rho^*} \frac{1}{\rho}e^{-n\rho/2} \rho^{n} d\rho\\
    &\overset{(a)}{\leq} \int_2^{\rho^*} \frac{1}{2}e^{-n\rho/2} \rho^{n} d\rho\\
    &\overset{(b)}{\leq} \int_2^{\rho^*} \frac{1}{2}e^{-n} 2^{n} d\rho\\
    &=\frac{1}{2}e^{-n} 2^{n}(\rho^*-2),
\end{align*}
where $(a)$ follows since in the integration interval, $\rho > 2$. $(b)$ follows since $e^{-n\rho/2} \rho^{n}$ is maximized at $\rho=2$. With \eqref{eqn:rhoStarApprox2} we have
\begin{align*}
    \int_2^{\rho^*} e^{-n\rho/2} \rho^{n-1} d\rho
    & \leq \frac{1}{2}e^{-n} 2^{n}(\rho^*-2) \\
    & = \frac{1}{2}e^{-n} 2^{n}\left(\tfrac{2}{n}\log(n \pi) + O\left(\tfrac{\log^2n}{n^2}\right)\right) \\
    & = e^{-n} 2^{n}\frac{\log(n \pi)}{n}\left(1 + O\left(\tfrac{\log n}{n}\right)\right).
\end{align*}

The integral can also be lower bounded as follows:
\begin{align*}
    \int_2^{\rho^*} e^{-n\rho/2} \rho^{n-1} d\rho
    & \overset{(a)}{\geq} \frac{1}{\rho^*}\int_2^{\rho^*} e^{-n\rho/2} \rho^{n} d\rho\\
    & \overset{(b)}{\geq} \frac{1}{\rho^*}\int_2^{\rho^*} e^{n\log\tfrac{2}{e} -\tfrac{n}{8}(\rho-2)^2} d\rho\\
    &= \frac{1}{\rho^*} 2^ne^{-n}\int_2^{\rho^*} e^{-\tfrac{n}{8}(\rho-2)^2} d\rho\\
    &= \frac{1}{\rho^*} 2^ne^{-n}\int_0^{\rho^*-2} e^{-\tfrac{n}{8}\rho^2} d\rho\\
    &= \frac{1}{\rho^*} 2^ne^{-n} \sqrt{\frac{8\pi}{n}}\left(Q(0)-Q\left(\tfrac{\rho^*-2}{\sqrt{4/n}}\right)\right)\\
    &\overset{(c)}{=} \frac{1}{\rho^*} 2^ne^{-n} \sqrt{\frac{8\pi}{n}}\left(\tfrac{1}{2}-\left(\tfrac{1}{2} - \tfrac{1}{\sqrt{2\pi}}\tfrac{\rho^*-2}{\sqrt{4/n}}+O\left(\tfrac{\log^2n}{n}\right)\right)\right)\\
    &= \frac{1}{\rho^*} 2^ne^{-n}
    \left(\rho^*-2 +O\left(\tfrac{\log^2n}{n\sqrt n}\right)\right)\\
    &= 2^ne^{-n}
    \frac{\log(n\pi)}{n}\left(1 +O\left(\tfrac{\log n}{\sqrt n}\right)\right).
\end{align*}
$(a)$ follows since $\rho\leq \rho^*$. $(b)$ follows from the fact that $n\rho/2 +n\log\rho \geq n\log\tfrac{2}{e} -\tfrac{n}{8}(\rho-2)^2$ for all $\rho>2$ (which follows from \eqref{eqn:logBounds}). $(c)$ follows from the Taylor expansion $Q(\xi) = \tfrac{1}{2} - \tfrac{\xi}{\sqrt{2\pi}} + O(\xi^2)$ and since $\rho^*-2 = O(\tfrac{\log n}{n})$.

In total we get
\begin{equation*}
    \int_2^{\rho^*} e^{-n\rho/2} \rho^{n-1} d\rho = 2^ne^{-n}
    \frac{\log(n\pi)}{n}\left(1 +O\left(\tfrac{\log n}{\sqrt n}\right)\right).
\end{equation*}
Combined with \eqref{eqn:IntegralAtDeltaCrPart1} we have
\begin{align*}
    \int_0^{\rho^*} e^{-n\rho/2} \rho^{n-1} d\rho
    &= \int_0^{2} e^{-n\rho/2} \rho^{n-1} d\rho + \int_2^{\rho^*} e^{-n\rho/2} \rho^{n-1} d\rho \\
    &= 2^ne^{-n}\left[\sqrt{\frac{\pi}{2n}}+\frac{\log(n\pi)}{n}\right]
    \left(1+O\left(\tfrac{\log^2n}{n}\right)\right).
\end{align*}
The approximation \eqref{eqn:VnApproxMult} for $V_n$ finally yields
\begin{equation}
    e^{n\NLD_{cr}} V_n \int_0^{r^*} f_R(r) r^n dr =  e^{-n \E_{r}(\NLD_{cr})}\frac{1}{2\pi}
    \left[\sqrt{\frac{\pi}{2n}}+\frac{\log(n \pi)}{n}\right]
    \left(1+O\left(\tfrac{\log^2n}{n}\right)\right),
\end{equation}
and the proof is completed by adding the asymptotic form \eqref{eqn:SphereBoundPreciseAsymptotics} of the sphere bound at $\NLD = \NLD_{cr}$.
\end{proof}

\section{ }\label{app:ProofAsymptoticsTypicality}
\begin{proof}[Proof of Theorem \ref{thm:TypicalityPreciseAsymptotics}]
The typicality bound is given by
\begin{equation}
    P_e(n,\NLD) \leq  \gamma V_n r^n + \Pr\left\{ \|\bZ\| > r \right\},
\end{equation}
where $r = \sigma\sqrt{n (1 + 2\NLD^* -2\NLD)}$. The rightmost term can be written as (see the proof of Theorem \ref{thm:SphereBoundAnalysis}):
\begin{equation}
    \Pr\left\{ \|\bZ\| > r \right\} = \frac{(n/2)^{n/2}}{\Gamma\left[\frac{n}{2}\right]} \int_{1+2(\NLD^*-\NLD)}^\infty \rho^{\frac{n}{2}-1} e^{-\frac{n}{2}\rho}d\rho.
\end{equation}
The above integral can be bounded according to Lemma \ref{lem:SpherePackingIntegralBounds} by
\begin{equation}
    \frac{2x^{\frac{n}{2}}e^{ - \frac{n x}{2}}}{n(x-1+\frac{2}{n})}\frac{1}{1+\Upsilon^{-2}}
    \leq \int_x^\infty \rho^{\frac{n}{2}-1} e^{-\frac{n}{2}\rho}d\rho
    \leq \frac{2x^{\frac{n}{2}}e^{ - \frac{n x}{2}}}{n(x-1+\frac{2}{n})}
\end{equation}
where $x=1 + 2(\NLD^* -\NLD)$ and $\Upsilon \triangleq \frac{n(x-1+\frac{2}{n})}{\sqrt{2(n-2)}} = \Theta(\sqrt n)$.
Eq. \eqref{eqn:TypicalityPreciseAsymptotics} then follows using the approximation \eqref{eqn:VnApproxMult} for $V_n$.
\end{proof}

\section{ }\label{app:ProofAsymptoticsMLPoltyrev}
\begin{proof}[Proof of Theorem~\ref{thm:AsymptoticsMLPoltyrev}]
We first prove \eqref{eqn:PoltyrevBoundAnalysisAboveDeltaCrTight} (and \eqref{eqn:PoltyrevBoundAnalysisAboveDeltaCrLoose} follows immediately).
Let $\tilde\rho = e^{2(\NLD^*-\NLD)}$.
The ML bounds for $r=\sqrt n \sigma e^{\NLD^*-\NLD}$ can be written as (see \eqref{eqn:SPBintegral} and \eqref{eqn:preMLIntegralBounds}):
\begin{align*}
    P_e^{MLB} &= \frac{n}{2}e^{n(\NLD^*+\NLD)}V_n^2e^{n/2}\sigma^{2n}n^n\int_0^{\tilde\rho} e^{-n\rho/2}\rho^{n-1}d\rho + \frac{2^{-n/2}n^{n/2}}{\Gamma(n/2)}\int_{\tilde\rho}^\infty e^{-n\rho/2}\rho^{n/2-1}d\rho.
\end{align*}

Using Lemma~\ref{lem:SpherePackingIntegralBounds} we get that
\begin{equation}\label{eqn:SphereBoundIntegralPoltyrev}
    \int_{\tilde\rho}^\infty e^{-n\rho/2}\rho^{n/2-1}d\rho =
    \frac{2\tilde\rho^{n/2}e^{-n\tilde\rho/2}}{n(\tilde\rho-1)}
    \left(1+O\left(\frac{1}{n}\right)\right),
\end{equation}
and using Lemma~\ref{lem:MaxLikeIntegralBounds} we get that
\begin{equation}
    \int_0^{\tilde\rho} e^{-n\rho/2}\rho^{n-1}d\rho =
    \frac{2\tilde\rho^{n/2}e^{-n\tilde\rho/2}}{n(2-\tilde\rho)}
    \left(1+O\left(\frac{1}{n}\right)\right).
\end{equation}
\eqref{eqn:PoltyrevBoundAnalysisAboveDeltaCrTight} then follows by simple algebra.

\bigskip

To show \eqref{eqn:PoltyrevBoundAnalysisBelowDeltaCr}, repeat the proof of Theorem~\ref{thm:MaxLikeAnalysisBelowDeltaCr} with $\tilde\rho$ instead of $\rho^*$.

\bigskip

To show \eqref{eqn:PoltyrevBoundAnalysisAtDeltaCr}, repeat the proof of Theorem~\ref{thm:MaxLikeAnalysisAtDeltaCr} with $\tilde\rho$ instead of $\rho^*$. Here $\tilde\rho = 2$, therefore there is no need to split the integral into two parts as in \eqref{eqn:TwoPartIntegral}. Therefore the term $e^{n\NLD} V_n \int_0^r f_R(\tilde r) \tilde r^n d\tilde r$ contributes the $\frac{1}{\sqrt 8}$ part of the expression. The contribution of the sphere bound term (the term $\Pr\{\|\bZ\|>r\}$) is approximated as in \eqref{eqn:SphereBoundIntegralPoltyrev}. Note that the result here is different than the $\reff$ case, where the contributions of the two terms in the bound are of different order (see Eq. \eqref{eqn:MaxLikeAnalysisAtDeltaCr}).
\end{proof}

\section{Central Limit Theorem and the Berry-Esseen Theorem}\label{app:CLT}

By the central limit theorem (CLT), A normalized sum of $n$ independent random variables converges (in distribution) to a Gaussian random variable. The Berry-Esseen theorem shows the speed of the convergence (see \cite[Ch. XVI.5]{Feller_1971}). We write here the version for i.i.d. random variables, which is sufficient for this paper.

\begin{theorem}[Berry-Esseen for i.i.d. RV's \cite{Feller_1971}]

Let $\{Y_i\}_{i=1}^n$ be i.i.d. random variables with zero mean and unit variance. Let $T \triangleq \E[|Y_i|^3]$ and assume it is finite. Let $S_n \triangleq \frac{1}{\sqrt{n}}\sum_{i=1}^{n}Y_i$ be the normalized sum. Note that $S_n$ also has zero mean and unit variance.

Then for all $\alpha \in \Reals$ and for all $n\in \Naturals$,
\begin{equation}
    |\Pr\{S_n \geq \alpha\} - Q(\alpha)| \leq \frac{6T}{\sqrt{n}}.
\end{equation}
\end{theorem}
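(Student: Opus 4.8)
The statement is the classical Berry--Esseen theorem, so my plan is to follow the standard Fourier-analytic argument built on Esseen's smoothing inequality, keeping track of constants so as to land at $6T/\sqrt n$. Writing $F_n$ for the cumulative distribution function of $S_n$ and $\Phi$ for that of a standard Gaussian, it suffices to bound $\sup_x |F_n(x)-\Phi(x)|$: indeed $\Pr\{S_n \geq \alpha\} = 1 - F_n(\alpha^-)$ and $Q(\alpha) = 1-\Phi(\alpha)$, and since $\Phi$ is continuous, $|\Pr\{S_n \geq \alpha\} - Q(\alpha)| = |F_n(\alpha^-)-\Phi(\alpha)| \leq \sup_x |F_n(x)-\Phi(x)|$. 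The first step is to invoke Esseen's smoothing lemma: for any truncation level $U>0$,
\begin{equation}
    \sup_x |F_n(x)-\Phi(x)| \leq \frac{1}{\pi}\int_{-U}^{U} \left|\frac{f_n(t)-e^{-t^2/2}}{t}\right| dt + \frac{24}{\pi U \sqrt{2\pi}},
\end{equation}
where $f_n(t) = \E[e^{itS_n}]$ is the characteristic function of $S_n$ and $1/\sqrt{2\pi}$ is the maximal value of the standard Gaussian density. This reduces everything to estimating $f_n$ against the Gaussian transform $e^{-t^2/2}$.

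The second step estimates $f_n$. Let $\phi$ be the characteristic function of $Y_1$; independence gives $f_n(t) = \phi(t/\sqrt n)^n$. Since $\E[Y_1]=0$, $\E[Y_1^2]=1$ and $\E[|Y_1|^3]=T<\infty$, a third-order Taylor expansion yields
\begin{equation}
    \phi(s) = 1 - \frac{s^2}{2} + \theta(s), \qquad |\theta(s)| \leq \frac{T|s|^3}{6}.
\end{equation}
Substituting $s = t/\sqrt n$ and applying the elementary bound $|z^n - w^n| \leq n\,|z-w|\,\max(|z|,|w|)^{n-1}$ with $w = e^{-t^2/(2n)}$, together with the crude estimate $|\phi(t/\sqrt n)| \leq e^{-t^2/(4n)}$ valid on the relevant range of $t$, I expect to reach a bound of the form
\begin{equation}
    |f_n(t) - e^{-t^2/2}| \leq \frac{c\,T\,|t|^3}{\sqrt n}\, e^{-t^2/4},
\end{equation}
for $|t| \leq U$, with $c$ an explicit absolute constant.

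The final step is to choose the truncation level $U$ proportional to $\sqrt n / T$ so that both terms in the smoothing inequality scale like $T/\sqrt n$: after dividing the third estimate by $|t|$ and integrating $t^2 e^{-t^2/4}$ over the real line, the integral term is bounded by a constant multiple of $T/\sqrt n$, while the boundary term $24/(\pi U \sqrt{2\pi})$ is likewise a constant multiple of $T/\sqrt n$. Adding the two contributions and optimizing the constants gives the claimed $6T/\sqrt n$.

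The main obstacle is not the local Taylor estimate near the origin but controlling $f_n(t)$ over the \emph{entire} truncation interval $[-U,U]$: the expansion is only useful while $t/\sqrt n$ is small, so it must be supplemented by a global argument showing $|\phi(s)|$ stays bounded away from $1$ for $s$ away from the origin, and then one must check that $U = \Theta(\sqrt n / T)$ (which is legitimate since $T \geq (\E[Y_1^2])^{3/2} = 1$) keeps $t/\sqrt n$ inside the regime where the Taylor bound and the inequality $|\phi(s)|\leq e^{-s^2/4}$ hold. Pinning down the precise constant $6$ requires careful bookkeeping of the $24/\pi$ factor in the smoothing lemma against the Taylor-remainder constant; since the application in this paper only needs \emph{some} finite absolute constant, one may alternatively just cite Feller's computation for the exact value.
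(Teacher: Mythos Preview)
The paper does not prove this theorem at all: it is stated in Appendix~\ref{app:CLT} purely as a citation of Feller~\cite{Feller_1971}, with no argument given. So there is nothing to compare against, and your sketch already goes well beyond what the paper does.

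For what it is worth, your outline is the standard Fourier-analytic proof and is essentially correct. The one point worth flagging is the constant: the classical Esseen smoothing lemma plus the Taylor estimate you describe does not naturally produce the constant~$6$ without some care, and indeed the best known constants have been sharpened repeatedly since Feller's book (the value~$6$ is the crude constant Feller obtains in Chapter~XVI). Since the paper only ever uses the result in the form ``$\leq CT/\sqrt n$ for some absolute $C$'' (cf.\ Lemma~\ref{lem:GaussianOutOfSphereBerryEssen} and the converse in Section~\ref{sec:NormalApprox}, where the $6T/\sqrt n$ term is immediately absorbed into an $O(1/\sqrt n)$), your closing remark is exactly right: for the purposes of this paper one should simply cite Feller rather than chase the constant.
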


\fi 
\bibliographystyle{IEEEtran}
\bibliography{Master}

\end{document}